\newcommand{\barjmath}{\bar{\jmath}}
\newtheorem{theorem}{Theorem}
\newtheorem{lemma}{Lemma}
\newtheorem{corol}{Corollary}
\newcommand{\rank}[1]{\mathrm{rank} \left(  #1 \right)  }
\newcommand{\funf}[1][\ba,\bA]{\mathbf{f}\left( #1\right)}
\newcommand{\Exp}{{\mathbb{E}}}
\newcommand{\Expect}[2]{\Exp_{#1}\left\lbrace #2 \right\rbrace}
\newcommand{\braces}[1]{\left\lbrace #1\right\rbrace}
\newcommand{\Dkl}[2]{D_{\textrm{KL}}\left\lbrace #1 : #2\right\rbrace}
\newcommand{\setposi}[1]{\mathcal{Z}_{#1}^+}
\newcommand{\setnnega}[1]{\mathcal{Z}_{#1}}
\newcommand{\tr}[1]{\mathrm{tr}\left\lbrace #1\right\rbrace }
\newcommand{\diag}[1]{\mathrm{diag}\left\lbrace #1\right\rbrace }
\newcommand{\Diag}[1]{\mathrm{Diag}\left\lbrace #1\right\rbrace }
\newcommand{\logtwo}[1]{\log_{2}\left(#1\right)}
\newcommand{\intd}{\mathrm{d}}
\newcommand{\intdx}[1]{\intd #1}
\newcommand{\argmin}[1]{\mathop{\arg\min}\limits_{#1}}
\newcommand{\toinf}[1]{#1 \to\infty}
\newcommand{\liminfty}[1]{\lim_{\toinf{#1}}}
\newcommand{\mtxvec}[1]{\mathrm{vec}\left\lbrace #1\right\rbrace }
\newcommand{\vecc}{\mathrm{vec}}
\newcommand{\equaa}{\mathop{=}^{(\textrm{a})}}
\newcommand{\equab}{\mathop{=}^{(\textrm{b})}}
\newcommand{\equac}{\mathop{=}^{(\textrm{c})}}
\newcommand{\equad}{\mathop{=}^{(\textrm{d})}}
\newcommand{\equae}{\mathop{=}^{(\textrm{e})}}
\newcommand{\equaf}{\mathop{=}^{(\textrm{f})}}
\newcommand{\expb}[1]{\exp \left\lbrace  #1 \right\rbrace}
\newcommand{\ba}{\mathbf{a}}
\newcommand{\bb}{\mathbf{b}}
\newcommand{\bc}{\mathbf{c}}
\newcommand{\bd}{\mathbf{d}}
\newcommand{\be}{\mathbf{e}}
\newcommand{\bg}{\mathbf{g}}
\newcommand{\bh}{\mathbf{h}}
\newcommand{\bp}{\mathbf{p}}
\newcommand{\bq}{\mathbf{q}}
\newcommand{\br}{\mathbf{r}}
\newcommand{\bs}{\mathbf{s}}
\newcommand{\bt}{\mathbf{t}}
\newcommand{\bu}{\mathbf{u}}
\newcommand{\bv}{\mathbf{v}}
\newcommand{\bx}{\mathbf{x}}
\newcommand{\by}{\mathbf{y}}
\newcommand{\bz}{\mathbf{z}}
\newcommand{\bA}{\mathbf{A}}
\newcommand{\bB}{\mathbf{B}}
\newcommand{\bD}{\mathbf{D}}
\newcommand{\bE}{\mathbf{E}}
\newcommand{\bF}{\mathbf{F}}
\newcommand{\bG}{\mathbf{G}}
\newcommand{\bH}{\mathbf{H}}
\newcommand{\bI}{\mathbf{I}}
\newcommand{\bJ}{\mathbf{J}}
\newcommand{\bK}{\mathbf{K}}
\newcommand{\bL}{\mathbf{L}}
\newcommand{\bM}{\mathbf{M}}
\newcommand{\bO}{\mathbf{O}}
\newcommand{\bP}{\mathbf{P}}
\newcommand{\bQ}{\mathbf{Q}}
\newcommand{\bR}{\mathbf{R}}
\newcommand{\bS}{\mathbf{S}}
\newcommand{\bT}{\mathbf{T}}
\newcommand{\bU}{\mathbf{U}}
\newcommand{\bV}{\mathbf{V}}
\newcommand{\bX}{\mathbf{X}}
\newcommand{\bY}{\mathbf{Y}}
\newcommand{\bZ}{\mathbf{Z}}
\newcommand{\bbC}{\mathbb{C}}
\newcommand{\bbR}{\mathbb{R}}
\newcommand{\bzero}{\mathbf{0}}
\newcommand{\bone}{\mathbf{1}}
\newcommand{\bSigma}{{\boldsymbol\Sigma}}
\newcommand{\bLambda}{{\boldsymbol\Lambda}}
\newcommand{\bOmega}{{\boldsymbol\Omega}}
\newcommand{\bomega}{{\boldsymbol\omega}}
\newcommand{\bPi}{{\boldsymbol\Pi}}
\newcommand{\btheta}{{\boldsymbol\theta}}
\newcommand{\bgamma}{{\boldsymbol\gamma}}
\newcommand{\bmu}{{\boldsymbol\mu}}
\newcommand{\bxi}{{\boldsymbol\xi}}
\newcommand{\bvartheta}{{\boldsymbol\vartheta}}
\newcommand{\bbeta}{{\boldsymbol\eta}}
\newcommand{\bbnu}{{\boldsymbol\nu}}
\newcommand{\dnnot}[2]{#1_{\textrm{#2}}}
\newcommand{\normmm}[1]{{\left\vert\kern-0.25ex\left\vert\kern-0.25ex\left\vert #1 
		\right\vert\kern-0.25ex\right\vert\kern-0.25ex\right\vert}}
\newcommand{\norm}[1]{\lVert #1 \rVert}
\definecolor{myback}{RGB}{204,232,207}
\newcommand{\tildebvartheta}{\bvartheta}
\newcommand{\tildebtheta}{\btheta}
\newcommand{\tildebnu}{\bbnu}
\newcommand{\overlinebvartheta}{\bvartheta}
\newcommand{\overlinebtheta}{\btheta}
\newcommand{\overlinebnu}{\bbnu}
\begin{document}

% 标题、作者、致谢等内容输入
\title{Efficient Information Geometry Approach for Massive MIMO-OFDM Channel Estimation}	
\author{
	Jiyuan~Yang,~\IEEEmembership{Member,~IEEE,}
	~Yan~Chen,
	~Mingrui~Fan,
    ~An-An~Lu,~\IEEEmembership{Member,~IEEE,}
    ~Wen~Zhong,
    ~Xiqi~Gao,~\IEEEmembership{Fellow,~IEEE,}
    ~Xiaohu~You,~\IEEEmembership{Fellow,~IEEE,}
    ~Xiang-Gen~Xia,~\IEEEmembership{Fellow,~IEEE,}
    and~Dirk~Slock,~\IEEEmembership{Fellow,~IEEE}

    \thanks{A short version has been accepted in The 2023 IEEE 98th Vehicular Technology Conference (VTC2023-Fall) \cite{10333557}. Compared to the short version, we provide detailed proofs as well as analyses of the main results in this paper.}

}

\maketitle
\begin{abstract}
	We investigate the channel estimation for massive multiple-input multiple-output orthogonal frequency division
	multiplexing (MIMO-OFDM) systems.
	We revisit the information geometry approach (IGA) for massive MIMO-OFDM channel estimation.
	By using the constant magnitude property of the entries of the measurement
	matrix, we find that the second-order natural parameters of the distributions on all the auxiliary manifolds are equivalent to each other, and the first-order natural parameters are asymptotically equivalent to each other at the fixed point.
	Motivated by these results, we simplify the process of IGA
	and propose an efficient IGA (EIGA) for massive MIMO-OFDM channel estimation, which allows efficient implementation with fast Fourier transformation (FFT).
    We then establish a sufficient condition of its convergence and accordingly find a range of the damping factor for the convergence. 
    We show that this range of damping factor is sufficiently wide by using the specific properties of the measurement matrices.
	Further, we prove that at the fixed point, the \textsl{a posteriori} mean obtained by EIGA is asymptotically optimal.
	Simulations confirm that EIGA can
	achieve the optimal performance with low complexity in a limited number of iterations.
\end{abstract}
\begin{IEEEkeywords}
	Massive MIMO, channel estimation, Bayesian inference, information geometry, convergence, damping factor.
\end{IEEEkeywords}

\section{Introduction}
Massive multiple-input multiple-output (MIMO) combined with orthogonal frequency division multiplexing (OFDM) can provide tremendous gains in both capacity and energy efficiency for communication systems. 
As a high-priority option, massive MIMO-OFDM has become a key enabling technique for $5$G systems and will play a critical role in future $6$G systems with the antenna number scale further increased \cite{8626085,10054381}.
To fully reap the various benefits of massive MIMO-OFDM, the accurate acquisition of the channel state information (CSI) is essential.
Pilot-aided channel estimation is the common channel estimation approach for practical systems, where the transmitter periodically sends the pilots, and the receiver estimates the CSI with the received pilot signal.
Given the received pilot signal, the task of channel estimation is to obtain the \textsl{a posteriori} information of the channel parameters.
With the Gaussian prior, the \textsl{a posteriori} distribution of the channel parameters is also Gaussian, of which the \textsl{a posteriori} information is determined by the mean vector and the covariance matrix.
Nonetheless, the large dimension of the channel matrix in massive MIMO-OFDM systems poses a great challenge in the acquisition of the \textsl{a posteriori} mean and covariance. 
The calculation of the optimal estimators, e.g., MMSE estimator, is usually unaffordable due to the large dimension matrix inverse operation.

In the past years, many works have been devoted to the channel estimation for massive MIMO-OFDM systems \cite{8298537,8425578,6940305,8678465,Xiaofengliu}.
Among them, Bayesian inference approaches, e.g., message passing, Bethe free energy minimization and etc, have attracted much attention due to their reliable performance and relatively low computational complexity.
One common solution in Bayesian inference is to calculate the marginals (or the approximation of marginals) of the \textsl{a posteriori} distribution, from which  the \textsl{a posteriori} mean and variance are obtained.
\cite{8425578} proposes an algorithm for downlink channel estimation in massive MIMO systems via turbo orthogonal approximate message passing. 
Combining the variational expectation maximization and generalized approximate message passing, \cite{8678465} proposes a super-resolution channel estimation algorithm for massive MIMO.
In \cite{Xiaofengliu}, a hybrid message passing algorithm is  proposed for massive MIMO-OFDM channel estimation based on Bethe free energy minimization.

Pioneered by Rao\cite{rao}, and later formally developed by Cencov\cite{cencov} and Amari\cite{amari}, information geometry has found a wide range of applications.
For Bayesian inference, Amari et al. \cite{srbpig} reveal the intrinsic geometrical structure of the space defined by the parameters of the \textsl{a posteriori}  probability density function (PDF) by regarding the parametric space as a differentiable manifold with a Riemannian structure.
With the information geometry theory,
the geometric insight of some conventional Bayesian inference approaches, e.g., belief propagation (BP) \cite{BP}, are shown, and some optimization methods, e.g., the concave-convex procedure (CCCP) \cite{CCCP}, are also applied to calculate the marginals of the \textsl{a posteriori} distribution.
In addition to the distinct intuition provided by the geometric perspective, information geometry also provides a unified framework where different sets of PDFs are considered to be endowed with the structure of differential geometry, which allows to construct a distance between two parametrized distributions. And it is shown that this distance is invariant to non-singular transformation of the parameters \cite{amari}.
Since the distance is based on the Fisher information matrix, the results derived from information geometry are tightly linked with fundamental results in estimation theory, such as the celebrated Cram\'{e}r-Rao lower bound.
Due to these advantages, information geometry has recently been applied to many other problems such as 
%the adaptive matrix detector \cite{9817822},
verification of dynamic models in power systems \cite{7895195} and direction of arrival estimation \cite{8686110}.

Recently, we have introduced the information geometry approach (IGA) to the massive MIMO-OFDM channel estimation \cite{IGA}.
We first provide the space-frequency (SF) beam based channel model for massive MIMO-OFDM system.
By allowing the fine factors to be greater that $1$, the SF beam based channel model can accurately characterize the channels in massive MIMO-OFDM systems.
The channel estimation is then formulated as obtaining the \textsl{a posteriori} information of the beam domain channel.
By introducing the information geometry theory, we solve this problem through calculating the approximations for the marginals of the \textsl{a posteriori} distribution.
Specifically, we turn the calculation of the approximations of the marginals into an iterative projection process by treating the set of Gaussian distributions with different constraints as different types of manifolds. 
Through the fixed point analysis, we improve the stability of IGA by introducing the damped updating and show that IGA can obtain accurate \textsl{a posteriori} mean at its fixed point.
%Meanwhile, simulations demonstrate that the proposed IGA can accurately estimate the beam domain channel within limited iterations.

In this paper, we first revisit the proposed IGA.
Based on the constant magnitude property of the entries of the measurement matrix in the massive MIMO-OFDM channel estimation, we reveal that at each iteration of IGA, the second-order natural parameters of the distributions on all the auxiliary manifolds are equivalent to each other, and at the fixed point of IGA, the first-order natural parameters of the distributions on all the auxiliary manifolds are asymptotically equivalent to each other.
These two results motivate us to replace the original natural parameters with a common natural parameter.
On this basis, we simplify the iteration of IGA and propose an efficient IGA (EIGA) for massive MIMO-OFDM channel estimation.
With the fast Fourier transform (FFT), we provide a low complexity implementation of EIGA.
We then analyze the convergence of the proposed EIGA.
We show that given a damping factor in a specific range, EIGA is guaranteed to converge.
We determine the range of the damping factor that guarantees the convergence of EIGA through the properties of the measurement matrices.
At last, we show that at the fixed point, the \textsl{a posteriori} mean obtained by EIGA is asymptotically optimal.

The rest of this paper is organized as follows. The system configuration and channel model are presented in Section \uppercase\expandafter{\romannumeral2}. 
We revisit IGA  and reveal two new results in Section \uppercase\expandafter{\romannumeral3}. 
EIGA for massive MIMO-OFDM channel estimation  is proposed in Section \uppercase\expandafter{\romannumeral4}.
Convergence and fixed point analysis are given in Section \uppercase\expandafter{\romannumeral5}
Simulation results are provided in Section \uppercase\expandafter{\romannumeral6}. 
The conclusion is drawn in Section \uppercase\expandafter{\romannumeral7}.

Notations: We adopt the following notations in this paper. 
Upper (lower) case boldface letters denote matrices
(column vectors). 
%$\delta\left( \cdot \right)$ denotes the delta function. 
%$\barjmath = \sqrt{-1}$ is the imaginary unit. 
We use $\lceil x \rceil$ to denote the largest integer not larger than $x$. 
%The notation $\triangleq$ is used for definitions.
The superscripts $\left( \cdot \right)^*$, $\left( \cdot \right)^T$ and $\left( \cdot \right)^H$ denote the conjugate, transpose and conjugate-transpose operator, respectively. $\Diag{\bx}$ denotes the diagonal matrix with $\bx$ along its main diagonal and $\diag{\bX}$ denotes a vector consisting of the diagonal components of $\bX$.  
We use $\left[ \bA \right]_{:,i}$ to denote the $i$-th row of the matrix $\bA$, where the component indices start with $1$. 
%$\bbC^{N\times M}$ ($\bbR^{N \times M}$) denotes the $N\times M$ dimensional complex (real) vector space. $\Exp_p\left\lbrace \cdot \right\rbrace $ denotes the expectation operation w.r.t. the distribution $p\left(\bh\right)$. 
$\odot$ and $\otimes$ denote the Hadamard product and  Kronecker product, respectively. 
Define $\setnnega{N} \triangleq \braces{0,1,\ldots,N}$ and $\setposi{N} \triangleq \braces{1,2,\ldots,N}$.
%$\mathrm{vec}\left(  \cdot \right)  $ denotes the vectorization operation. 
%$\mathbf{0}$, $\mathbf{1}$ and $\bO$ are the zero vector, all one vector and zero matrix with proper dimension, respectively. 
%$\circ$ is an operator of two vectors with the same dimension and $\ba \circ \bb = \left(\bb^H\ba + \ba^H\bb\right)/2$.
%$\funf[\ba,\bb]$ is a vector function which is defined as $\funf[\ba,\bb] = \left[ \ba^T \ \bb^T \right]^T\in \bbC^{2M\times 1}$ with $\ba, \bb \in \bbC^{M\times 1}$.
$\ba < b$ means that each component in vector $\ba$ is smaller than the scalar $b$.
$\ba < \bc$ means that each
component in vector $\ba$ is smaller than the component in the corresponding position in vector $\bc$.
$\norm{\bx}_0$ and $\norm{\bx}$ denote the $\ell_0$-norm and $\ell_2$-norm of $\bx$, respectively.
$p_G\left( \bh;\bmu,\bSigma \right)$ denotes the PDF of a complex Gaussian distribution $\mathcal{CN}\left( \bmu,\bSigma \right)$ for vector $\bh$ of complex random variables.
$\ba < b$ means that all the components of vector $\ba$ are smaller than scalar $b$.
$\ba < \bc$ means that each
component of vector $\ba$ is smaller than the corresponding component of vector $\bc$.

\section{System Model and Problem Statement}
In this section, we first present the configuration of the massive MIMO-OFDM system and the space-frequency beam based statistical channel model.
Then, we formulate the channel estimation as a standard Bayesian inference problem.

\subsection{System Configuration and Channel Model}
We consider a typical massive MIMO-OFDM system working in time division duplexing (TDD) mode with one base station (BS) serving $K$ single-antenna users within a cell, where the BS comprises a uniform planar array (UPA) of $N_r = N_{r,v}\times N_{r,h}$ antennas, and $N_{r,v}$ and $N_{r,h}$ are the numbers of the antennas at each vertical column and horizontal row, respectively. 
Due to channel reciprocity, channel state information can be obtained from uplink (UL) training, and then used for UL signal detection and downlink (DL) precoding.
Hence, our focus is on UL channel estimation. Standard OFDM modulation with $N_c$ subcarriers is applied, where the cyclic prefix (CP) is $N_g$. 
$N_p$ training  subcarriers are employed, and the set of them are denoted as $\mathcal{N}_p = \braces{N_1, N_1+1, \cdots, N_2}$, where $N_1$ and $N_2$ are the start
and end indices of the training subcarriers, respectively.
Assume that the channel is quasi-static, then, during each OFDM symbol, the SF domain received signal $\bY\in\bbC^{N_r\times N_p}$ for training at the BS  can be expressed as \cite{channelaqyou,IGA,Xiaofengliu} 
\begin{equation}\label{equ:maMIMO rece signal}
	\bY = \sum_{k=1}^{K}\bG_k\bP_k + \bZ,
\end{equation}
where $\bG_k \in \bbC^{ N_r \times N_p }$ is the SF domain channel of user $k$, $\bP_k  = \Diag{\bp_k} \in\bbC^{N_p\times N_p}$ is the pilot signal of user $k$, $\bp_k$ is the pilot sequence of user $k$,
and $\bZ$ is the noise matrix whose components are independent and identically distributed complex Gaussian random variables with zero mean and variance $\sigma_z^2$.

% 段落重启
Suppose that the antenna spacings of each row and each column of the UPA are one-half wavelength, respectively.
Define the directional cosines as $u \triangleq \sin\theta$ and $v \triangleq \cos\theta\sin\phi$, where $\theta, \phi\in\left[-\pi/2,\pi/2\right]$ are the vertical and the horizontal angles of arrival (AoA) at the BS, respectively. 
Then, the space steering vectors can be expressed as \cite{8298537,2Dlu} 
%$\bv\left( u,v \right) = \bv_v\left( u \right) \otimes \bv_h\left( v \right) \in \bbC^{N_r\times 1}$, $\bv_v\left(u\right) = \left[ p\left(1\right) \ p\left(2\right)  \ \cdots \ p\left(N_{r,v}\right) \right]^T \in \bbC^{N_{r,v} \times 1}$, and $	 \bv_h\left(v\right) = \left[ q\left(1\right) \ q\left(2\right) \ \cdots \ q\left(N_{r,h}\right) \right]^T \in \bbC^{N_{r,h} \times 1}$,
\begin{subequations}
	\begin{equation*}
		\bv\left( u,v \right) = \bv_v\left( u \right) \otimes \bv_h\left( v \right) \in \bbC^{N_r\times 1},
	\end{equation*}
	\begin{equation*}
	 \bv_v\left(u\right) = \left[ p\left(1\right), \ p\left(2\right),  \ \cdots, \ p\left(N_{r,v}\right) \right]^T \in \bbC^{N_{r,v} \times 1}, 
	\end{equation*}
	\begin{equation*}
	 \bv_h\left(v\right) = \left[ q\left(1\right), \ q\left(2\right), \ \cdots, \ q\left(N_{r,h}\right) \right]^T \in \bbC^{N_{r,h} \times 1},
	\end{equation*}
\end{subequations}
where $p\left(n\right) = \exp\left\lbrace -\barjmath \pi\left(n-1\right)u\right\rbrace$ and $q\left(n\right) = \exp\left\lbrace  -\barjmath \pi\left( n - 1 \right)v\right\rbrace$. Denote the delay of the multipaths of the channel as $\tau$ \cite{IGA,channelaqyou,2Dlu}. Then, the frequency steering vector is given by \cite{IGA}, 
%$ \bu\left( \tau \right) = \left[ r\left(N_1\right)\  \cdots \ r\left( N_2 \right) \right]^T \in \bbC^{N_p \times 1}$,
\begin{equation*}\label{equ:frequnce steering vector}
 \bu\left( \tau \right) = \left[ r\left(N_1\right),\  \cdots, \ r\left( N_2 \right) \right]^T \in \bbC^{N_p \times 1},
\end{equation*}
where $r\left(n\right) = \exp\left\lbrace -\barjmath 2\pi \Delta_f n \tau\right\rbrace$ and $\Delta_f$ is the subcarrier interval. 
Define the matrices containing the sampled space steering vectors and the sampled frequency steering vectors as
\begin{equation}\label{equ:def of V}
	\bV \triangleq \bV_v \otimes \bV_h \in \bbC^{N_r \times N_vN_h},
\end{equation} 
\begin{equation}\label{equ:def of F}
	\bF \triangleq \left[ \bu\left( \tau_{1} \right),  \ \bu\left( \tau_{2} \right),  \ \cdots, \ \bu\left( \tau_{N_\tau} \right)\right] \in \bbC^{N_p \times N_\tau},
\end{equation}
where 
\begin{equation*}
	\bV_v \triangleq \left[ \bv_v\left( u_{1}\right), \ \bv_v\left( u_{2}\right), \ \cdots, \ \bv_v\left( u_{N_v} \right)  \right] \in \bbC^{N_{r,v} \times N_v},
\end{equation*}
\begin{equation*}
	\bV_h \triangleq \left[ \bv_h\left( v_{1}\right), \ \bv_h\left( v_{2}\right), \ \cdots, \ \bv_h\left( v_{N_h} \right)  \right] \in \bbC^{N_{r,h} \times N_{h}}.
\end{equation*}
$u_i$, $v_j$ and $\tau_\ell$ above are the sampled directional cosines and delays, 
%$u_i \triangleq \frac{2\left(i-1\right)-N_v}{N_v}, i\in \mathcal{Z}_{N_v}^+$, $N_v \triangleq F_vN_{r,v}$;
%$v_j = \frac{2\left(j-1\right)-N_h}{N_h}, j\in \mathcal{Z}_{N_h}^+$, $N_h \triangleq F_hN_{r,h}$;
%$\tau_\ell  = \frac{\left( \ell-1 \right)N_f}{N_{\tau }N_p\Delta_f }, \ell \in \mathcal{Z}_{N_\tau}^+$, $N_\tau \triangleq F_\tau N_f$ and 
%$N_f = \left\lceil  {N_p\dnnot{N}{g}}/{N_\mathrm{c}} \right\rceil$. 
which are defined as follows: 
\begin{subequations}
	\begin{equation*}
		u_i \triangleq \frac{2\left(i-1\right)-N_v}{N_v}, i\in \mathcal{Z}_{N_v}^+,
	\end{equation*}
\begin{equation*}
	v_j = \frac{2\left(j-1\right)-N_h}{N_h}, j\in \mathcal{Z}_{N_h}^+,
\end{equation*}
\begin{equation*}
	\tau_\ell  = \frac{\left( \ell-1 \right)N_f}{N_{\tau }N_p\Delta_f }, \ell \in \mathcal{Z}_{N_\tau}^+,
\end{equation*}
\end{subequations}
$N_v \triangleq F_vN_{r,v}$, $N_h \triangleq F_hN_{r,h}$,
$N_\tau \triangleq F_\tau N_f$ and $N_f = \left\lceil  {N_p\dnnot{N}{g}}/{N_\mathrm{c}} \right\rceil$.
%\begin{equation*}
%	N_f = \left\lceil  \frac{N_p\dnnot{N}{g}}{N_\mathrm{c}} \right\rceil.
%\end{equation*}
$F_v$, $F_h$ and $F_\tau$ above are called the fine (oversampling) factors (FFs).
$N_v$, $N_h$ and $N_\tau$ are the numbers of sampled directional cosines and sampled delays, respectively.
Larger FFs lead to more sampled directional cosines and delays, which is necessary for accurately modeling the SF channel in massive MIMO-OFDM systems \cite{IGA}.
When $N_v$, $N_h$ and $N_\tau$ are sufficiently large, the SF domain channel $\bG_k$ can be expressed as \cite{IGA,2Dlu,channelaqyou}
\begin{equation}\label{equ:SF beam channel}
	\bG_k = \bV\bH_k\bF^T, k\in\setposi{K},
\end{equation}
where $\bH_k \in \bbC^{F_vF_hN_r \times F_\tau N_f}$ is the SF beam domain channel matrix of user $k$, and the components in $\bH_k$ follow the independent complex Gaussian distributions with zero mean and possibly different variances. 
We denote the power matrix of beam domain channel as 
%$\bOmega_k = \Exp\braces{\bH_k\odot \bH_k^*}, k\in\setposi{K}$.
\begin{equation}
	 \bOmega_k = \Exp\braces{\bH_k\odot \bH_k^*}, k\in\setposi{K}.
\end{equation}
Due to the channel sparsity, most of the components in $\bOmega_k$ are (close to) zero and the non-zero components usually gather in clusters, where each cluster corresponds to a physical scatterer.  
%\cite{8970536}. 
Meanwhile, compared to the SF domain channel matrix $\bG_{k}$, the power matrix $\bOmega_k$ maintains unchanged within a much longer period \cite{7042346,channelaqyou}.
%Hence, the BS has sufficient resources to acquire $\bOmega_k, k\in \setposi{K}$ \cite{channelaqyou,IGA}.
%For example, the authors in \cite{2Dlu} propose a method to estimate $\bOmega_k, k\in\setposi{K}$, via Kullback-Leibler (K-L) divergence minimization.
The channel power matrices $\braces{\bOmega_k}_{k=1}^K$ can be obtained by methods such as \cite{2Dlu,9373011}.
In the rest of this paper, we assume that $\braces{\bOmega_k}_{k=1}^K$ are known at the BS. 
%One of the most critical advantages of the SF beam based channel model is that the fine factors can be set to be greater than $1$. 
%This property allows us to sample the AoAs as well as the delay more intensively, which provides more accurate modeling of the channel in massive MIMO-OFDM systems.

\subsection{Problem Statement}\label{sec:Probelm Statement}
The goal of channel estimation is to obtain the \textsl{a posteriori}  information of  the SF domain channel $\bG_k, k\in\setposi{K}$ when the received signal $\bY$ is given. 
Since the \textsl{a posteriori}  information of $\bG_k$ can be
calculated from that of the SF beam domain channel matrix $\bH_k$ through (\ref{equ:SF beam channel}), we focus on the estimation of $\bH_k, k\in\setposi{K}$.
Substituting (\ref{equ:SF beam channel}) into (\ref{equ:maMIMO rece signal}), we can obtain
\begin{equation}\label{equ:maMIMO rece signal 2}
	\bY = \bV\bH\bM + \bZ,
\end{equation}
where $\bV$ and $\bZ$ are the same as above,
$\bH = \left[ \bH_1, \ \bH_2, \  \cdots, \ \bH_K \right] \in \bbC^{ F_aN_r\times KF_\tau N_f }$, $F_a\triangleq F_v\times F_h$ and $\bM = \left[ \bP_1\bF,\  \bP_2\bF,\ \cdots,\ \bP_K\bF \right]^{T} \in \bbC^{KF_\tau N_f \times N_p}$.
%\begin{equation}\label{equ:MTX M}
%	\bM = \left[ \bP_1\bF,\  \bP_2\bF,\ \cdots,\ \bP_K\bF \right]^{T} \in \bbC^{KF_\tau N_f \times N_p}.
%\end{equation}
After vectorizing (\ref{equ:maMIMO rece signal 2}), and removing the components of $\mathrm{vec}\braces{\bH}$ with zero variance and the corresponding columns in $\bM^T\otimes \bV$, we can obtain 
\begin{equation}\label{equ:maMIMO rece signal 4}
	\by = \bA\bh + \bz,
\end{equation}
where $\bA \in\bbC^{N\times M}$ is a deterministic matrix extracted from $\bM^T\otimes \bV$, 
$N = N_rN_p$,
$M$ is the number of components in $\bH$ with non-zero variance,
$\by$ and $\bz$ are the vectorizations of $\bY$ and $\bZ$, respectively,
$\bh \in \bbC^{M}$ is a Gaussian random vector extracted from $\mathrm{vec}\braces{\bH}$.
In \eqref{equ:maMIMO rece signal 4}$, \bh \sim\mathcal{CN}\left(\mathbf{0},\bD\right)$ with diagonal and positive definite $\bD$ and $\bz \sim \mathcal{CN}\left(\mathbf{0},\sigma_z^2\bI\right)$.
Assume that $\bh$ and $\bz$ are independent with each other. 
Then, given the observation $\by$, the \textsl{a posteriori} distribution of $\bh$ is Gaussian, and we have 
\begin{equation}
	\begin{split}
	   	p\left(\bh|\by\right) &= p_G\left(\bh;\tilde{\bmu},\tilde{\bSigma}\right) \propto p\left(\bh\right)p\left(\by|\bh\right)	\\
	   	& \propto\exp\braces{-\bh^H\bD^{-1}\bh -  \frac{\norm{\by - \bA\bh}^2}{\sigma_z^2}}.
	\end{split}
\end{equation}
The \textsl{a posteriori} mean and covariance matrix of $\bh$ are given by \cite{kay1993fundamentals}
\begin{subequations}\label{equ:post information}
	\begin{equation}\label{equ:post mean}
		\tilde{\bmu} = \bD\left( \bA^H\bA\bD + \sigma_z^2\bI  \right)^{-1}\bA^H\by,
	\end{equation}
	\begin{equation}\label{equ:post covariance}
		\tilde{\bSigma} = \left( \bD^{-1} + \frac{1}{\sigma_z^2}\bA^H\bA \right)^{-1},
	\end{equation}
\end{subequations}
respectively.
The \textsl{a posteriori} mean $\tilde{\bmu}$ is also the MMSE estimate of $\bh$ \cite{kay1993fundamentals}. 
%The computational complexity of the \textsl{a posteriori} information in (\ref{equ:post information}) is $\mathcal{O}\left(M^3 + M^2N  \right)$. 
%In the case with large $M$ and $N$, the application of \eqref{equ:post information} will not be affordable.
In this work, our goal is to calculate the approximate marginals of the \textsl{a posteriori} PDF $p\left(\bh|\by\right)$, where the marginals are denoted as $p\left(h_i|\by\right), i\in\setposi{M}$.
Then, the \textsl{a posteriori} mean and variance of $\bh$ can be obtained.
%In this paper, after revisiting the IGA, \textcolor{blue}{we propose EIGA to solve the Bayesian inference problem defined by \eqref{equ:maMIMO rece signal 4}.
%EIGA aims to compute the approximate marginals of the \textsl{a posteriori} distribution $p\left(\bh|\by\right)$, where the marginals 

\section{Revisiting IGA}
In this section, we introduce the IGA for massive MIMO-OFDM channel estimation, for which more details can be found in \cite{IGA,srbpig}. 
Then, two new properties of the IGA are obtained, which motivate us to simplify the IGA.

\subsection{IGA}
Given \eqref{equ:maMIMO rece signal 4}, $p\left(\bh|\by\right)$ can be further expressed as \cite{IGA}
%the PDF of the \textsl{a posteriori} distribution can be expressed as \cite{IGA}
%\begin{equation}\label{equ:a posteriori pdf 1}
%	\begin{split}
%		p\left(\bh|\by\right) &\propto \prod_{i=1}^{M}p_i\left(h_i\right)\prod_{n = 1}^{N}p_n\left(y_n|\bh\right) \\ 
%		&\propto \expb{-\bh^H\bD^{-1}\bh}\expb{-\sum_{n=1}^{N}\frac{\left|y_n - \bgamma_n^H\bh\right|^2}{\sigma_z^2}},
%	\end{split}
%\end{equation}
\begin{equation}\label{equ:a posteriori pdf 1}
		p\left(\bh|\by\right) \propto \expb{-\bh^H\bD^{-1}\bh -\sum_{n=1}^{N}\frac{\left|y_n - \bgamma_n^H\bh\right|^2}{\sigma_z^2}},
\end{equation}
where $y_n$ is the $n$-th component of $\by$, and
\begin{equation}\label{equ:gamma_n}
	\bgamma_n = \left[ \bA^H \right]_{:,n} = \left[ {a}_{n1}^* \ \cdots \  {a}_{nM}^* \right]^T \in \bbC^{M\times 1}.
\end{equation}
% 修改
We define a vector function as $\funf[\ba,\bb] \triangleq \left[ \ba^T, \ \bb^T \right]^T\in \bbC^{2M\times 1}$,
%\begin{equation*}
%	\funf[\ba,\bb] \triangleq \left[ \ba^T, \ \bb^T \right]^T\in \bbC^{2M\times 1},
%\end{equation*}
where $\ba, \bb \in \bbC^{M\times 1}$. 
Let $\bd \triangleq \funf[\mathbf{0},\diag{-\bD^{-1}}]$ and $\bt \triangleq \funf[\bh,\left(\bh\odot\bh^*\right)]$.
Then, \eqref{equ:a posteriori pdf 1} can be rewritten as 
\begin{equation}\label{equ:a posteriori pdf 2}
			p\left(\bh|\by\right) = \expb{\bd \circ \bt + \sum_{n=1}^{N}c_n\left(\bh\right) - \psi_q},
\end{equation}
where $\circ$ is an operator of two vectors with the same dimension, and $	\ba \circ \bb \triangleq \frac{1}{2} \left(\bb^H\ba + \ba^H\bb\right)$,
%\begin{equation*}
%	\ba \circ \bb \triangleq \frac{1}{2} \left(\bb^H\ba + \ba^H\bb\right),
%\end{equation*}
$\psi_q$ is the normalization factor and  
%$	c_n\left(\bh\right) = \left(  -\bh^H{\bgamma_n\bgamma_n^H}\bh + {y_n}\bh^H\bgamma_n + y_n^*\bgamma_n^H\bh\right)/\sigma_z^2$.
\begin{equation}\label{equ:c_n}
	c_n\left(\bh\right) \!=\! \frac{1}{\sigma_z^2}\left(  -\bh^H{\bgamma_n\bgamma_n^H}\bh + {y_n}\bh^H\bgamma_n + y_n^*\bgamma_n^H\bh\right).
\end{equation}
In \eqref{equ:a posteriori pdf 2}, $\bt$ only contains the  statistics of single random variables, i.e., $h_i$ and $\left|h_i\right|^2, i\in \mathcal{Z}_M^+$, and all the interactions (cross terms), $h_i{h}_j^*, i,j \in \mathcal{Z}_M^+$, are included in the terms $c_n\left(\bh\right), n \in \mathcal{Z}_N^+$.  IGA aims to approximate $\sum_{n=1}^{N}c_n\left(\bh\right)$  as $\bvartheta_0 \circ \bt$,
where $\bvartheta_0 = \funf[\btheta_0,\bbnu_0]$, $\btheta_0 \in \bbC^{M \times 1}$ and $\bbnu_0 \in \bbR^{M\times 1}$.
Then, we can obtain
\begin{equation}
	p\left(\bh|\by\right) \approx p_0\left(\bh;\bvartheta_0\right) = \expb{\left(\bd + \bvartheta_0\right) \circ \bt - \psi_0 },
\end{equation}
where $\psi_0$ is the normalization factor.
The marginals of $p_0\left(\bh;\bvartheta_0\right)$ can be calculated easily since it contains no interactions. 
To obtain $\bvartheta_{0}$, IGA constructs three types of manifolds and computes the approximation for each $c_n\left(\bh\right)$ in an iterative manner, which is 
denoted as $\bxi_{n}\circ\bt$.
At last, $\bvartheta_0 = \sum_{n=1}^N\bxi_n$ is used as the parameter of $p_0\left(\bh;\bvartheta_0\right)$.
The three types of manifolds are the original manifold (OM), the objective manifold (OBM) and the auxiliary manifold (AM), respectively.
The OM is defined as the set of PDFs of $M$ dimensional complex Gaussian random vectors,
\begin{equation}
	\!\!\!\mathcal{M}_{or} = \left\lbrace p\left(\bh\right) \!=\! p_G\left(\bh;\bmu,\bSigma\right), \bmu\in\bbC^{M\times 1}, \bSigma\in \mathbb{H}_+^M \right\rbrace, 
\end{equation}
where  $\mathbb{H}_+^M$ is the set of $M$ dimensional positive definite matrices.  
The OBM is defined as
\begin{equation}\label{equ:definition of M0}
		\mathcal{M}_0 = \left\lbrace p_0\left(\bh;\bvartheta_0\right) = \expb{\left(\bd+\bvartheta_0\right)\circ\bt - \psi_0\left(\bvartheta_0\right)}  \right\rbrace, 
\end{equation} 
where $\bvartheta_0 = \funf[\btheta_0,\bbnu_0]$ with $\btheta_0 \in \bbC^{M\times 1}$ and $\bbnu_0 \in \bbR^{M\times 1}$, and the free energy (normalization factor) $\psi_0\left(\bvartheta_0\right)$ is given by \cite[Equation (40a)]{IGA}.
We refer to $\bvartheta_{0}$, $\btheta_{0}$ and $\bbnu_{0}$ as the natural parameter (NP), the first-order natural parameter (FONP) and the second-order natural parameter (SONP) of $p_0$.
Finally, $N$ AMs are defined, where the $n$-th AM is defined as
\begin{subequations}\label{equ:definition of Mn}
	\begin{equation}
		\mathcal{M}_n = \braces{p_n\left(\bh;\bvartheta_n\right)}, n\in\mathcal{Z}_N^+,
	\end{equation}
     \begin{equation}
     \!\!\!	p_n\left(\bh;\bvartheta_n\right) \!=\! \expb{\left(\bd + \bvartheta_n\right)\circ \bt + c_n\left(\bh\right) - \psi_n\left(\bvartheta_n\right)},
     \end{equation}
\end{subequations}
where $\bvartheta_n = \funf[\btheta_n,\bbnu_n]$, $\btheta_n \in \bbC^{M\times 1}$ and $\bbnu_n \in \bbR^{M\times 1}$ are referred to as the NP, the FONP and the SONP of $p_n$,
and the free energy $\psi_n\left(\bvartheta_n\right)$ is given by \cite[Equation (40b)]{IGA}.
The distributions in the OBM and AMs are all $M$ dimensional complex Gaussian distributions. 
We have $p_n\left(\bh;\bvartheta_{n}\right) = p_G\left(\bh;\bmu_n,\bSigma_{n}\right), n\in\setnnega{N}$, where
\begin{subequations}\label{equ:mu_0 and Sigma_0}
	\begin{equation}\label{equ:mu_0}
		\bmu_0\left(\bvartheta_{0}\right) = \frac{1}{2}\bSigma_{0}\left(\bvartheta_{0}\right)\btheta_{0},
	\end{equation}
	\begin{equation}\label{equ:Sigma_0}
		\bSigma_{0}\left(\bvartheta_{0}\right) = \left( \bD^{-1}-\Diag{\bbnu_0} \right)^{-1},
	\end{equation}
\end{subequations}
\begin{subequations}\label{equ:mu_n and Sigma_n}
	\begin{equation}\label{equ:mu_n}
		\bmu_n\left(\bvartheta_{n}\right) = \bSigma_{n}\left(\bvartheta_{n}\right)\left( \frac{y_n}{\sigma_z^2}\bgamma_n + \frac{1}{2}\btheta_{n} \right),
	\end{equation}
	\begin{equation}\label{equ:Sigma_n}
		\bSigma_{n}\left(\bvartheta_{n}\right) = \bLambda_n - \frac{1}{\beta_n}\bLambda_n\bgamma_n\bgamma_n^H\bLambda_n,
	\end{equation}
    \begin{equation}\label{equ:Lambda_n}
    	\bLambda_n = \left( \bD^{-1} - \Diag{\bbnu_n}\right)^{-1},
    \end{equation}
    \begin{equation}\label{equ:beta_n}
    	\beta_n = \sigma_z^2 + \bgamma_n^H\bLambda_n\bgamma_n, n \in \mathcal{Z}_N^{+}.
    \end{equation}
\end{subequations}
Write $\bmu_n$ and $\bSigma_{n}$ as functions w.r.t. $\bvartheta_n, n\in\setnnega{N}$, since we will frequently use the relationship between the parameters and means and covariances in the following.

$p_n\left(\bh;\bvartheta_n\right)$ in \eqref{equ:definition of Mn} only contains single interaction item $c_n\left(\bh\right)$, and all others, i.e., $\sum_{n'\neq n}c_{n'}\left(\bh\right)$ are replaced as $\bvartheta_n\circ \bt$. 
Suppose that the NP $\bvartheta_n$ is given, the approximation of $c_n\left(\bh\right)$ is then obtained through $m$-projecting $p_n\left(\bh;\bvartheta_n\right)$ onto the OBM. 
Specifically, $m$-projecting $p_n\left(\bh;\bvartheta_n\right)$ onto the OBM is equivalent to finding the point on the OBM minimizing the following K-L divergence,
\begin{equation}
	\bvartheta_{0n} = \argmin{\bvartheta_0} \Dkl{p_n\left(\bh;\bvartheta_n\right)}{p_0\left(\bh;\bvartheta_0\right)},
\end{equation}
where
%where $\Dkl{p_n}{p_0} \!\!=\!\! \Exp_{p_n}\braces{\ln\frac{p_n\left(\bh;\bvartheta_n\right)}{p_0\left(\bh;\bvartheta_0\right)}}$.
\begin{equation}
	\Dkl{p_n\left(\bh;\bvartheta_n\right)}{p_0\left(\bh;\bvartheta_0\right)} = \Exp_{p_n}\braces{\ln\frac{p_n\left(\bh;\bvartheta_n\right)}{p_0\left(\bh;\bvartheta_0\right)}}.
\end{equation}
$\bvartheta_{0n}\! =\! \funf[\btheta_{0n},\bbnu_{0n}]$, $n\in \setposi{N}$, is then given by
\begin{subequations}\label{equ:vartheta_0n}
	\begin{align}\label{equ:theta_0n}
			\btheta_{0n} = &\left[ \bI - \frac{1}{\beta_n}\bLambda_n \bI \odot \left( \bgamma_n\bgamma_n^H \right) \right]^{-1}\nonumber\\
			&\times \left( \frac{2y_n - \bgamma_n^H\bLambda_n\btheta_{n}}{\beta_n}\bgamma_n + \btheta_n \right), 
	\end{align}
	\begin{equation}\label{equ:nu_0n}
		\bbnu_{0n} = \diag{\bD^{-1}-\left[ \bLambda_n - \frac{1 }{\beta_n}\bLambda_n^2 \bI \odot \left( \bgamma_n\bgamma_n^H\right)  \right]^{-1}}, 
	\end{equation}
\end{subequations}
%\begin{subequations}\label{equ:vartheta_0n}
%	\begin{align}\label{equ:theta_0n}
%		\!\!\!\btheta_{0n} \!\!=\!\! \left[\bI \!-\!\! \frac{\bLambda_{n} \bI \!\odot\! \left( \!\bgamma_n\bgamma_n^H\! \right)}{\beta_{n}} \right]^{\!\!-1} \!\!\!\!\left( \frac{2y_n\! -\! \bgamma_n^H\bLambda_n\btheta_{n}}{\beta_n}\bgamma_n\! +\! \btheta_n \right), 
%	\end{align}
%	\begin{equation}\label{equ:nu_0n}
%		\!\!\!\bbnu_{0n} \!=\! \diag{\bD^{-1}-\left[ \bLambda_n - \frac{1 }{\beta_n}\bLambda_n^2 \bI \odot \left( \bgamma_n\bgamma_n^H\right)  \right]^{-1}}, 
%	\end{equation}
%\end{subequations}
where $\bLambda_n$ and $\beta_n$ are given by \eqref{equ:Lambda_n} and \eqref{equ:beta_n}, respectively.
We now discuss an important property of the $m$-projection.
Given $p_n\left(\bh;\bvartheta_{n}\right)$ and its $m$-projection on the OBM $p_0\left(\bh;\bvartheta_{0n}\right), n\in \setposi{N}$, 
the expectations of $\bt$ w.r.t. $p_n\left(\bh;\bvartheta_{n}\right)$ and $p_0\left(\bh;\bvartheta_{0n}\right)$ are the same \cite{IGA,srbpig}, i.e., 
\begin{equation}\label{equ:mp invariant}
	 \int \bt p_n\left(\bh;\bvartheta_{n}\right)\intdx{\bh}= \int \bt p_0\left(\bh;\bvartheta_{0n}\right)\intdx{\bh}, n\in \setposi{N}.
\end{equation}
This is equivalent to 
%$\bbeta_n\left(\bvartheta_{n}\right) = \bbeta_{0n}, n\in\setposi{N}$,
\begin{equation}\label{equ:m-p invariant 2}
	\bbeta_n\left(\bvartheta_{n}\right) = \bbeta_{0n}, n\in\setposi{N},
\end{equation}
where 
%$	\bbeta_n\left(\bvartheta_{n}\right) \!\triangleq\! \left[ \bmu_n^T\left(\bvartheta_n\right), \ \textrm{diag}^T\braces{\bSigma_{n}\left(\bvartheta_n\right)} \right]^T \!\!\in\! \bbC^{2M \times 1}$, and $	\bbeta_{0n} \triangleq \left[ \bmu_{0}^T\left(\bvartheta_{0n}\right), \ \textrm{diag}^T\braces{\bSigma_{0}\left(\bvartheta_{0n}\right)} \right]^T \in \bbC^{2M \times 1}$.
\begin{subequations}
		\begin{equation*}
		\bbeta_n\left(\bvartheta_{n}\right) \!\triangleq\! \left[ \bmu_n^T\left(\bvartheta_n\right), \ \textrm{diag}^T\braces{\bSigma_{n}\left(\bvartheta_n\right)} \right]^T \!\!\in\! \bbC^{2M \times 1}.
	\end{equation*}
	\begin{equation*}
		\bbeta_{0n} \triangleq \left[ \bmu_{0}^T\left(\bvartheta_{0n}\right), \ \textrm{diag}^T\braces{\bSigma_{0}\left(\bvartheta_{0n}\right)} \right]^T \in \bbC^{2M \times 1}.
	\end{equation*}
\end{subequations}
We will use this property in the analysis of the fixed point of EIGA.

% 段落重启
Now, let us express the $m$-projection $p_0\left(\bh;\bvartheta_{0n}\right)$ in the following way:
\begin{equation}\label{equ:explain of m-projection}
	\begin{split}
		p_0\left(\bh;\bvartheta_{0n}\right) &= \expb{\left(\bd + \bvartheta_{0n}\right)\circ\bt - \psi_0}\\
		&=\expb{\left( \bd + \bvartheta_n + \bxi_n \right)\circ\bt -\psi_0 }.
	\end{split}
\end{equation}
The NP $\bvartheta_{0n}$ of $p_0\left(\bh;\bvartheta_{0n}\right)$ is regarded as the sum of the NP $\bvartheta_{n}$ of $p_n\left(\bh;\bvartheta_{n}\right)$ and an extra item that is denoted as $\bxi_n$. Comparing the last line of \eqref{equ:explain of m-projection} and $p_n$ in \eqref{equ:definition of Mn}, we can find that $c_n\left(\bh\right)$ in $p_n$ is replaced by $\bxi_n\circ \bt$ in $p_0\left(\bh;\bvartheta_{0n}\right)$. Thus, we regard $\bxi_n\circ \bt$ as an approximate of $c_n\left(\bh\right)$ and calculate $\bxi_n$ as
\begin{equation}\label{equ:xi_n}
	\bxi_n = \bvartheta_{0n} - \bvartheta_{n}, n\in\mathcal{Z}_N^+.
\end{equation}
We then calculate $\bvartheta_0$ as $\bvartheta_0 = \sum_{n=1}^N\bxi_n$ and consider $p_0\left(\bh;\bvartheta_0\right)$ as an approximation of $p\left(\bh|\by\right)$. 

Now, we summarize the complete process of IGA.
Note that IGA proceeds in an iterative manner since the NPs of $\braces{p_n\left(\bh;\bvartheta_{n}\right)}_{n=1}^N$
 are unknown at the beginning. 
Specifically, we first initialize $\bvartheta_{n}, n\in \setnnega{N}$. 
We then calculate $\bvartheta_{0n}$ as \eqref{equ:vartheta_0n} and $\bxi_{n}$ as \eqref{equ:xi_n}. 
The NP of $p_n, n\in \setposi{N}$, is then updated as
$\bvartheta_{n} = \sum_{n' \neq n}\bxi_{n'}$ since  $\bvartheta_{n} \circ \bt$ replaces $\sum_{n' \neq n}c_{n'}\left(\bh\right)$ in $p_n$ and each interaction item $c_n\left( \bh \right)$ is approximated as $\bxi_{n} \circ \bt$ after the $m$-projection. 
The NP of $p_0$ is updated as $\bvartheta_{0} = \sum_{n=1}^{N}\bxi_{n}$.
Then, repeat the $m$-projections, calculate the approximation items and the updates until convergence.
In practice, the NPs of $\braces{p_n\left(\bh;\bvartheta_{n}\right)}_{n=0}^N$ are typically updated with a damping factor, i.e.,
\begin{subequations}\label{equ:update of NP in IGA}
	\begin{equation}\label{equ:update of vartheta_n in IGA}
		\bvartheta_{n}\left(t+1\right) = d\sum\nolimits_{n'\neq n}\bxi_{n'}\left(t\right) + \left(1-d\right)\bvartheta_{n}\left(t\right), 
	\end{equation}
	\begin{equation}\label{equ:update of vartheta_0 in IGA}
		\bvartheta_0\left(t+1\right) = d\sum\nolimits_{n=1}^N\bxi_{n}\left(t\right) + \left(1-d\right)\bvartheta_{0}\left(t\right),
	\end{equation}
\end{subequations}
where $n \in \setposi{N}$ in \eqref{equ:update of vartheta_n in IGA}.
The damped updating of the NPs could improve the convergence of IGA.

Next,
we introduce two conditions of the fixed point of IGA. 
When converged, denote the fixed points of the parameters in IGA as $\bxi_{n}^{\star}, n\in \setposi{N}$,
$\bvartheta_{n}^{\star} = \funf[\btheta_n^\star,\bbnu_n^\star], n\in \setposi{N}$, $\bvartheta_{0n}^{\star}, n\in\setposi{N}$, and $\bvartheta_{0}^\star = \funf[\btheta_0^\star,\bbnu_0^\star]$. 
By solving the fixed point equation of IGA, we can obtain \cite{IGA} 
\begin{equation*}
	\bvartheta_{0}^{\star} = \bvartheta_{0n}^{\star} = \frac{1}{N-1}\sum_{n=1}^{N}\bvartheta_{n}^{\star}.
\end{equation*}
%$\bvartheta_{0}^{\star} = \bvartheta_{0n}^{\star} = \frac{1}{N-1}\sum_{n=1}^{N}\bvartheta_{n}^{\star}$.
Define 
%$	\bbeta_{0}\left(\bvartheta_{0}\right) \triangleq \left[ \bmu_0^T\left(\bvartheta_{0}\right), \ \mathrm{diag}^T\{ \bSigma_{0}\left(\bvartheta_{0}\right) \} \right]^T \in \bbC^{2M \times 1}$,
\begin{equation}
	\bbeta_{0}\left(\bvartheta_{0}\right) \triangleq \left[ \bmu_0^T\left(\bvartheta_{0}\right), \ \mathrm{diag}^T\{ \bSigma_{0}\left(\bvartheta_{0}\right) \} \right]^T \in \bbC^{2M \times 1},
\end{equation} 
$\bbeta_{0}^{\star} \triangleq \bbeta_{0}\left(\bvartheta_{0}^{\star}\right)$, $\bbeta_{0n}^{\star} \triangleq \bbeta_{0}\left(\bvartheta_{0n}^{\star}\right), n\in\setposi{N}$, and $\bbeta_n^{\star} \triangleq \bbeta_{n}\left(\bvartheta_{n}^{\star}\right), n\in\setposi{N}$. 
We can obtain 
%$\bbeta_{0}^{\star} = \bbeta_{0n}^{\star} = \bbeta_n^{\star}, n\in\setposi{N}$,
\begin{equation}\label{equ:m-condition2}
	\bbeta_{0}^{\star} \equaa \bbeta_{0n}^{\star} \equab \bbeta_n^{\star}, n\in\setposi{N},
\end{equation}
where $\left(\textrm{a}\right)$ comes from $\bvartheta_{0}^{\star} = \bvartheta_{0n}^{\star}, n\in\setposi{N}$, and $\left(\textrm{b}\right)$ comes from that $p_0\left(\bh;\bvartheta_{0n}^{\star}\right)$ is the $m$-projection of $p_n\left(\bh;\bvartheta_{n}^{\star}\right)$, $n\in\setposi{N}$, on the OBM and thus \eqref{equ:mp invariant} holds.
In summary, the two conditions are
\begin{equation}\label{equ:m and e conditions}
	\begin{cases}
		m\textrm{-condition:}\ \bbeta_{0}^{\star} = \bbeta_n^{\star}, n\in\setposi{N},\\
	    \ e \textrm{-condition:}\     	\bvartheta_{0}^{\star} = \frac{1}{N-1}\sum_{n=1}^{N}\bvartheta_{n}^{\star}.
	\end{cases}
\end{equation}

\subsection{New Results}

In practice, the pilot sequences with constant magnitude property are preferred for massive MIMO-OFDM systems \cite{channelaqyou,6940305,2Dlu}.
In this case, the measurement matrix $\bA$ in the received signal model \eqref{equ:maMIMO rece signal 4} have the constant magnitude entry property, i.e., $\left|a_{i,j}\right| = \left| a_{m,n} \right|, \forall i,j,m,n$, where $a_{i,j}$ is the $\left(i,j\right)$-th element of $\bA$.
Under this condition, the iteration of IGA shows two new properties. 
Unless specified, we assume that the components of the pilot sequences and thus the measurement matrix entries have unit magnitude in the rest of this paper.
\begin{theorem}\label{the:same nu}
	If the matrix $\bA$ in \eqref{equ:maMIMO rece signal 4} has constant magnitude entry property, then at each iteration of IGA, the SONPs of both $p_n, n\in\setposi{N}$, and its $m$-projection on the OBM are independent of $n$, i.e.,
	\begin{subequations}
		\begin{equation}
			\bbnu_{n}\left(t\right) = \bbnu_{n'}\left(t\right),
		\end{equation}
		\begin{equation}
			\bbnu_{0n}\left(t\right) = \bbnu_{0n'}\left(t\right), n, n' \in\setposi{N},
		\end{equation}
	\end{subequations}
	when the initializations of the SONPs of $\braces{p_n}_{n=1}^N$ are the same.
	Furthermore, if the initializations of the SONPs of $p_0$ and $p_n, n\in\setposi{N}$, satisfy $\bbnu_0\left(0\right), \bbnu_n\left(0\right) \le 0$, then their fixed points  satisfy $\bbnu_0^{\star}, \bbnu_n^{\star} < 0, n\in\setposi{N}$.
\end{theorem}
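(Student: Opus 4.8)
The plan is to exploit the single structural consequence of the constant magnitude hypothesis: since every entry of $\bA$ has unit modulus, each diagonal entry of $\bgamma_n\bgamma_n^H$ equals $\left|a_{nm}\right|^2 = 1$, so that $\bI \odot \left(\bgamma_n\bgamma_n^H\right) = \bI$ for every $n$. I would first substitute this into \eqref{equ:nu_0n}. Because $\bLambda_n = \left(\bD^{-1} - \Diag{\bbnu_n}\right)^{-1}$ in \eqref{equ:Lambda_n} is an inverse of a difference of diagonal matrices, $\bLambda_n$ is itself diagonal, and the bracketed matrix $\bLambda_n - \frac{1}{\beta_n}\bLambda_n^2\,\bI\odot\left(\bgamma_n\bgamma_n^H\right)$ collapses to the diagonal matrix $\bLambda_n - \frac{1}{\beta_n}\bLambda_n^2$. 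Moreover, using $\left|a_{nm}\right|=1$ again, $\beta_n = \sigma_z^2 + \bgamma_n^H\bLambda_n\bgamma_n = \sigma_z^2 + \tr{\bLambda_n}$ in \eqref{equ:beta_n} depends on $n$ only through $\bbnu_n$. Hence I can write $\bbnu_{0n} = g\left(\bbnu_n\right)$ for a single function $g$, the same for all $n$, acting componentwise.

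With this common map $g$ in hand, the equivalence across $n$ follows by induction on the iteration index $t$. The induction hypothesis is that $\bbnu_n(t)$ is independent of $n$; the base case is the assumed common initialization. Since $g$ is the same for every $n$, the projected parameters $\bbnu_{0n}(t) = g\left(\bbnu_n(t)\right)$ are then also $n$-independent. Writing $\bbnu(t)$ for the common value and noting that the second-order block of $\bxi_{n'}$ is $\bbnu_{0n'}-\bbnu_{n'}$ by \eqref{equ:xi_n}, the second-order part of the damped update \eqref{equ:update of vartheta_n in IGA}, namely $\bbnu_n(t+1) = d\sum_{n'\neq n}\left(\bbnu_{0n'}(t) - \bbnu_{n'}(t)\right) + (1-d)\bbnu_n(t)$, reduces to $d(N-1)\left(g(\bbnu(t)) - \bbnu(t)\right) + (1-d)\bbnu(t)$, which carries no dependence on $n$. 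This closes the induction and yields both displayed identities.

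For the sign statement, the key inequality I would establish is that, whenever $\bbnu_n < \diag{\bD^{-1}}$ componentwise (exactly the condition making the underlying Gaussians well defined, i.e.\ $\bLambda_n \succ 0$), one has $\bbnu_{0n} < \bbnu_n$ strictly and componentwise. This is a short scalar computation per coordinate: writing $\lambda_i = \left[\bLambda_n\right]_{ii} = 1/\left(d_i - \left[\bbnu_n\right]_i\right) > 0$ with $d_i = \left[\bD^{-1}\right]_{ii}$, the $i$-th entry of $g(\bbnu_n)$ equals $d_i - \beta_n/\left[\lambda_i(\beta_n - \lambda_i)\right]$, and since $\beta_n = \sigma_z^2 + \tr{\bLambda_n} > \lambda_i$ the subtracted term strictly exceeds $1/\lambda_i = d_i - \left[\bbnu_n\right]_i$, forcing $\left[g(\bbnu_n)\right]_i < \left[\bbnu_n\right]_i$. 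This per-coordinate estimate is the technical heart of the argument and the step I expect to be the most delicate, since it must be carried with strict inequalities and relies on both the diagonality of $\bLambda_n$ and the bound $\beta_n > \lambda_i$ (which in turn uses $\sigma_z^2 > 0$).

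Finally I would combine the inequality with the fixed-point conditions. Using damping with $0 < d \le 1$ and the hypothesis $\bbnu_0(0), \bbnu_n(0) \le 0$, an induction shows the iterates stay nonpositive: the update is the sum of a strictly negative increment $d(N-1)\left(g(\bbnu(t)) - \bbnu(t)\right)$ and the nonpositive term $(1-d)\bbnu(t)$, so $\bbnu_n(t) \le 0 < \diag{\bD^{-1}}$ for all $t$ and the inequality remains applicable along the trajectory. Passing to the limit, the fixed point satisfies $\bbnu^\star \le 0$, hence $\bbnu^\star < \diag{\bD^{-1}}$ and therefore $g(\bbnu^\star) < \bbnu^\star$. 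On the other hand, the $e$-condition in \eqref{equ:m and e conditions} together with $\bbnu_{0}^\star = \bbnu_{0n}^\star = g(\bbnu^\star)$ gives $g(\bbnu^\star) = \frac{N}{N-1}\bbnu^\star$; equating the two yields $\frac{N}{N-1}\bbnu^\star < \bbnu^\star$, i.e.\ $\bbnu^\star < 0$, and consequently $\bbnu_0^\star = \frac{N}{N-1}\bbnu^\star < 0$, which is the claim.
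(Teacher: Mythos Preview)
Your proposal is correct and follows essentially the same approach as the paper. Both arguments use induction on $t$ after observing that $\bI\odot(\bgamma_n\bgamma_n^H)=\bI$ collapses \eqref{equ:nu_0n} to a common map $g$, and both establish the key strict inequality $g(\bbnu)<\bbnu$ via the same scalar computation (the paper writes it as $\bbnu_{0n}-\bbnu_n=\diag{-\frac{1}{\beta_n}(\bI-\frac{1}{\beta_n}\bLambda_n)^{-1}}<0$, which is your coordinate-wise bound in matrix form). The only difference is cosmetic: for the strict negativity of the fixed point, the paper shows $\bbnu_n(t)<0$ for all $t\ge 1$ directly from the damped update, whereas you pass to the limit and then invoke $\bvartheta_0^\star=\bvartheta_{0n}^\star$ together with the $e$-condition to get $g(\bbnu^\star)=\frac{N}{N-1}\bbnu^\star$, which combined with $g(\bbnu^\star)<\bbnu^\star$ forces $\bbnu^\star<0$. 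Both routes are valid; yours is slightly more explicit about why the \emph{fixed point} (as opposed to the iterates) is strictly negative.
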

\begin{proof}
	See in Appendix \ref{proof:same nu}.
\end{proof}
%With Theorem \ref{the:same nu}, both the calculation and storage space of the IGA can be reduced significantly. 
Define the arithmetic mean of the SONPs of $\braces{p_n}_{n=1}^N$ as ${\bbnu} \triangleq \frac{1}{N}\sum_{n=1}^N\bbnu_{n}$.
From the above theorem, $\bbnu_{n}, n\in\setposi{N}$, in IGA can be replaced by ${\bbnu}$ in each iteration, and the two iteration modes are equivalent to each other when $\bA$ has constant magnitude entry property. 
Motivated by this observation, we find that a similar property is satisfied between the FONPs of $\braces{p_n}_{n=1}^N$ in IGA.

For an $M\times M$ positive definite diagonal matrix $\bD$,
define 
%$\norm{\btheta}_{\bD} \triangleq  \sqrt{\btheta^H\bD\btheta}$, 
\begin{equation*}
	\norm{\btheta}_{\bD} \triangleq  \sqrt{\btheta^H\bD\btheta},
\end{equation*}
where $\btheta\in\bbC^{M\times 1}$. 
Since $\bD$ is positive definite diagonal, we have $\norm{\btheta}_{\bD} = \norm{\bD^{\frac{1}{2}}\btheta}$.
And $\norm{\cdot}_{\bD}$ is a weighted norm on $\bbC^{M\times 1}$.
Then, we have the following result.
\begin{theorem}\label{the:same theta_n}
	In IGA, the fixed points of all the FONPs of $\braces{p_n}_{n=1}^N$ are asymptotically equal to $\frac{N-1}{N}$ times the fixed point of the FONP of $p_0$, i.e.,
	\begin{equation}\label{equ:FONPs in IGA}
		\lim\limits_{N\to \infty}
		\frac{1}{NM}\sum_{n=1}^{N}\Exp\braces{\lVert\btheta_{n}^{\star} - \frac{N-1}{N}\btheta_{0}^{\star} \rVert^2_{\bD}} = 0,
	\end{equation}
	where $M/N = \alpha > 0$ is a constant.
\end{theorem}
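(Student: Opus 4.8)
The plan is to exploit the constant-modulus hypothesis together with \thref{the:same nu} to kill all $n$-dependence in the second-order quantities, and then to solve the fixed-point equations for the FONPs in closed form. Since $\left|a_{i,j}\right|=1$, every $\bgamma_n$ has unit-modulus entries, so that $\bI\odot\left(\bgamma_n\bgamma_n^H\right)=\bI$ and $\bgamma_n^H\bLambda_n\bgamma_n=\tr{\bLambda_n}$ for all $n$. By \thref{the:same nu}, at the fixed point the SONPs coincide, $\bbnu_n^\star=\bbnu^\star$, so that $\bLambda_n=\bLambda^\star\triangleq\left(\bD^{-1}-\Diag{\bbnu^\star}\right)^{-1}$ and $\beta_n=\beta^\star\triangleq\sigma_z^2+\tr{\bLambda^\star}$ become independent of $n$. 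This is the simplification that renders the FONP fixed-point equations tractable.

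Next I would extract a closed form for $\btheta_n^\star$. Let $\bmu^\star$ denote the common \textsl{a posteriori} mean guaranteed by the $m$-condition in \eqref{equ:m and e conditions}. Applying the Woodbury identity to \eqref{equ:Sigma_n} gives $\bSigma_n^{-1}=\left(\bLambda^\star\right)^{-1}+\frac{1}{\sigma_z^2}\bgamma_n\bgamma_n^H$, so \eqref{equ:mu_n} evaluated at the fixed point yields
\begin{equation*}
	\btheta_n^\star=2\left(\bLambda^\star\right)^{-1}\bmu^\star-\frac{2e_n}{\sigma_z^2}\bgamma_n,\qquad e_n\triangleq y_n-\bgamma_n^H\bmu^\star .
\end{equation*}
Summing over $n$ and substituting the $e$-condition $\btheta_0^\star=\frac{1}{N-1}\sum_{n}\btheta_n^\star$ eliminates the common term $2\left(\bLambda^\star\right)^{-1}\bmu^\star$ and produces the clean identity
\begin{equation*}
	\btheta_n^\star-\frac{N-1}{N}\btheta_0^\star=\frac{2}{\sigma_z^2}\left(\frac{1}{N}\bA^H\be-e_n\bgamma_n\right),\qquad \be\triangleq\by-\bA\bmu^\star .
\end{equation*}
Combining the SONP part of the $e$-condition, $\bbnu_0^\star=\frac{N}{N-1}\bbnu^\star$, with the same relations shows as a by-product that $\bmu^\star$ obeys the normal equation $\bD^{-1}\bmu^\star=\frac{1}{\sigma_z^2}\bA^H\be$, i.e. $\bmu^\star$ is exactly the MMSE estimate $\tilde{\bmu}$ of \eqref{equ:post mean}; this pins the residual $\be$ down to a known object.

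With this identity, the third step is purely to bound the target average. Because $\bgamma_n$ has unit-modulus entries, $\norm{\bgamma_n}_{\bD}^2=\tr{\bD}$, and the elementary decomposition $\sum_n\norm{\bar{\bx}-\bx_n}_{\bD}^2=\sum_n\norm{\bx_n}_{\bD}^2-N\norm{\bar{\bx}}_{\bD}^2$, applied with $\bx_n=e_n\bgamma_n$ and $\bar{\bx}=\frac{1}{N}\bA^H\be$, gives
\begin{equation*}
	\frac{1}{NM}\sum_{n=1}^{N}\Exp\braces{\norm{\btheta_n^\star-\tfrac{N-1}{N}\btheta_0^\star}_{\bD}^2}\le\frac{4\,\tr{\bD}}{\sigma_z^4\,NM}\,\Exp\braces{\norm{\be}^2}.
\end{equation*}
Using $\bmu^\star=\tilde{\bmu}$ one has $\be=\sigma_z^2\bR_{\by}^{-1}\by$ with $\bR_{\by}\triangleq\bA\bD\bA^H+\sigma_z^2\bI$, whence $\Exp\braces{\norm{\be}^2}=\sigma_z^4\tr{\bR_{\by}^{-1}}$ and the bound reduces to $4\,\tr{\bD}\,\tr{\bR_{\by}^{-1}}/(NM)$.

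The hard part will be the final limit, namely $\tr{\bD}\,\tr{\bR_{\by}^{-1}}/(NM)\to0$. As $\tr{\bD}=\Order{M}$, this amounts to $\tr{\bR_{\by}^{-1}}$ being of smaller order than $N$, which is exactly where the structure of $\bA$ must be used: one must preclude a fraction of the spectrum of $\bA\bD\bA^H$ from collapsing to the noise floor, so that $\bR_{\by}$ has no linearly-growing block of eigenvalues near $\sigma_z^2$. I would establish this from the constant-modulus / FFT properties of $\bA$ already invoked in the convergence analysis, controlling the spectrum of $\bA\bD\bA^H$ in the regime $M/N=\alpha$; granting that spectral estimate, the displayed bound tends to $0$ and the theorem follows. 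I expect the algebra of Steps~1--3 to be routine once \thref{the:same nu} is in hand, with essentially all the difficulty concentrated in this last spectral estimate.
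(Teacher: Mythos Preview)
Your Steps 1--3 are essentially the paper's proof, in fact slightly cleaner: the paper obtains
\[
\btheta_n^\star-\tfrac{N-1}{N}\btheta_0^\star=\tfrac{2}{N}\bD^{-1}\bmu_0^\star-\tfrac{2e_n}{\sigma_z^2}\bgamma_n,
\]
then applies $\norm{\ba+\bb}_{\bD}^2\le 2\norm{\ba}_{\bD}^2+2\norm{\bb}_{\bD}^2$, whereas you go directly through the $e$-condition sum and the variance decomposition. Both lead to the same controlling quantity $\tr{\bD}\cdot\Exp\braces{\norm{\be}^2}/(NM)$, and both use $\bmu_0^\star=\tilde{\bmu}$ (the paper cites this from \cite{IGA}) together with $\be=\sigma_z^2\bR_{yy}^{-1}\by$.

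Where you diverge is the final limit. You posit $\tr{\bD}=\Order{M}$ and therefore anticipate needing $\tr{\bR_{yy}^{-1}}=o(N)$, which would indeed require a nontrivial spectral estimate on $\bA\bD\bA^H$. The paper avoids this entirely: it simply uses the trivial bound $\tr{\bR_{yy}^{-1}}\le N/\sigma_z^2$ (from $\bR_{yy}\succeq\sigma_z^2\bI$) and then invokes that $\tr{\bD}$ is \emph{bounded}, so that the whole expression is at most $\tfrac{8}{NM}+\tfrac{8\tr{\bD}}{\sigma_z^2 M}\to 0$. The boundedness of $\tr{\bD}$ is a modeling assumption (total beam-domain channel power is fixed, not scaling with the resolution $M$), stated in the last line of the paper's proof but not in the theorem itself. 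With that assumption your ``hard part'' disappears and no spectral analysis of $\bA$ is needed.
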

\begin{proof}
	See in Appendix \ref{{proof:same theta_n}}.
\end{proof}
%The dimensions of $\braces{\btheta_n^\star}_{n=0}^N$ above are $M$.
Theorem \ref{the:same theta_n} illustrates that as $N$ and $M$ tend to infinity, the average error between each component in the fixed point of the FONP of $p_n, n\in \setposi{N}$, and each component in the fixed point of the FONP of $p_0$ is asymptotically equal to zero.
In massive MIMO-OFDM channel estimation, $N$ is usually quite large. 
When the number of users is large, $M$ can be also large even though the channel sparsity exists. In this case, the fixed point of the FONP of 
$p_n, n\in\setposi{N}$, tends to be equal to each other,  and the value can be obtained directly from the $e$-condition in \eqref{equ:m and e conditions}.

%Define the arithmetic mean of the NPs of $\braces{p_n}_{n=1}^N$ as ${\bvartheta} \triangleq \frac{1}{N}\sum_{n=1}^N\bvartheta_{n}$.
%Inspired by Theorems \ref{the:same nu} and \ref{the:same theta_n}, we will use
%${\bvartheta}$ instead of $\bvartheta_{n}, n\in\setposi{N}$, to simplify the iteration of IGA.

\section{Efficient IGA}
In this section, we simplify the iteration of IGA and propose EIGA by replacing the original NPs of $\braces{p_n}_{n=1}^N$ with a common NP.
Then, the efficient implementation with FFT of EIGA is provided.

\subsection{EIGA}

%Instead of \eqref{equ:maMIMO rece signal 4}, we use the following received signal model to develop EIGA in this section,
%\begin{equation}\label{equ:new rece signal}
%	\by = \bA\bh + \tilde{\bz},
%\end{equation} 
%where $\tilde{\bz}\sim\mathcal{CN}\left(\mathbf{0},\tilde{\sigma}_z^2\bI\right)$, $\tilde{\sigma}_z^2$ is a positive constant, and the other variables are the same as those in \eqref{equ:maMIMO rece signal 4}.
%Compared to the real received signal model \eqref{equ:maMIMO rece signal 4}, \eqref{equ:new rece signal} is a virtual received signal model, where we fictitiously treat the noise variance as $\tilde{\sigma}_z^2$ rather than the true one, i.e., $\sigma_z^2$.
%The goal of introducing \eqref{equ:new rece signal} is to change the input noise variance of EIGA from $\sigma_z^2$ to $\tilde{\sigma}_z^2$. 
%We shall see that by determining the value of $\tilde{\sigma}_z^2$ based on the true noise variance $\sigma_z^2$,
%EIGA can obtain an estimate of $\bh$ which is asymptotically equal to the \textsl{a posteriori} mean $\tilde{\bmu}$ in \eqref{equ:post mean}.
%In other words, we can improve the estimation performance of EIGA by compensating its input noise variance.

Define the arithmetic mean of the NPs of $\braces{p_n\left(\bh;\bvartheta_{n}\right)}_{n=1}^N$ as ${\bvartheta} \triangleq \left({1}/{N}\right)\sum\nolimits_{n=1}^N\bvartheta_{n}$.
We use
${\bvartheta}$ instead of $\bvartheta_{n}, n\in\setposi{N}$, to simplify the iteration of IGA.
This replacement allows more efficient implementation.

The input is the same as that of IGA.
At the initialization, we set the counter $t=0$ and choose the damping $d$, where $0<d\le 1$.
We shall see more explicit ranges of $d$ in the next section.
We initialize the NP for $p_0$ as $\bvartheta_{0}\left(0\right)$ and initialize the NP for $\braces{p_n}_{n=1}^N$ as $\bvartheta\left(0\right)$ while ensuring that $\bbnu_{0}\left(0\right), \bbnu\left(0\right) \le 0$.
We refer to $\bvartheta$ as the common NP of $\braces{p_n}_{n=1}^N$ (abbreviated as the common NP).
Given the common NP ${\bvartheta}\left(t\right) = \funf[{\btheta}\left(t\right),{\bbnu}\left(t\right)]$ at the $t$-th iteration, we $m$-project $p_n\left(\bh;{\bvartheta}\left(t\right)\right)$ onto the OBM and obtains the $m$-projection, denoted as $p_0\left(\bh;{\bvartheta}_{0n}\left(t\right)\right)$, where $n\in\setposi{N}$.
%The NPs of the $m$-projections in EIGA are denoted as $\tilde{\bvartheta}_{0n}$ to distinguish them from those of IGA. 
Substituting ${\bvartheta}\left(t\right) = \funf[{\btheta}\left(t\right),{\bbnu}\left(t\right)]$ into \eqref{equ:Lambda_n}, \eqref{equ:beta_n} and \eqref{equ:vartheta_0n}, i.e., replacing $\bvartheta_{n} = \funf[\btheta_n,\bbnu_n]$ with ${\bvartheta}\left(t\right) = \funf[{\btheta}\left(t\right),{\bbnu}\left(t\right)]$, and considering that $\bA$ is of constant magnitude entries, 
${\bvartheta}_{0n}\left(t\right) = \funf[{\btheta}_{0n}\left(t\right),\bbnu_{0n}\left(t\right)], n\in \setposi{N}$, is now given by
\begin{subequations}\label{equ:tilde vartheta_0n}
	 \begin{equation}\label{equ:tilde theta_0n}
		\begin{split}
		{\btheta}_{0n}\left(t\right) = &\left( \bI - \frac{1}{\beta\left(\bbnu\left(t\right)\right)}\bLambda\left(\bbnu\left(t\right)\right) \right)^{-1}\\
		&\times\left( \frac{2y_n - \bgamma_n^H\bLambda\left(\bbnu\left(t\right)\right){\btheta}\left(t\right)}{\beta\left(\bbnu\left(t\right)\right)}\bgamma_n + {\btheta}\left(t\right) \right),
		\end{split}
     \end{equation}
    \begin{equation}\label{equ:tilde nu}
	\bbnu_{0n}\left(t\right) = \diag{\bD^{-1}-\left( \bLambda\left(\bbnu\left(t\right)\right) - \frac{1}{\beta\left(\bbnu\left(t\right)\right)}\bLambda^2\left(t\right) \right)^{-1}},
    \end{equation}
	\begin{equation}\label{equ:Lambda}
	\bLambda\left(\bbnu\left(t\right)\right) = \left( \bD^{-1} - \Diag{{\bbnu}\left(t\right)}\right)^{-1},
    \end{equation}
    \begin{equation}\label{equ:beta}
	\beta\left(\bbnu\left(t\right)\right) = \tilde{\sigma}_z^2 + \tr{\bLambda\left(\bbnu\left(t\right)\right)}.
    \end{equation}
\end{subequations}
Note that in \eqref{equ:beta} $\sigma_z^2$ is replaced with $\tilde{\sigma}_z^2$.
%This is because different from the original IGA, we replace the exact noise variance $\sigma_z^2$ with another value $\tilde{\sigma}_z^2$ throughout the whole process, where $\tilde{\sigma}_z^2$ is determined by $\sigma_z^2$.
%For instance, the exact noise variance is $1$, $0.5$ will be used as the noise variance in the algorithm.
We refer to $\tilde{\sigma}_z^2$ as the virtual noise variance.
This is a common technique has been used to improve the performance in iterative Bayesian inference methods\cite{8074806,6998861}, since they do not necessarily have the best performance when the exact $\sigma_z^2$ is used.
In the next section, we will give a closed-form expression of $\tilde{\sigma}_z^2$. 
Its calculation is simple, yet we will show that it could improve the estimation performance.

From \eqref{equ:tilde nu}, we can find that the SONP of the $m$-projection is independent of $n$. 
Thus, we can obtain $\bbnu_{0n}\left(t\right) = \bbnu_{0n'}\left(t\right), n,n'\in \setposi{N}$.
We now present the updatings of the parameters.
Since we replace $\bvartheta_{n}\left(t\right), n\in\setposi{N}$, with ${\bvartheta}\left(t\right)$, the approximation item $\bxi_{n}\left(t\right)$ can be re-expressed as
\begin{equation}\label{equ:new xi_n}
	\bxi_{n}\left(t\right) = {\bvartheta}_{0n}\left(t\right) - {\bvartheta}\left(t\right), n\in\setposi{N}.
\end{equation}
Then, from \eqref{equ:update of vartheta_n in IGA} , $\braces{\bvartheta_{n}\left(t+1\right)}_{n=1}^N$ can be obtained.
%Although it is shown that the SONPs of $\braces{\bvartheta_{n}\left(t+1\right)}_{n=1}^N$ are equal to each other at each iteration of IGA, the FONPs of $\braces{\bvartheta_{n}\left(t+1\right)}_{n=1}^N$ are only asymptotically equal to each other at the fixed point, and thus, $\braces{\bvartheta_{n}\left(t+1\right)}_{n=1}^N$ are not necessarily equal to each other at each iteration.
To update the common NP $\bvartheta$,
we calculate $\bvartheta\left(t+1\right)$ as the arithmetic mean of $\braces{\bvartheta_{n}\left(t+1\right)}_{n=1}^N$,
\begin{align}\label{equ:update of overlinevartheta in RIGA}
	{\bvartheta}\left(t+1\right) & = \frac{1}{N}\sum_{n=1}^{N}\bvartheta_{n}\left(t+1\right) \nonumber \\ 
&\equaa \frac{d}{N}\sum_{n=1}^{N}\sum_{n'= 1}^N\left(\bxi_{n'}\left(t\right) - \bxi_{n}\left(t\right)\right) \!+\! \frac{1-d}{N}\sum_{n=1}^N\bvartheta_{n}\left(t\right) \nonumber \\ 
& \equab \frac{d\left(N-1\right)}{N}\sum_{n=1}^{N}\bxi_{n}\left(t\right) + \left(1-d\right){\bvartheta}\left(t\right)\\ 
&\equac \frac{d\left(N-1\right)}{N}\sum_{n=1}^{N}{\bvartheta}_{0n}\left(t\right) + \left(1-dN\right){\bvartheta}\left(t\right),  \nonumber
\end{align} 
where $\left(\textrm{a}\right)$ comes from \eqref{equ:update of vartheta_n in IGA}, $\left( \textrm{b}\right)$ comes from that if $\bvartheta$ is updated as above, then at each iteration, $\bvartheta\left(t\right) = \frac{1}{N}\sum_{n=1}^{N}\bvartheta_{n}\left(t\right)$ can be obtained, and $\left( \textrm{c}\right)$ comes from \eqref{equ:new xi_n}.  
From \eqref{equ:update of vartheta_0 in IGA} ,
the update of the NP of $p_0\left(\bh;\bvartheta_{0}\right)$ can be modified as
\begin{align}\label{equ:update of vartheta_0 in RIGA 1}
		\!\!\!\bvartheta_{0}\left(t+1\right) 
		= d\sum_{n=1}^{N}{\bvartheta}_{0n}\left(t\right) -dN{\bvartheta}\left(t\right) + \left(1-d\right)\bvartheta_{0}\left(t\right).
\end{align}

We now discuss  the update of $\bvartheta_0$ in \eqref{equ:update of vartheta_0 in RIGA 1}, which is derived directly from the non-damping version of  \eqref{equ:update of overlinevartheta in RIGA} and \eqref{equ:update of vartheta_0 in RIGA 1}. 
Setting $d=1$ in \eqref{equ:update of overlinevartheta in RIGA} and \eqref{equ:update of vartheta_0 in RIGA 1}, and after some calculation, we can obtain 
%$	\left(N-1\right)\bvartheta_0\left(t+1\right) = N{\bvartheta}\left(t+1\right)$.
\begin{equation*}\label{equ:e-condtion in iteration}
	\left(N-1\right)\bvartheta_0\left(t+1\right) = N{\bvartheta}\left(t+1\right).
\end{equation*}
Then, when $0<d<1$,
if we constrain $\left(N-1\right)\bvartheta_0\left(0\right) = N{\bvartheta}\left(0\right)$ at the initialization,  at each iteration of \eqref{equ:update of overlinevartheta in RIGA} and \eqref{equ:update of vartheta_0 in RIGA 1}, we still have $\left(N-1\right)\bvartheta_0\left(t\right) = N{\bvartheta}\left(t\right), \forall t$. 
In summary, when the initialization satisfies $\left(N-1\right)\bvartheta_0\left(0\right) = N{\bvartheta}\left(0\right)$, the update of the NPs can be summarized as follows: calculate $\bvartheta\left(t+1\right)$ as in the last equation of \eqref{equ:update of overlinevartheta in RIGA}, and calculate $\bvartheta_{0}\left(t+1\right)$ as
%	\begin{equation}\label{equ:update of overline theta in RIGA2}
%		\bvartheta\left(t+1\right) = \frac{d\left(N-1\right)}{N}\sum_{n=1}^{N}{\bvartheta}_{0n}\left(t\right) + \left(1-dN\right){\bvartheta}\left(t\right),
%	\end{equation}
	\begin{equation}\label{equ:update of vartheta_0 in RIGA2}
		\bvartheta_{0}\left(t+1\right) = \frac{N}{N-1}\bvartheta\left(t+1\right).
	\end{equation}	
Moreover, the detailed expression of 	
\begin{equation*}
	\bvartheta\left(t+1\right) = \funf[\btheta\left(t+1\right),\bbnu\left(t+1\right)]
\end{equation*}
can be expressed as follows:
\begin{subequations}\label{equ:update of SIGA}
	\begin{equation}\label{equ:update of nu_0}
		\bbnu\left(t+1\right) = \tilde{\bg}\left(\bbnu\left(t\right)\right) \triangleq d \bg\left(\bbnu\left(t\right)\right) + \left(1-d\right)\bbnu\left(t\right),
	\end{equation}
	\begin{equation}\label{equ:function g text}
		\begin{split}
			\bg\left( \bbnu\left(t\right) \right) 
			\!=\! \!\left( 1\! - \! N  \right) \! \diag{\! \left(  \beta\left(\bbnu\left(t\right)\right)\bI - \bLambda\left(\bbnu\left(t\right)\right) \right)^{-1}  },
		\end{split}
	\end{equation}
\end{subequations}
\begin{subequations}\label{equ:intermideate variable}
	\begin{equation}\label{equ:update of theta_0}
		\btheta\left(t+1\right) 
		= \tilde{\bB}\left(\bbnu\left(t\right)\right)\btheta\left(t\right) + \bb\left(\bbnu\left(t\right)\right),
	\end{equation}
	\begin{equation}\label{equ:tilde B}
		\tilde{\bB}\left(\bbnu\left(t\right)\right) \triangleq d\bB\left(\bbnu\left(t\right)\right) + \left(1-d\right)\bI,
	\end{equation}
	\begin{equation}\label{equ:B}
		\begin{split}
			\bB\left(\bbnu\left(t\right)\right) = &\frac{N-1}{\beta\left(\bbnu\left(t\right)\right)}\left( \bI - \frac{1}{\beta\left(\bbnu\left(t\right)\right)}\bLambda\left(\bbnu\left(t\right)\right) \right)^{-1} \\
			&\ \ \times\left(\bI - \frac{1}{N}\bA^H\bA\right)\bLambda\left(\bbnu\left(t\right)\right),	
		\end{split}
	\end{equation}
	\begin{equation}
		\bb\left(\bbnu\left(t\right)\right) = \frac{2d\left(N-1\right)}{N\beta\left(\bbnu\left(t\right)\right)}\left(\bI - \frac{1}{\beta\left(\bbnu\left(t\right)\right)}\bLambda\left(\bbnu\left(t\right)\right)\right)^{-1}\bA^H\by,
	\end{equation}
\end{subequations}
where \eqref{equ:update of SIGA} is the iterating system of $\bbnu$, \eqref{equ:intermideate variable} is the iterating system of $\btheta$, and
the derivations are provided in Appendix \ref{proof:calculation of tilde_bvartheta_s}.
All the above matrices that need to be inverted are also shown to be invertible at each iteration in Appendix \ref{proof:calculation of tilde_bvartheta_s}, which guarantees that \eqref{equ:update of SIGA} and \eqref{equ:intermideate variable} are valid.
$\tilde{\bB}$ and $\bB$ are two matrix functions with $\bbnu\left(t\right)$ being the variable, i.e., $\tilde{\bB},\bB:\bbR^{M} \to \bbC^{M\times M}$,
and $\bb$, $\tilde{\bg}$ and $\bg$ are three vector functions with $\bbnu\left(t\right)$ being the variable, i.e., $\bb:\bbR^{M} \to \bbC^{M}$, 
and $ \tilde{\bg}, \bg:\bbR^{M} \to \bbR^{M}$. 
In \eqref{equ:update of SIGA} and \eqref{equ:intermideate variable}, the common NP $\bvartheta\left(t+1\right)$ is directly calculated without the step for calculating the approximation item $\bxi_{n}\left(t\right)$.
From \eqref{equ:update of vartheta_0 in RIGA2}, we can see that the NP of $p_0$ in each iteration relies on the common NP.
Therefore, its updating in the iteration process is not necessary.
We only need to calculate the NP  of $p_0$ with the resulting common NP from the iteration process.
We refer the above approach as EIGA and summarize it in Algorithm \ref{Alg:RIGA}.
The initialization of $\bbnu$ will be discussed in detail in Sec. \ref{sec:convergence} , and the range guarantees the convergence of EIGA.
\begin{algorithm}[htbp]
	\SetAlgoNoLine % 去掉算法中的竖线
	%	\SetKwInOut{Input}{\textbf{Input}}\SetKwInOut{Output}{\textbf{Output}} % 替换关键词
%	\footnotesize % 缩小表格字体
	\caption{EIGA}
	\label{Alg:RIGA}
	
	\KwIn{The covariance $\bD$ of the priori distribution $p\left(\bh\right)$, the received signal $\by$, the  noise power ${\sigma}_z^2$ and the maximal iteration number $t_{\mathrm{max}}$.}
	
	\textbf{Initialization:} set $t=0$, calculate the virtual noise variance $\tilde{\sigma}_z^2$ as $\tilde{\sigma}_z^2 = f\left(\sigma_z^2\right)$, where $f\left(\cdot\right)$ is given by \eqref{equ:function f beta},
	set damping $d$, where $0< d\le 1$, initialize the common NP as $\bvartheta\left(0\right) = $ $ \funf[\btheta\left(0\right),\bbnu\left(0\right)]$ and ensure $-\frac{N-1}{\tilde{\sigma}_z^2}\le\bbnu\left(0\right)\le0$; 
	%	initialize $\bvartheta_0\left(0\right)$ as ${\bvartheta_0}\left(0\right) = \frac{N}{N-1}\bvartheta\left(0\right)$;
	
	\Repeat{\rm{Convergence or $t > t_{\mathrm{max}}$}}{
		1. Update $\bvartheta = \funf[\btheta,\bbnu]$ as \eqref{equ:update of SIGA} and \eqref{equ:intermideate variable}, where $\bLambda\left(\bbnu\left(t\right)\right)$ and $\beta\left(\bbnu\left(t\right)\right)$ are given by \eqref{equ:Lambda} and \eqref{equ:beta}, respectively;\\
		%		2. Update $\bvartheta_{0}$ as \eqref{equ:update of vartheta_0 in RIGA2};\\
		2. $t = t+1$;}
	
	\KwOut{\rm{Calculate the NP of $p_0\left(\bh;\bvartheta_{0}\right)$ as $\bvartheta_{0} = \frac{N}{N-1}\bvartheta\left(t\right)$.	The mean and variance of the approximate marginal, $p\left( h_i|\by\right)$, $i\in \setposi{M}$, are given by the $i$-th component of $\bmu_0$ and $\diag{\bSigma_{0}}$, respectively, where $\bmu_0$ and $\bSigma_{0}$ are calculated by \eqref{equ:mu_0} and \eqref{equ:Sigma_0}, respectively. }}
\end{algorithm}

%Before proceeding, we further discuss one critical difference between the IGA and EIGA.
%Denote an intermediate variable in the last equation of \eqref{equ:update of overlinevartheta in RIGA} as $\bvartheta_s  \triangleq \sum_{n=1}^N\bvartheta_{0n} = \funf[\btheta_s, \bbnu_s]$ (omit the counter $t$).
%Then, $\btheta_s$ and $\bbnu_s$ are given by \eqref{equ:calculation of tilde_betheta_s} and \eqref{equ:calculation of nu_s}, respectively.
%The computational complexity of $\bbnu_s$ in  \eqref{equ:calculation of nu_s} is $\mathcal{O}\left(M\right)$ since $\bD$ and $\bLambda\left(\bbnu\left(t\right)\right)$ are both diagonal.
%The complexity of $\btheta_s$ in \eqref{equ:calculation of tilde_betheta_s} is $\mathcal{O}\left(NM\right)$ since two matrix-vector
%multiplications by $\bA^H$ and $\bA$ are involved. Nevertheless, this is the worst case. 
%The complexity of $\btheta_s$ can be significantly reduced when the specific structure of $\bA$ is employed as will be clear in the Sec. \ref{sec:fixed point of simplified IGA}.
%In IGA, we also calculate the sum of the $m$-projection NPs (\eqref{equ:xi_n} and \eqref{equ:update of NP in IGA}). 
%However, limited by the structure of the NPs, it is difficult to find a way to calculate the sum of them directly.
%forcing us to calculate the NPs of the $m$-projections separately before summing them, which results in that the complexity of IGA cannot be further reduced even $\bA$ has some good properties.

\subsection{Efficient Implementation}\label{sec:efficient implementation of SIGA}
The computational complexity of each iteration of EIGA mainly comes from the two matrix-vector multiplications by $\bA$ and $\bA^H$ in \eqref{equ:intermideate variable}.
In this subsection, we focus on \eqref{equ:intermideate variable} and present an efficient implementation.
We assume that the adjustable phase shift pilots (APSPs) \cite{channelaqyou} are adopted as the training signal, which is an extension of the conventional phase shift orthogonal pilots in LTE and 5G NR \cite{dahlman20134g,dahlman20205g}.
Note that any other pilot sequences with constant magnitude can be adopted. 
%APSP is one example to illustrate the efficient implementation for EIGA.
We set the transmit power of the training signal for each user to $1$.
Then, the APSP for the user $k$ is set to be $\bP_k = \Diag{\br\left(n_k\right)}\bP$, where
%$\br\left(n_k\right) = \left[\exp\braces{-\barjmath 2\pi\frac{n_kN_1}{F_\tau N_p}}, \cdots, \exp\braces{-\barjmath 2\pi\frac{n_kN_2}{F_\tau N_p}} \right]^T \in \bbC^{N_p \times 1}$,
\begin{equation*}
	\br\left(n_k\right) = \left[\exp\braces{-\barjmath 2\pi\frac{n_kN_1}{F_\tau N_p}}, \cdots, \exp\braces{-\barjmath 2\pi\frac{n_kN_2}{F_\tau N_p}} \right]^T ,
\end{equation*} 
$n_k \in \braces{0, 1, \cdots, F_\tau N_p-1}$ is the phase shift scheduled for the user $k$, and $\bP = \Diag{\bp}$ is the  basic pilot satisfying $\bP\bP^H = \bI$.  
Given the channel power matrix $\bOmega_k, k\in \setposi{K}$, we can use \cite[Algorithm 1]{channelaqyou} to determine the value of $n_k$ and thus $\bP_k, k\in \setposi{K}$.
Define a partial DFT matrix of $F\tau N_p$ points as 
\begin{equation}\label{equ:def of Fd}
	\bF_d \triangleq \left[\br\left(0\right), \ \br\left(1\right), \ \cdots, \ \br\left( F_\tau N_p - 1 \right)   \right]\in \bbC^{N_p \times F_\tau N_p}
\end{equation}
and a permutation matrix as
\begin{equation}
	\bPi_{n_k} \triangleq \left[  
	\begin{matrix}
		\bO \ &\bI_{F_\tau N_p - n_k} \\ 
		\bI_{n_k} \ & \bO
	\end{matrix}
	\right] \in \bbC^{F_\tau N_p \times F_\tau N_p}.
\end{equation}
Substituting $\bP_k$ and \eqref{equ:SF beam channel} into \eqref{equ:maMIMO rece signal}, we can obtain 
%$\bY = \bV\bH_a\bF_d^T\bP + \bZ$, 
\begin{equation*}
	\bY = \bV\bH_a\bF_d^T\bP + \bZ,
\end{equation*}
where 
\begin{equation*}
	\bH_a = \sum_{k=1}^{K}\bH_k^{\textrm{e}}\bPi_{n_k},
\end{equation*}
$\bH_k^{\textrm{e}} = \left[ \bH_k, \ \bO \right] \in \bbC^{F_aN_r \times F_\tau N_p}, k\in \setposi{K}$, is the extended beam domain channel matrix for the user $k$. 
Define $\bOmega_a \triangleq  \sum_{k=1}^{K}\bOmega_k^{\textrm{e}}\bPi_{n_k}$ with $\bOmega_k^{\textrm{e}} \triangleq \left[ \bOmega_k, \ \bO \right] \in \bbC^{F_aN_r \times F_\tau N_p}$. 
It is not difficult to check that $\bOmega_a$ is the power matrix of $\bH_a$.
Then, we can obtain 
%$\by_p = \mathrm{vec} \braces{\bY\bP^H} = \tilde{\bA}_p\tilde{\bh}_a + \bz_p$,
\begin{equation*}\label{equ:rece signal 1}
	\by_p = \mathrm{vec} \braces{\bY\bP^H} = \tilde{\bA}_p\tilde{\bh}_a + \bz_p,
\end{equation*}
where 
\begin{equation}\label{equ:def of tilde Ap}
	\tilde{\bA}_p = \bF_d \otimes \bV \in \bbC^{N\times F_aF_\tau N},
\end{equation}
$\tilde{\bh}_a \in \bbC^{F_a F_\tau N \times 1}$ is the vectorization of $\bH_a$, and $\bz_p \in \bbC^{N\times 1}$ is the  vectorization of $\bZ\bP^H$.
Since $\bP^H$ is unitary, we can readily obtain that $\bz_p \sim \mathcal{CN}\left(\mathbf{0},\sigma_z^2\bI\right)$.
Define the number of non-zero components in $\bomega_a \triangleq \mathrm{vec}\braces{\bOmega_a}$ as $M_a \triangleq \norm{\bomega_a}_0$ and the indexes of non-zero components in $\bomega_a$ as $\mathcal{Q} \triangleq \braces{q_1, q_2, \cdots, q_{M_a}}$, where $1\le q_i\le F_a F_\tau N$.
The extraction matrix is defined as 
\begin{equation}
\bE_p \triangleq \left[\be_{q_1}, \ \be_{q_2},  \cdots, \be_{q_{M_a}}\right] \in \bbC^{F_a F_\tau N \times M_a},
\end{equation}
where $\be_i\in\bbC^{\tilde{M}\times 1}, i\in\mathcal{P}$ is the $i$-th column of the $\tilde{M}$ dimensional identity matrix.
Then, $\by_p$ can be re-expressed as 
%$\by_p = \bA_p\bh_a + \bz_p$,
\begin{equation}\label{equ:rece signal 2}
	\by_p = \bA_p\bh_a + \bz_p,
\end{equation}
where 
\begin{equation}\label{equ:A in APSP}
	\bA_p  = \tilde{\bA}_p \bE_p \in \bbC^{N \times M_a},
\end{equation}
$\bh_a = \bE_p^T\tilde{\bh}_a \in \bbC^{M_a\times 1}$, $\bh_a \sim \mathcal{CN}\left(\mathbf{0}, \bD_a\right)$ and $\bD_a \triangleq \Diag{\bE_p^T\bomega_a}$ is positive definite diagonal. 
In this case, at each iteration of EIGA, \eqref{equ:intermideate variable}
can be rewritten as (we omit the counter $t$ on the right-side of the equation for convenience)
	\begin{equation}\label{equ:theta t+1}
		\begin{split}
	    	\btheta\left(t+1\right) = &\frac{2}{\beta}\bJ_p\bA_p^{H}\by_p - \frac{1}{\beta}\bJ_p\bA_p^{H}\bA_p\bLambda\btheta\left(t\right) \\
	    	 &+ \left[ N\bJ_p + \left(1-dN\right)\bI \right]\btheta\left(t\right),
		\end{split}
	\end{equation}
%\begin{equation}
%	\bJ_p = \frac{d\left(N-1\right)}{N}\left( \bI - \frac{1}{\beta}\bLambda \right)^{-1}.
%\end{equation}
where $\bJ_p = \frac{d\left(N-1\right)}{N}\left( \bI - \frac{1}{\beta}\bLambda \right)^{-1}$.
Since both $\bJ_p$ and $\bLambda$ are diagonal, the complexity in \eqref{equ:theta t+1} mainly comes from $\bA_p^H\by_p$, $\bA_p^H\bs$ and $\bA_p\bu$, where $\bs = \bA_p\bLambda\btheta\left(t\right) \in \bbC^{N \times 1}$ and $\bu = \bLambda\btheta\left(t\right) \in \bbC^{M_a \times 1}$. 
For $\bA_p\bu$, we have 
\begin{equation*}
	\bA_p\bu = \tilde{\bA}_p\tilde{\bu} = \mathrm{vec}\braces{ \bV\tilde{\bU}\bF_d^T },
\end{equation*}
where $\tilde{\bu} = \bE_p\bu \in \bbC^{F_aF_\tau N\times 1}$, $\tilde{\bU} \in \bbC^{F_aN_r \times F_\tau N_p}$ and $\vecc\braces{\tilde{\bU}} = \tilde{\bu}$.
Then, $\bV\tilde{\bU}\bF_d^T$ can be calculated by FFT since $\bV$ is the Kronecker product of two partial DFT matrices and $\bF_d$ is a partial DFT matrix. The complexity of the efficient implementation of $\bA_p\bu$ is $\mathcal{O}\left(C\right)$, where 
\begin{equation}\label{equ:C}
	\begin{split}
			C = &N\Big[ F_aF_\tau \logtwo{F_vN_{r,v}} + F_hF_\tau \logtwo{F_hN_{r,h}} \\
		&+ F_\tau \logtwo{F_\tau N_p} \Big].
	\end{split}
\end{equation}
For the calculation of $\bA_p^H\bs$, we have that
\begin{equation*}
	\bA_p^H\bs = \bE_p^T\tilde{\bA}_p^H\bs = \bE_p^T\vecc\braces{\bV^H\bS\bF_d^*},
\end{equation*}
where $\bS \in \bbC^{N_r\times N_p}$ and $\vecc\braces{\bS} = \bs$.
We first compute $\bS' \triangleq \bS\bF_d^* \in\bbC^{N_r\times F_\tau N_p}$ and then $\bV^H\bS'$. 
Both of the above two calculations can be implemented through inverse FFT (IFFT).
Then, $\bE_p^T\tilde{\bA}_p^H\bs$ is equivalent to extracting the components from $\tilde{\bA}_p^H\bs$ with the indexes determined by $\mathcal{Q}$.
The complexity of the efficient implementation of $\tilde{\bA}_p^H\bs$ is $\mathcal{O}\left({C}\right)$, too.
As for the calculation of $\bA_p^H\by_p$, since it is the same at each iteration, we only need to calculate it once.
The calculation of $\bA_p^H\by_p$ and the corresponding complexities are the same as that of  $\bA_p^H\bs$ in one iteration.  
%In the worst case, the computational complexity of EIGA is $\mathcal{O}\left(NM_a\right)$ for each iteration.
%When $\bA$ is structured, e.g., partial DFT, the computational complexity of it can be reduced.

%Before proceeding, we further discuss the computational complexity of the algorithms used in the simulation.

%\begin{figure*}[t!]
%	\begin{minipage}[t]{0.32\linewidth}
%	\centering
%\includegraphics[width=1\textwidth]{}
%\caption{\small NMSE performance of simplified IGA compared with AMP, IGA and MMSE.}
%\label{fig:SNR}
%	\end{minipage}
%	\hfill
%	\begin{minipage}[t]{0.32\linewidth}
%
%    \end{minipage}
%	\hfill
%	\begin{minipage}[t]{0.32\textwidth}
%
%    \end{minipage}
%\end{figure*}

\section{Convergence and Fixed point}
In this section, we give the convergence and fixed point analyses of EIGA.
We prove that with a sufficient small damping, EIGA is guaranteed to converge.
We determine a wider range of damping through the specific properties of the measurement matrix.
Then, we show that at the fixed point, the \textsl{a posteriori} mean obtained by EIGA is asymptotically optimal. 
The calculation of the virtual noise variance $\tilde{\sigma}_z^2$ will be also presented.

\subsection{Convergence}\label{sec:convergence}
We begin with following lemma related to the range of $\bbnu$.
\begin{lemma}\label{prop:0}
	Given a finite initialization  $\bvartheta\left(0\right) = \funf[\btheta\left(0\right),\bbnu\left(0\right)]$ with $-\frac{N-1}{\tilde{\sigma}_z^2}\mathbf{1}\le  \bbnu\left(0\right) \le  \bzero$, then at each iteration,  $\bvartheta\left(t\right) = \funf[\btheta\left(t\right),\bbnu\left(t\right)]$ satisfies: $\btheta\left(t\right)$ and $\bbnu\left(t\right)$ are finite, and $\bbnu\left(t\right) \le  \bzero$.
	Specifically, we have $\bbnu\left(0\right) \le \bzero$ and $\bbnu\left(t\right) < \bzero, t \ge 1$.
\end{lemma}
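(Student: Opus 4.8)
The plan is to exploit the fact that the $\bbnu$-recursion \eqref{equ:update of SIGA} is \emph{decoupled} from $\btheta$: the map $\tilde{\bg}$ depends on $\bbnu\left(t\right)$ alone. So I would first establish the claims for $\bbnu$ by induction, showing that the box
\begin{equation*}
	\mathcal{B} \triangleq \left\{ \bbnu : -\tfrac{N-1}{\tilde{\sigma}_z^2}\mathbf{1}\le \bbnu \le \bzero \right\}
\end{equation*}
is an invariant set of the iteration, and only then deduce finiteness of $\btheta$ from the affine recursion \eqref{equ:intermideate variable}.

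The first step is to record the component-wise structure on $\mathcal{B}$. Since $\bD$ is positive definite diagonal and $\Diag{\bbnu}$ is diagonal, $\bLambda\left(\bbnu\right)$ is diagonal with entries $\left[\bLambda\left(\bbnu\right)\right]_{ii} = \left(d_i^{-1}-\nu_i\right)^{-1}$, where $d_i = \left[\bD\right]_{ii}>0$ and $\nu_i$ is the $i$-th component of $\bbnu$. For any $\bbnu\le\bzero$ this yields $0<\left[\bLambda\left(\bbnu\right)\right]_{ii}\le d_i$, so $\bLambda\left(\bbnu\right)\succ 0$ is invertible and $\beta\left(\bbnu\right) = \tilde{\sigma}_z^2 + \tr{\bLambda\left(\bbnu\right)}\ge \tilde{\sigma}_z^2>0$. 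The heart of the argument is the cancellation identity $\beta\left(\bbnu\right) - \left[\bLambda\left(\bbnu\right)\right]_{ii} = \tilde{\sigma}_z^2 + \sum_{j\neq i}\left[\bLambda\left(\bbnu\right)\right]_{jj}$, which forces $\beta\left(\bbnu\right)-\left[\bLambda\left(\bbnu\right)\right]_{ii}\ge \tilde{\sigma}_z^2$. Hence $\beta\bI-\bLambda$ is invertible and the $i$-th component $\left[\bg\left(\bbnu\right)\right]_i = \left(1-N\right)/\left(\beta\left(\bbnu\right)-\left[\bLambda\left(\bbnu\right)\right]_{ii}\right)$ satisfies $-\tfrac{N-1}{\tilde{\sigma}_z^2}\le \left[\bg\left(\bbnu\right)\right]_i<0$, using $N>1$ for the strict upper bound. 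Thus $\bg$ maps $\mathcal{B}$ into $\mathcal{B}$, landing always in the strictly negative region $\bg\left(\bbnu\right)<\bzero$.

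With this bound in hand the induction on $\bbnu$ is immediate. The base case is the hypothesis $\bbnu\left(0\right)\in\mathcal{B}$. For the inductive step, if $\bbnu\left(t\right)\in\mathcal{B}$ then $\bbnu\left(t+1\right) = d\bg\left(\bbnu\left(t\right)\right)+\left(1-d\right)\bbnu\left(t\right)$ is, since $0<d\le 1$, a convex combination of the two points $\bg\left(\bbnu\left(t\right)\right)\in\mathcal{B}$ and $\bbnu\left(t\right)\in\mathcal{B}$; as $\mathcal{B}$ is convex, $\bbnu\left(t+1\right)\in\mathcal{B}$, so $\bbnu\left(t\right)$ is finite and $\le\bzero$ for all $t$. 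Strictness for $t\ge 1$ follows because $d\bg\left(\bbnu\left(t\right)\right)<\bzero$ while $\left(1-d\right)\bbnu\left(t\right)\le\bzero$, so their sum is strictly negative. For $\btheta$, I would then note that \eqref{equ:intermideate variable} is the affine recursion $\btheta\left(t+1\right)=\tilde{\bB}\left(\bbnu\left(t\right)\right)\btheta\left(t\right)+\bb\left(\bbnu\left(t\right)\right)$; since $\bbnu\left(t\right)$ lies in the bounded set $\mathcal{B}$ on which $\beta\ge\tilde{\sigma}_z^2>0$ and $\bI-\tfrac{1}{\beta}\bLambda$ is invertible (its diagonal entries $1-\left[\bLambda\right]_{ii}/\beta$ are bounded away from $0$), both $\tilde{\bB}\left(\bbnu\left(t\right)\right)$ and $\bb\left(\bbnu\left(t\right)\right)$ are finite at every iteration, and a trivial induction from finite $\btheta\left(0\right)$ gives finite $\btheta\left(t\right)$ for all $t$.

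I expect the main obstacle to be the pivotal inequality $\beta\left(\bbnu\right)-\left[\bLambda\left(\bbnu\right)\right]_{ii}\ge\tilde{\sigma}_z^2$, which is exactly what makes the box invariant and pins the lower endpoint $-\tfrac{N-1}{\tilde{\sigma}_z^2}$. The cancellation there—the $i$-th diagonal term of $\bLambda$ hidden inside the trace $\beta$ is removed precisely by the $\beta\bI-\bLambda$ subtraction, leaving only the strictly positive noise term plus the remaining nonnegative diagonal entries—must be spelled out carefully, together with the sign bookkeeping from $1-N<0$ and the role of the damping $0<d\le 1$ in keeping the convex combination inside $\mathcal{B}$.
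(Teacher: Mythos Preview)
Your proposal is correct and follows essentially the same route as the paper: both arguments hinge on the observation that for $\bbnu\le\bzero$ the diagonal $\bLambda\left(\bbnu\right)$ is positive definite, $\beta\left(\bbnu\right)>\left[\bLambda\left(\bbnu\right)\right]_{ii}$, hence $\beta\bI-\bLambda$ is invertible and each component $\left[\bg\left(\bbnu\right)\right]_i=\left(1-N\right)/\left(\beta-\left[\bLambda\right]_{ii}\right)$ is strictly negative, after which the damped update $d\bg+(1-d)\bbnu$ stays nonpositive (strictly negative for $t\ge1$) and finiteness of $\btheta$ follows by induction on the affine recursion. The only organizational difference is that you additionally track the lower bound via the cancellation $\beta-\left[\bLambda\right]_{ii}=\tilde{\sigma}_z^2+\sum_{j\ne i}\left[\bLambda\right]_{jj}\ge\tilde{\sigma}_z^2$ and package the induction as invariance of the convex box $\mathcal{B}$, whereas the paper defers that lower bound to the next lemma and argues finiteness and negativity directly; your version is slightly cleaner and proves a tad more, but the substance is the same.
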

\begin{proof}
	See in Appendix \ref{proof:calculation of tilde_bvartheta_s}.
\end{proof}
We refer matrix $\tilde{\bB}$ in \eqref{equ:tilde B} as the iterating system matrix of $\btheta$, which is determined by the common SONP $\bbnu$ and the measurement matrix $\bA$ at each iteration.
Combining \eqref{equ:update of nu_0} and \eqref{equ:update of theta_0}, we can see that $\bbnu\left(t+1\right)$ only depends on $\bbnu\left(t\right)$ and does not depend on $\btheta\left(t\right)$, 
while $\btheta\left(t+1\right)$ depends on both $\btheta\left(t\right)$ and $\bbnu\left(t\right)$. 
This shows that the iterating system of $\bbnu$ is separated from that of $\btheta$, and hence, the convergence of $\bbnu\left(t\right)$ can be checked individually.
We then consider the convergence of $\bbnu\left(t\right)$.
To this end, we first present the following lemma about the function $\tilde{\bg}\left(\bbnu\right)$ defined in \eqref{equ:update of nu_0}.

\begin{lemma}\label{lemma:function tilde{g}}
	Given $ \bbnu \le \bzero$, $\tilde{\bg}\left(\bbnu\right)$ satisfies the following two properties.\\
	1. Monotonicity: If $\bbnu < \bbnu' \le \bzero$, then $\tilde{\bg}\left( \bbnu\right) < \tilde{\bg}\left(\bbnu'\right)$.\\
	2. Scalability: Given a positive constant $0<\alpha<1$, we have $\tilde{\bg}(\alpha\bbnu)<\alpha\tilde{\bg}(\bbnu)$.\\
	Moreover, if $\tilde{\bg}_{\textrm{min}} \le \bbnu \le \bzero$ with $\tilde{\bg}_{\textrm{min}} \triangleq -\frac{N-1}{\tilde{\sigma}_z^2}\mathbf{1} \in \bbR^M$, we have $\tilde{\bg}_{\textrm{min}} < \tilde{\bg}\left(\bbnu\right) < \bzero$.
\end{lemma}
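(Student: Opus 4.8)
The plan is to reduce all three claims to scalar, componentwise statements, exploiting that $\bD$ is diagonal and positive definite. Then $\bLambda(\bbnu) = (\bD^{-1}-\Diag{\bbnu})^{-1}$ in \eqref{equ:Lambda} is diagonal with entries $\lambda_i(\nu_i) = ([\bD^{-1}]_{ii}-\nu_i)^{-1}$, each strictly positive and strictly increasing in $\nu_i$ for $\bbnu \le \bzero$, and $\beta(\bbnu) = \tilde{\sigma}_z^2 + \tr{\bLambda(\bbnu)} > 0$ by \eqref{equ:beta}. The single observation I would record first, and on which everything rests, is the decoupling identity $\beta(\bbnu) - \lambda_i(\nu_i) = \tilde{\sigma}_z^2 + \sum_{j\neq i}\lambda_j(\nu_j)$. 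Consequently the $i$-th component of $\bg$ in \eqref{equ:function g text} is $[\bg(\bbnu)]_i = -(N-1)\big(\tilde{\sigma}_z^2 + \sum_{j\neq i}\lambda_j(\nu_j)\big)^{-1}$, which does not involve $\nu_i$ and depends monotonically on each $\nu_j$, $j\neq i$; moreover $[\tilde{\bg}(\bbnu)]_i = d[\bg(\bbnu)]_i + (1-d)\nu_i$ by \eqref{equ:update of nu_0}.

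For the monotonicity property, if $\bbnu < \bbnu' \le \bzero$ then every $\lambda_j$ strictly increases, so for each $i$ the positive quantity $\tilde{\sigma}_z^2 + \sum_{j\neq i}\lambda_j$ strictly increases; since $-(N-1)<0$ this forces $[\bg(\bbnu)]_i < [\bg(\bbnu')]_i$, and adding $(1-d)\nu_i \le (1-d)\nu_i'$ with $d>0$ yields $[\tilde{\bg}(\bbnu)]_i < [\tilde{\bg}(\bbnu')]_i$. For the two-sided bound I would use $\tilde{\sigma}_z^2 \le \tilde{\sigma}_z^2 + \sum_{j\neq i}\lambda_j < \infty$ to obtain $-(N-1)/\tilde{\sigma}_z^2 < [\bg(\bbnu)]_i < 0$; then, since $[\bg(\bbnu)]_i$ lies strictly inside $\big(-(N-1)/\tilde{\sigma}_z^2,\,0\big)$, $\nu_i$ lies in its closure whenever $\tilde{\bg}_{\textrm{min}} \le \bbnu \le \bzero$, and $d>0$, the convex combination $[\tilde{\bg}(\bbnu)]_i = d[\bg(\bbnu)]_i + (1-d)\nu_i$ lies strictly inside that same interval, which is exactly $\tilde{\bg}_{\textrm{min}} < \tilde{\bg}(\bbnu) < \bzero$.

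For the scalability property, cancelling the common term $(1-d)\alpha\nu_i$ on the two sides of $[\tilde{\bg}(\alpha\bbnu)]_i < \alpha[\tilde{\bg}(\bbnu)]_i$ and using $d>0$ reduces it to $[\bg(\alpha\bbnu)]_i < \alpha[\bg(\bbnu)]_i$, which after clearing the negative factor $-(N-1)$ and cross-multiplying the two positive denominators becomes $\tilde{\sigma}_z^2 + \sum_{j\neq i}\lambda_j(\nu_j) > \alpha\big(\tilde{\sigma}_z^2 + \sum_{j\neq i}\lambda_j(\alpha\nu_j)\big)$. I would then establish this term by term: $\tilde{\sigma}_z^2 > \alpha\tilde{\sigma}_z^2$ because $\tilde{\sigma}_z^2>0$ and $\alpha<1$, while for each $j$ a one-line computation gives $\lambda_j(\nu_j) - \alpha\lambda_j(\alpha\nu_j) = [\bD^{-1}]_{jj}(1-\alpha)\big(([\bD^{-1}]_{jj}-\nu_j)([\bD^{-1}]_{jj}-\alpha\nu_j)\big)^{-1} > 0$, the two $\nu_j$-linear cross terms cancelling in the numerator. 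Summing these strict inequalities closes the argument.

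The computations are elementary; the main obstacle is conceptual, namely spotting the decoupling identity $\beta-\lambda_i = \tilde{\sigma}_z^2+\sum_{j\neq i}\lambda_j$, which dissolves the apparent global coupling of $\bg$ through $\beta$ and renders each component monotone in the remaining coordinates. The remaining care is purely in tracking strictness: the strict bounds and the strict scalability inequality both hinge on $\tilde{\sigma}_z^2 > 0$ (and on there being at least one index $j\neq i$, i.e.\ $M \ge 2$), so I would flag these hypotheses explicitly at the points where they are invoked.
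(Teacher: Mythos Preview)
Your proposal is correct and follows essentially the same approach as the paper: both hinge on the decoupling identity $\beta(\bbnu)-\lambda_i=\tilde{\sigma}_z^2+\sum_{j\neq i}\lambda_j$ to render $g_i$ a function of the other coordinates only, after which monotonicity, scalability, and boundedness follow componentwise. Your write-up is in fact more explicit than the paper's proof, which dispatches monotonicity and scalability with ``not difficult to see'' and obtains the lower bound via the limit $\nu_j\to-\infty$ rather than your direct inequality $\tilde{\sigma}_z^2+\sum_{j\neq i}\lambda_j\ge\tilde{\sigma}_z^2$; you also flag the $M\ge 2$ hypothesis needed for strictness, which the paper leaves implicit.
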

\begin{proof}
	See in  Appendix \ref{proof:function tilde{g}}.
\end{proof}

Based on Lemma \ref{lemma:function tilde{g}}, we have the following theorem. 
\begin{theorem}\label{The-theta_1 conv}
	Given initialization $\bbnu\left(0\right)$ with $ \tilde{\bg}_{\textrm{min}}\le  \bbnu\left(0\right) \le  \bzero$, where $\tilde{\bg}_{\textrm{min}}$ is defined in Lemma \ref{lemma:function tilde{g}}, the sequence $\bbnu\left(t+1\right) = \tilde{\bg}\left( \bbnu\left(t\right) \right)$ converges to a finite fixed point $\bbnu^\star$, where $\tilde{\bg}_{\textrm{min}}< \bbnu^{\star} < \bzero$.
\end{theorem}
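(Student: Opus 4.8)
The plan is to recognize the recursion $\bbnu\left(t+1\right) = \tilde{\bg}\left(\bbnu\left(t\right)\right)$ as the iteration of a monotone, scalable self-map of the compact box $\left[\tilde{\bg}_{\textrm{min}},\bzero\right]$ and to run a Yates-type standard-interference-function argument, whose ingredients are already furnished by \lmref{prop:0} and \lmref{lemma:function tilde{g}}. First I would record that $\tilde{\bg}$ is continuous on $\left[\tilde{\bg}_{\textrm{min}},\bzero\right]$: on this box $\bD^{-1}-\Diag{\bbnu}$ is positive definite diagonal, so $\bLambda\left(\bbnu\right)$ and $\beta\left(\bbnu\right)$ are rational (hence smooth) functions of $\bbnu$, and $\beta\left(\bbnu\right)\bI-\bLambda\left(\bbnu\right)$ is invertible, the latter having already been established for the validity of the update. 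By the last clause of \lmref{lemma:function tilde{g}}, $\tilde{\bg}$ maps $\left[\tilde{\bg}_{\textrm{min}},\bzero\right]$ into its interior $\left(\tilde{\bg}_{\textrm{min}},\bzero\right)$, while \lmref{prop:0} keeps the iterates finite and $\le\bzero$; hence the whole orbit stays in this compact set.

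Next I would establish existence of a fixed point through two monotone bounding orbits. Starting from the upper corner $\bbnu\left(0\right)=\bzero$, the range clause gives $\bbnu\left(1\right)=\tilde{\bg}\left(\bzero\right)<\bzero=\bbnu\left(0\right)$, and the monotonicity property then yields $\bbnu\left(t+1\right)<\bbnu\left(t\right)$ by induction; being componentwise decreasing and bounded below by $\tilde{\bg}_{\textrm{min}}$, this orbit converges, and continuity forces the limit $\bbnu^\star_+$ to satisfy $\bbnu^\star_+=\tilde{\bg}\left(\bbnu^\star_+\right)$ with $\tilde{\bg}_{\textrm{min}}<\bbnu^\star_+<\bzero$. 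Symmetrically, starting from the lower corner $\tilde{\bg}_{\textrm{min}}$ produces a componentwise increasing orbit, bounded above by $\bzero$, converging to a fixed point $\bbnu^\star_-$.

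To merge these two limits I would prove uniqueness of the fixed point via scalability. After the change of variable $\bx=-\bbnu\ge\bzero$, the map $\bh\left(\bx\right)\triangleq-\tilde{\bg}\left(-\bx\right)$ is positive, monotone increasing, and scalable in the sense $\bh\left(\alpha\bx\right)>\alpha\bh\left(\bx\right)$ for $0<\alpha<1$ (equivalently $\bh\left(\alpha\bx\right)<\alpha\bh\left(\bx\right)$ for $\alpha>1$), i.e.\ a \emph{standard interference function}. If $\bx_1,\bx_2$ are two positive fixed points, set $\alpha=\min\{t\ge1:\bx_1\le t\bx_2\}$; were $\alpha>1$, monotonicity and scalability would give $\bx_1=\bh\left(\bx_1\right)\le\bh\left(\alpha\bx_2\right)<\alpha\bh\left(\bx_2\right)=\alpha\bx_2$ strictly in every component, contradicting the equality attained in the minimizing component, so $\bx_1\le\bx_2$, and by symmetry $\bx_1=\bx_2$. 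Hence $\bbnu^\star_+=\bbnu^\star_-\triangleq\bbnu^\star$.

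Finally, for the general initialization $\tilde{\bg}_{\textrm{min}}\le\bbnu\left(0\right)\le\bzero$ I would sandwich the orbit: by the (non-strict, via continuity) monotonicity, $\tilde{\bg}^{\left(t\right)}\left(\tilde{\bg}_{\textrm{min}}\right)\le\bbnu\left(t\right)\le\tilde{\bg}^{\left(t\right)}\left(\bzero\right)$ for all $t$, and since both outer orbits converge to the common $\bbnu^\star$, the squeeze theorem gives $\bbnu\left(t\right)\to\bbnu^\star$ with $\tilde{\bg}_{\textrm{min}}<\bbnu^\star<\bzero$. The main obstacle I anticipate is the uniqueness step: the monotone bounding orbits by themselves only trap $\bbnu\left(t\right)$ between two a priori distinct fixed points, so it is the scalability-based argument (with the care needed to handle the equality attained in the minimizing component) that actually closes the gap and upgrades ``bounded and trapped'' to ``convergent.'' A secondary technical point is verifying continuity and invertibility uniformly on the box, so that limits of the fixed-point equation are again fixed-point equations.
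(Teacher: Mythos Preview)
Your argument is correct and in fact more complete than the paper's own proof. The paper simply considers the two cases $\bbnu\left(1\right)\le\bbnu\left(0\right)$ and $\bbnu\left(1\right)\ge\bbnu\left(0\right)$ componentwise, and in each case uses the monotonicity of $\tilde{\bg}$ from \lmref{lemma:function tilde{g}} to conclude that the whole orbit is monotone and bounded, hence convergent; it never treats the mixed case where some components of $\bbnu\left(1\right)$ are above and others below those of $\bbnu\left(0\right)$, nor does it argue uniqueness of the limit. Your approach---running the two monotone bounding orbits from the corners $\bzero$ and $\tilde{\bg}_{\textrm{min}}$, proving uniqueness of the fixed point via the Yates scalability argument, and then squeezing an arbitrary orbit between them---closes exactly this gap and handles every initialization in $\left[\tilde{\bg}_{\textrm{min}},\bzero\right]$. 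It also gives the scalability clause of \lmref{lemma:function tilde{g}} its natural role, whereas the paper states that property but never invokes it in the proof of \thref{The-theta_1 conv}. The continuity check you flag is routine but necessary, and your transformation $\bh\left(\bx\right)=-\tilde{\bg}\left(-\bx\right)$ correctly recasts the problem as a standard interference function, so the classical uniqueness argument goes through verbatim.
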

\begin{proof}
	See in Appendix \ref{proof:theta_1 conv}.
\end{proof}
From Theorem \ref{The-theta_1 conv}, we can find that $\bbnu\left(t\right)$ converges to a finite fixed point as long as its initialization satisfies $ \tilde{\bg}_{\textrm{min}}\le \bbnu\left(0\right) \le \bzero$, and this range can be quite large.
For example, in our simulations, $N = 46080$, and when the virtual noise variance $\tilde{\sigma}_z^2 = 1$,  we obtain $\tilde{\bg}_{\textrm{min}} = -46079\times \bone$. 
In this case, the range of the initialization of $\bbnu$ is quite large.
Theorem \ref{The-theta_1 conv} also shows that the convergence of $\bbnu\left(t\right)$ is not related to the damping factor $d$.
Yet we shall see that the convergence of $\btheta\left(t\right)$ is related to the choice of the damping factor later.
Define 
\begin{equation}\label{equ:Lambda0 star}
	\bLambda^\star = \left( \bD^{-1} - \Diag{\bbnu^\star} \right)^{-1},
\end{equation}
\begin{equation}\label{equ:beta0 star}
	\beta^\star = \tilde{\sigma}_z^2 + \tr{\bLambda^\star}.
\end{equation}
From Theorem \ref{The-theta_1 conv}, diagonal matrix $\bLambda^\star$ is positive definite and $\beta^\star > 0$.
From \eqref{equ:update of nu_0} and ${\bbnu^\star} = \tilde{\bg}\left(\bbnu^\star\right)$, we have $\bbnu^\star = \bg\left(\bbnu^\star\right)$.
Then, we can obtain the following relationship for $\bbnu^{\star}$
\begin{equation}\label{equ:relation of bbnu}
	\frac{N}{N-1}\bbnu^\star = \diag{\bD^{-1} - \left( \bLambda^\star - \frac{1}{\beta^\star}\left(\bLambda^\star\right)^2 \right)^{-1}},
\end{equation}
where the derivation is given in Appendix \ref{proof:Calculation 2}.
From \eqref{equ:relation of bbnu}, we have
\begin{equation}\label{equ:condition of nu}
	\bLambda^\star - \frac{1}{\beta^\star}\left(\bLambda^\star\right)^2 = \left(\bD^{-1} - \frac{N}{N-1}\Diag{\bbnu^\star}   \right)^{-1}.
\end{equation}

Define 
\begin{equation}\label{equ:tilde B 0}
	\tilde{\bB}^{\star} = \tilde{\bB}\left( \bbnu^{\star} \right) = d\bB^{\star} + \left(1-d\right)\bI,
\end{equation}
where $\bB^{\star} = {\bB}\left( \bbnu^{\star} \right)$ and 
$\bb^\star = \bb\left( \bbnu^{\star} \right)$.
From the definition, $\tilde{\bB}^\star$ is determined by the fixed point of the common SONP $\bbnu^\star$ and the measurement matrix $\bA$, which does not vary with iterations.
To avoid ambiguity,  the iterating system matrix refers to $\tilde{\bB}^\star$ in the rest of the paper, since the convergence condition for the iterating system of $\btheta\left(t\right)$ given in the following lemma depends only on the spectral radius of $\tilde{\bB}^\star$.
%The following theorem gives a condition for the convergence of $\btheta\left(t\right)$ in \eqref{equ:update of theta_0}.
\begin{lemma}\label{The-synchronous updates} 
	Given a finite initialization $\btheta\left(0\right) \in \bbC^{M \times 1}$ and $\bbnu\left(0\right)$ with $-\frac{N-1}{\tilde{\sigma}_z^2}\mathbf{1}\le   \bbnu\left(0\right) \le  \bzero$.
	Then,
	$\btheta\left(t\right)$ in \eqref{equ:intermideate variable} converges to its fixed point  if the spectral radius of $\tilde{\bB}^\star$ is less than $1$, i.e., $\rho\left(\tilde{\bB}^{\star}\right) < 1$, with $\rho\left(\tilde{\bB^\star}\right) = \max\braces{\left|\lambda\right|: \lambda \  \textrm{is an eigenvalue of } \tilde{\bB^\star}}$.  
\end{lemma}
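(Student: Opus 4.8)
The plan is to treat \eqref{equ:intermideate variable} as a linear, time-varying recursion whose coefficient matrix is asymptotically constant, and to exploit the convergence $\bbnu\left(t\right)\to\bbnu^\star$ already established in \thref{The-theta_1 conv}. Since $\rho\left(\tilde{\bB}^\star\right)<1$, the matrix $\bI-\tilde{\bB}^\star$ is nonsingular, so the candidate fixed point
\begin{equation*}
	\btheta^\star \triangleq \left(\bI-\tilde{\bB}^\star\right)^{-1}\bb^\star
\end{equation*}
is well defined and is the unique solution of $\btheta^\star=\tilde{\bB}^\star\btheta^\star+\bb^\star$. First I would subtract this identity from \eqref{equ:update of theta_0} and write the error $\be\left(t\right)\triangleq\btheta\left(t\right)-\btheta^\star$ as
\begin{equation*}
	\be\left(t+1\right) = \tilde{\bB}\left(\bbnu\left(t\right)\right)\be\left(t\right) + \bepsilon\left(t\right),
\end{equation*}
where $\bepsilon\left(t\right)\triangleq\left[\tilde{\bB}\left(\bbnu\left(t\right)\right)-\tilde{\bB}^\star\right]\btheta^\star+\left[\bb\left(\bbnu\left(t\right)\right)-\bb^\star\right]$ collects the mismatch between the running and the limiting coefficients.

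The next step is to show $\bepsilon\left(t\right)\to\bzero$. By \lmref{prop:0} and \thref{The-theta_1 conv}, the fixed point obeys $\tilde{\bg}_{\textrm{min}}<\bbnu^\star<\bzero$, so $\bD^{-1}-\Diag{\bbnu^\star}$ is positive definite and $\beta^\star>0$; the same invertibility arguments that validate the iteration then apply at $\bbnu^\star$, so that $\bLambda\left(\cdot\right)$, $\beta\left(\cdot\right)$ and $\left(\bI-\tfrac{1}{\beta}\bLambda\right)^{-1}$, and hence $\tilde{\bB}\left(\cdot\right)$ and $\bb\left(\cdot\right)$, are continuous in a neighborhood of $\bbnu^\star$. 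Since $\bbnu\left(t\right)\to\bbnu^\star$, continuity yields $\tilde{\bB}\left(\bbnu\left(t\right)\right)\to\tilde{\bB}^\star$ and $\bb\left(\bbnu\left(t\right)\right)\to\bb^\star$, whence $\bepsilon\left(t\right)\to\bzero$.

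To close the argument I would invoke the fact that for any matrix with spectral radius below one there is an induced norm under which its norm is also below one: pick $\norm{\cdot}_\ast$ with $\norm{\tilde{\bB}^\star}_\ast = q_0 <1$. Because a matrix norm is continuous and $\tilde{\bB}\left(\bbnu\left(t\right)\right)\to\tilde{\bB}^\star$, there exist $q\in\left(q_0,1\right)$ and a time $T$ with $\norm{\tilde{\bB}\left(\bbnu\left(t\right)\right)}_\ast\le q$ for all $t\ge T$. Then for $t\ge T$,
\begin{equation*}
	\norm{\be\left(t+1\right)}_\ast \le q\,\norm{\be\left(t\right)}_\ast + \norm{\bepsilon\left(t\right)}_\ast.
\end{equation*}
Applying the elementary scalar lemma that $a_{t+1}\le q a_t + c_t$ with $a_t,c_t\ge0$, $0\le q<1$ and $c_t\to0$ forces $a_t\to0$ (unroll the recursion and split the resulting geometric sum into a vanishing transient and an arbitrarily small tail), I conclude $\norm{\be\left(t\right)}_\ast\to0$, i.e.\ $\btheta\left(t\right)\to\btheta^\star$. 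As all norms on $\bbC^{M}$ are equivalent, convergence holds in the usual sense.

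The main obstacle is the time-varying coefficient $\tilde{\bB}\left(\bbnu\left(t\right)\right)$: a bare spectral-radius condition on the limiting matrix $\tilde{\bB}^\star$ does not by itself control a product of distinct matrices, so the crux is to promote $\rho\left(\tilde{\bB}^\star\right)<1$ into a uniform contraction factor $q<1$ valid on the entire tail of the trajectory. This is precisely what the induced-norm construction, combined with the continuity of $\tilde{\bB}\left(\cdot\right)$ at $\bbnu^\star$, supplies; the vanishing forcing term $\bepsilon\left(t\right)$ is then absorbed by the discrete Gr\"onwall estimate.
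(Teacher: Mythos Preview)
Your proposal is correct and essentially identical to the paper's proof: both define $\btheta^\star=(\bI-\tilde{\bB}^\star)^{-1}\bb^\star$, derive the error recursion $\be(t+1)=\tilde{\bB}(\bbnu(t))\be(t)+\bepsilon(t)$, use continuity of $\tilde{\bB}(\cdot),\bb(\cdot)$ at $\bbnu^\star$ (from \thref{The-theta_1 conv}) to get $\bepsilon(t)\to\bzero$, pick an induced norm with $\norm{\tilde{\bB}^\star}<1$, and finish with the same tail contraction estimate. The paper writes out the final $\epsilon$-$\delta$ unrolling explicitly where you invoke a discrete Gr\"onwall lemma, but the argument is the same.
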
 
\begin{proof}
	See in Appendix \ref{Proof-The-syn}.
\end{proof}

From Lemma \ref{The-synchronous updates}, we see that when $\bbnu$ converges and the spectral radius of the iterating system matrix in \eqref{equ:tilde B 0}, i.e., $\tilde{\bB}^{\star}$, is less than $1$, $\btheta$ converges. 
We next give an analysis of the eigenvalue distribution of $\tilde{\bB}^\star$ and a theoretical explanation for the improved convergence of $\btheta$ under damped updating.
The key point in the next is to analyze the eigenvalues of $\tilde{\bB}^\star$.
We begin with the eigenvalues  of $\bB^\star$ in \eqref{equ:tilde B 0} since from \eqref{equ:tilde B 0}, the eigenvalues of $\tilde{\bB}^{\star}$ can be directly obtained from those of $\bB^\star$.
As mentioned above, 
when $\bbnu$ converges to $\bbnu^{\star}$, from \eqref{equ:condition of nu} and \eqref{equ:tilde B 0}, it is not difficult to obtain 
%\begin{subequations}
%		\begin{equation}\label{equ:tilde B star}
	%		\tilde{\bB}^{\star} = d\bB^{\star} + \left(1-d\right)\bI,
	%	\end{equation}
%	\begin{equation}
	%		\bb^\star = \frac{2d\left(N-1\right)}{\beta^\star N}\left(\bI - \frac{1}{\beta^\star}\bLambda^\star\right)^{-1}\bA^H\by,
	%	\end{equation}
%\end{subequations}
\begin{equation}\label{equ:tilde B star 1}
	\begin{split}
		\bB^\star &= \frac{N-1}{\beta^{\star}}\left( \bI - \frac{1}{\beta^{\star}}\bLambda^\star \right)^{-1}
		\left(\bI -\frac{1}{N}\bA^H\bA\right)\bLambda^\star \\
		&=	 \left( \bI - \frac{1}{N}\bD^{-1}\bLambda^{\star} \right)\left(N\bI - \bA^H\bA\right)\left( \frac{1}{\beta^{\star}}\bLambda^{\star} \right).
	\end{split}
\end{equation}
%Since $\bbnu^{\star} < \bo$, the matrices that need to be inverted above are all invertible.
%From \eqref{equ:condition of nu}, we can obtain that
%\begin{equation}
%	\begin{split}
	%		&\left(\bI - \frac{1}{\beta^{\star}}\bLambda^{\star} \right)^{-1} = \bLambda^{\star}\left(\bD^{-1} - \frac{N}{N-1}\Diag{\bbnu^\star}\right) \\
	%		=&\bLambda^{\star}\left[\frac{N}{N-1}\left( \bLambda^{\star}\right)^{-1} - \frac{1}{N-1}\bD^{-1} \right]\\
	%		=&\frac{N}{N-1}\bI - \frac{1}{N-1}\bLambda^{\star}\bD^{-1}.
	%	\end{split}
%\end{equation}
%Also, it is not difficult to check that the matrices which need to be inverted are invertible.
%Thus, $\bB^\star$ can be rewritten as
%\begin{equation}
%	\bB^{\star} = \left( \bI - \frac{1}{N}\bD^{-1}\bLambda^{\star} \right)\left(N\bI - \bA^H\bA\right)\left( \frac{1}{\beta^{\star}}\bLambda^{\star} \right).
%\end{equation}
We can find that ${\bB}^\star$ is the product of three matrices. 
The first matrix of the right hand side of \eqref{equ:tilde B star 1} satisfies the following property.
\begin{lemma}\label{prop:1}
	$\bI - \frac{1}{N}\bD^{-1}\bLambda^{\star}$ is diagonal with diagonal components all  positive and smaller than $1$. 
	%		$2$.  $N\bI - \bA^H\bA$ is Hermitian.\\
	%		$3$. $\frac{1}{\beta^{\star}}\bLambda^{\star}$ is diagonal positive definite.
\end{lemma}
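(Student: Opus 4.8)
The plan is to reduce everything to a scalar computation on the diagonal, since all matrices involved are diagonal. First I would record that $\bLambda^\star = \left(\bD^{-1} - \Diag{\bbnu^\star}\right)^{-1}$ is itself diagonal and positive definite: $\bD^{-1}$ is diagonal with positive entries by the assumption on the prior covariance, and Theorem \ref{The-theta_1 conv} guarantees the \emph{strict} bound $\bbnu^\star < \bzero$, so $\Diag{\bbnu^\star}$ has strictly negative diagonal. Consequently the $i$-th diagonal entry of $\bD^{-1} - \Diag{\bbnu^\star}$ equals $\left[\bD\right]_{ii}^{-1} - \left[\bbnu^\star\right]_i > \left[\bD\right]_{ii}^{-1} > 0$, so its inverse $\bLambda^\star$ is diagonal with positive entries. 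Being a product of two diagonal matrices, $\bD^{-1}\bLambda^\star$ is then diagonal, and hence so is $\bI - \frac{1}{N}\bD^{-1}\bLambda^\star$; this settles the ``diagonal'' part of the claim immediately.

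Next I would compute the diagonal entries explicitly. Writing $d_i \triangleq \left[\bD\right]_{ii} > 0$ and $\nu_i \triangleq \left[\bbnu^\star\right]_i < 0$, the $i$-th diagonal entry of $\bD^{-1}\bLambda^\star$ is
\begin{equation*}
	\left[\bD^{-1}\bLambda^\star\right]_{ii} = \frac{d_i^{-1}}{d_i^{-1} - \nu_i}.
\end{equation*}
Because $\nu_i < 0$, the denominator strictly exceeds the (positive) numerator, so each such entry lies in the open interval $\left(0,1\right)$.

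Finally I would scale and subtract from the identity. Since $\frac{1}{N}\left[\bD^{-1}\bLambda^\star\right]_{ii} \in \left(0, \frac{1}{N}\right)$, the $i$-th diagonal entry of $\bI - \frac{1}{N}\bD^{-1}\bLambda^\star$ is
\begin{equation*}
	1 - \frac{1}{N}\cdot\frac{d_i^{-1}}{d_i^{-1} - \nu_i} \in \left(1 - \frac{1}{N}, 1\right) \subseteq \left(0,1\right),
\end{equation*}
which is positive and strictly smaller than $1$, as claimed.

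There is no genuine obstacle here: the lemma is a direct diagonal computation, and the ``hard part'' is only a matter of bookkeeping. The one point requiring care is that the strict inequality $\bbnu^\star < \bzero$ (rather than merely $\bbnu^\star \le \bzero$) is what forces the entries to stay strictly below $1$; this strictness is exactly what Theorem \ref{The-theta_1 conv} delivers, so it should be invoked explicitly. One should also note that $N \ge 1$ (indeed $N$ is large in the application) guarantees $1 - 1/N \ge 0$, so positivity of the diagonal entries follows automatically.
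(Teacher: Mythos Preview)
Your proof is correct and follows the same approach as the paper's, which simply notes that \eqref{equ:Lambda0 star} together with $\bbnu^\star < \bzero$ gives $\bzero < \diag{\bD^{-1}\bLambda^\star} < \mathbf{1}$, from which the claim is immediate. Your version is more explicit in writing out the scalar entries and in invoking Theorem~\ref{The-theta_1 conv} for the strict negativity of $\bbnu^\star$, but the underlying argument is identical.
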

\begin{proof}
	From \eqref{equ:Lambda0 star}, we can obtain that $\bzero<\diag{\bD^{-1}\bLambda^{\star}} < \mathbf{1}$, which implies that the diagonal of $\bI - \frac{1}{N}\bD^{-1}\bLambda^{\star}$ is positive and smaller than $1$.
	This completes the proof.
\end{proof}

Since all the three matrices in the product in \eqref{equ:tilde B star 1} are Hermitian, we have \cite[Exercise below Theorem 5.6.9]{horn2012matrix}
\begin{equation}\label{equ:range of eigen of B}
	\begin{split}
		&\rho\left(\bB^{\star}\right) \\
		\le &\rho\left( \bI - \frac{1}{N}\bD^{-1}\bLambda^{\star} \right)\rho\left(N\bI - \bA^H\bA\right)\rho\left( \frac{1}{\beta^{\star}}\bLambda^{\star} \right).
	\end{split}
\end{equation}
From Lemma \ref{prop:1}, we can obtain that
\begin{equation}\label{equ:aux1 in text}
	\rho\left( \bI - \frac{1}{N}\bD^{-1}\bLambda^{\star}  \right) <1.
\end{equation} 
%The following lemma gives the range of spectral radius for $N\bI - \bA^H\bA$.
%
%
%Then, we present the following lemma to obtain the range of $\rho\left(\frac{1}{\beta^\star}\bLambda^\star\right)$.
\begin{lemma}\label{Lemma-spectral radius}
	The spectral radius of $\bLambda^*$ satisfies 
	\begin{equation}\label{equ:aux in lemma5}
		\rho\left( \bLambda^\star \right) < \frac{\beta^\star}{N}.
	\end{equation}
\end{lemma}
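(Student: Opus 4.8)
The plan is to exploit that $\bLambda^\star$ is diagonal, so that $\rho\left(\bLambda^\star\right)$ is simply the largest of its (positive) diagonal entries, and then to bound each entry separately. Write $d_i \triangleq \left[\bD^{-1}\right]_{ii} > 0$, $\nu_i^\star \triangleq \left[\bbnu^\star\right]_i$, and $\lambda_i^\star \triangleq \left[\bLambda^\star\right]_{ii}$. By \eqref{equ:Lambda0 star} together with Theorem~\ref{The-theta_1 conv} (which gives $\bbnu^\star < \bzero$), each $\lambda_i^\star = 1/\left(d_i - \nu_i^\star\right)$ is real and strictly positive, so $\rho\left(\bLambda^\star\right) = \max_i \lambda_i^\star$. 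It therefore suffices to establish the componentwise bound $\lambda_i^\star < \beta^\star/N$ for every $i$.

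First I would note that a crude estimate does not suffice: since the right-hand side of \eqref{equ:condition of nu} is positive definite, $\bLambda^\star - \frac{1}{\beta^\star}\left(\bLambda^\star\right)^2$ is positive definite, which only yields $\lambda_i^\star < \beta^\star$ and misses the required factor of $N$. Hence the fixed-point relation must be used in full rather than merely through its definiteness. The crux is to extract a scalar identity from \eqref{equ:condition of nu}. Because both sides are diagonal, reading off the $i$-th diagonal entry gives $\lambda_i^\star - \left(\lambda_i^\star\right)^2/\beta^\star = \left(d_i - \frac{N}{N-1}\nu_i^\star\right)^{-1}$. Substituting $\nu_i^\star = d_i - 1/\lambda_i^\star$ (the inverse of $\lambda_i^\star = 1/\left(d_i-\nu_i^\star\right)$) into the right-hand side, clearing denominators, and simplifying, I expect to arrive at the clean relation
\begin{equation*}
	d_i\,\lambda_i^\star\left(\beta^\star - \lambda_i^\star\right) = \beta^\star - N\lambda_i^\star .
\end{equation*}

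Finally I would read off the signs. From \eqref{equ:beta0 star}, $\beta^\star = \tilde{\sigma}_z^2 + \tr{\bLambda^\star} = \tilde{\sigma}_z^2 + \sum_j \lambda_j^\star > \lambda_i^\star$ for every $i$, so $\beta^\star - \lambda_i^\star > 0$; together with $d_i > 0$ and $\lambda_i^\star > 0$ this makes the left-hand side of the displayed identity strictly positive. Consequently $\beta^\star - N\lambda_i^\star > 0$, i.e. $\lambda_i^\star < \beta^\star/N$, for every $i$, and taking the maximum over $i$ gives $\rho\left(\bLambda^\star\right) = \max_i \lambda_i^\star < \beta^\star/N$, which is the claim. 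The only real obstacle is the algebraic bookkeeping in deriving the scalar identity; once it is in hand the sign argument is immediate.
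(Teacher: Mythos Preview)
Your proof is correct. Both your argument and the paper's work componentwise on the diagonal fixed-point relation, but they process it differently. The paper uses the form $\bbnu^\star = \bg\left(\bbnu^\star\right)$ from \eqref{equ:function g text}, which gives $-\nu_i^\star = \frac{N-1}{\beta^\star - \lambda_i^\star}$, and then simply drops the positive term $d_i$ in $\lambda_i^\star = \left(d_i - \nu_i^\star\right)^{-1} < \left(-\nu_i^\star\right)^{-1} = \frac{\beta^\star - \lambda_i^\star}{N-1}$, from which $N\lambda_i^\star < \beta^\star$ follows in one line. You instead start from the equivalent but more processed relation \eqref{equ:condition of nu}, keep $d_i$, and derive the exact identity $d_i\,\lambda_i^\star\left(\beta^\star - \lambda_i^\star\right) = \beta^\star - N\lambda_i^\star$ before reading off signs. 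The paper's route is a bit shorter since the inequality is immediate once $\bD^{-1}$ is discarded; your route involves more algebra but yields a clean equality that makes the role of $d_i$ explicit. Both are valid and rest on the same two ingredients: the fixed-point characterization of $\bbnu^\star$ and the observation $\beta^\star > \lambda_i^\star$.
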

\begin{proof}
	See in Appendix \ref{{Proof-Lemma-spec}}.
\end{proof}
%Since $\beta^{\star}$ is positive (from \eqref{equ:beta0 star}), we can obtain 
%\begin{equation}\label{equ:aux2 in text}
%	\rho\left( \frac{1}{\beta^{\star}}\bLambda^{\star} \right)< \frac{1}{N}.
%\end{equation}

We next show some properties of the eigenvalues  of $\bB^{\star}$ and $\tilde{\bB}^\star$.
\begin{lemma}\label{lemma:eigens of B star}
	Denote the eigenvalues of $\bB^{\star}$ as $\lambda_{B,i}, i\in \setposi{M}$.
	Then,  $\braces{\lambda_{B,i}}_{i=1}^M$ are all real and
	\begin{equation}\label{equ:range of v_B}
		-\frac{\rho\left(N\bI - \bA^H\bA\right)}{N} < \lambda_{B,i} <1.
	\end{equation}
\end{lemma}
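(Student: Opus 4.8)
The plan is to exploit the factorization of $\bB^\star$ recorded in \eqref{equ:tilde B star 1} as two positive definite diagonal matrices sandwiching a single Hermitian factor, and to convert statements about this non-Hermitian triple product into statements about an honestly Hermitian matrix. Write $\bP \triangleq \left(\bI - \frac{1}{N}\bD^{-1}\bLambda^\star\right)\left(\frac{1}{\beta^\star}\bLambda^\star\right)$ for the product of the two outer diagonal factors, which commute since both are diagonal, and $\bC \triangleq N\bI - \bA^H\bA$ for the Hermitian middle factor, so that $\bB^\star = \left(\bI - \frac{1}{N}\bD^{-1}\bLambda^\star\right)\bC\left(\frac{1}{\beta^\star}\bLambda^\star\right)$. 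By Lemma \ref{prop:1}, Lemma \ref{Lemma-spectral radius}, and the positive definiteness of $\bLambda^\star$ and $\beta^\star$ guaranteed by Theorem \ref{The-theta_1 conv}, the matrix $\bP$ is positive definite diagonal. Since the spectrum is invariant under cyclic permutation of the factors, $\bB^\star$ has the same eigenvalues as $\bP\bC$, and the similarity $\bP^{-1/2}(\bP\bC)\bP^{1/2} = \bP^{1/2}\bC\bP^{1/2}$ shows that $\bB^\star$ shares its spectrum with the Hermitian matrix $\bP^{1/2}\bC\bP^{1/2}$. This at once gives that every $\lambda_{B,i}$ is real.

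For the upper bound I would work directly with the Hermitian representative $\bP^{1/2}\bC\bP^{1/2} = N\bP - \left(\bA\bP^{1/2}\right)^H\left(\bA\bP^{1/2}\right)$. The second term is positive semidefinite, hence $\bP^{1/2}\bC\bP^{1/2} \preceq N\bP$, and monotonicity of the largest eigenvalue under the Loewner order yields $\max_i \lambda_{B,i} \le N\rho(\bP)$. Because $\bP$ is diagonal, $\rho(\bP)$ equals the largest product of corresponding diagonal entries; Lemma \ref{prop:1} bounds each diagonal entry of $\bI - \frac{1}{N}\bD^{-1}\bLambda^\star$ strictly below $1$, while Lemma \ref{Lemma-spectral radius} gives $\rho\left(\frac{1}{\beta^\star}\bLambda^\star\right) < \frac{1}{N}$, so $\rho(\bP) < \frac{1}{N}$ and therefore $\lambda_{B,i} < 1$ for every $i$.

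The lower bound follows from the spectral radius estimate already recorded in \eqref{equ:range of eigen of B}: combining it with \eqref{equ:aux1 in text}, namely $\rho(\bI - \frac{1}{N}\bD^{-1}\bLambda^\star) < 1$, and with Lemma \ref{Lemma-spectral radius}, namely $\rho(\frac{1}{\beta^\star}\bLambda^\star) < \frac{1}{N}$, gives $\rho(\bB^\star) < \frac{\rho(N\bI - \bA^H\bA)}{N}$, whence $\lambda_{B,i} \ge -\rho(\bB^\star) > -\frac{\rho(N\bI - \bA^H\bA)}{N}$. The main obstacle is the reality claim, since the triple product is not Hermitian and the argument must be routed through the cyclic-spectrum identity and the positive-definite-times-Hermitian similarity; once that reduction is in place, the two sides of the bound come from genuinely different estimates, a positive-semidefinite (Loewner) comparison delivering the sharp top bound $1$, and the product spectral-radius inequality delivering the bottom.
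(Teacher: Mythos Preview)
Your proof is correct and follows essentially the same route as the paper: the Hermitian representative $\bP^{1/2}\bC\bP^{1/2}$ is exactly the paper's matrix $\bQ$ (with $\bP$ the product of the two diagonal factors), your Loewner comparison $\bQ \preceq N\bP$ is the paper's Weyl-type split $\bQ = \bQ_1 + \bQ_2$ with $\bQ_2 \preceq 0$, and the lower bound via \eqref{equ:range of eigen of B}, \eqref{equ:aux1 in text}, and Lemma \ref{Lemma-spectral radius} is identical. The only cosmetic difference is that you reach the Hermitian form via the cyclic-spectrum identity followed by conjugation by $\bP^{1/2}$, whereas the paper writes down the single similarity $\bK^{-1}\bB^\star\bK$ directly.
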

\begin{proof}
	See in Appendix \ref{proof:eigens of B star}.
\end{proof}

Denote the eigenvalues of the iterating system matrix $\tilde{\bB}^{\star}$ in \eqref{equ:tilde B 0} as $\tilde{\lambda}_i, i \in \setposi{M}$.
From \eqref{equ:tilde B 0}, we have $\tilde{\lambda}_i = d\lambda_{B,i} + 1-d, i \in \setposi{M}$.
We then have the following lemma.
\begin{lemma}\label{lemma:eigenvalues of tilde B}
	The eigenvalues of $\tilde{\bB}^\star$ are all real and satisfy
	\begin{equation}
		1-d\left( 1+ \frac{\rho\left(N\bI - \bA^H\bA\right)}{N} \right) < \tilde{\lambda}_i <1.
	\end{equation}
\end{lemma}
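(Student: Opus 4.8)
The plan is to obtain both the reality and the two-sided bound for the eigenvalues $\tilde{\lambda}_i$ of $\tilde{\bB}^{\star}$ directly from the corresponding facts about $\bB^{\star}$ established in \lmref{lemma:eigens of B star}, exploiting the affine relation $\tilde{\lambda}_i = d\lambda_{B,i} + (1-d)$ that was recorded just above the statement as a consequence of \eqref{equ:tilde B 0}.

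First I would settle reality. By \lmref{lemma:eigens of B star} every $\lambda_{B,i}$ is real, and since $d$ and $1-d$ are real scalars, the map $\lambda\mapsto d\lambda+(1-d)$ carries real numbers to real numbers; hence each $\tilde{\lambda}_i$ is real. For the bounds, the crucial observation is that the standing assumption $0<d\le 1$ makes the scaling factor $d$ strictly positive, so this affine map is strictly increasing and therefore preserves the strict inequalities of \lmref{lemma:eigens of B star}. Applying it to the upper bound $\lambda_{B,i}<1$ yields $\tilde{\lambda}_i< d\cdot 1+(1-d)=1$, while applying it to the lower bound $\lambda_{B,i}>-\rho\left(N\bI-\bA^H\bA\right)/N$ yields
\begin{equation*}
	\tilde{\lambda}_i> -d\,\frac{\rho\left(N\bI-\bA^H\bA\right)}{N}+(1-d)=1-d\left(1+\frac{\rho\left(N\bI-\bA^H\bA\right)}{N}\right),
\end{equation*}
which is exactly the claimed lower bound.

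There is essentially no obstacle here: the entire argument is a monotone affine pushforward of the interval furnished by \lmref{lemma:eigens of B star}, and all the analytic difficulty has already been absorbed into that preceding lemma. The only point that must be flagged explicitly is the strict positivity of $d$, which is what guarantees monotonicity and hence the preservation of the strict inequalities; this is ensured by the standing range $0<d\le 1$ on the damping factor.
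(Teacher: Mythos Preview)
Your proposal is correct and follows essentially the same approach as the paper, which simply records that the lemma is ``a direct result from \lmref{lemma:eigens of B star}'' via the affine relation $\tilde{\lambda}_i = d\lambda_{B,i} + (1-d)$ noted just before the statement. You have merely spelled out the monotone pushforward that the paper leaves implicit, including the explicit use of $d>0$ to preserve the strict inequalities.
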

\begin{proof}
	This is a direct result from Lemma \ref{lemma:eigens of B star}.
\end{proof}
From the above lemma, we can find that the eigenvalues of $\tilde{\bB}^\star$ are smaller than $1$, and their lower bound depends on the measurement matrix $\bA$ and the damping $d$.
Combining Lemmas \ref{lemma:eigens of B star} and \ref{lemma:eigenvalues of tilde B}, we have the following theorem.
\begin{theorem}\label{corol:rang of v'}
	Given a finite initialization $\btheta\left(0\right) \in \bbC^{M \times 1}$ and $\bbnu\left(0\right)$ with $-\frac{N-1}{\tilde{\sigma}_z^2}\mathbf{1}\le   \bbnu\left(0\right) \le  \bzero$.
	Then,
	$\btheta\left(t\right)$ in \eqref{equ:intermideate variable} converges to its fixed point  if the damping factor satisfies
	\begin{equation}\label{equ:range of d 0}
		d < \frac{2}{ 1+\frac{\rho\left(N\bI - \bA^H\bA\right)}{N}  }.
	\end{equation}
\end{theorem}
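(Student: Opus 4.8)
The plan is to reduce the claim to a spectral-radius condition and then read off the damping bound directly from the eigenvalue estimates already established. By \lmref{The-synchronous updates}, the sequence $\btheta\left(t\right)$ generated by \eqref{equ:intermideate variable} converges to its fixed point whenever $\rho\left(\tilde{\bB}^\star\right) < 1$, so it suffices to control the spectral radius of the iterating system matrix $\tilde{\bB}^\star$. Because \lmref{lemma:eigenvalues of tilde B} asserts that every eigenvalue $\tilde{\lambda}_i$ of $\tilde{\bB}^\star$ is real, we have $\rho\left(\tilde{\bB}^\star\right) = \max_i \lvert \tilde{\lambda}_i \rvert$, and hence the condition $\rho\left(\tilde{\bB}^\star\right) < 1$ is equivalent to the two-sided bound $-1 < \tilde{\lambda}_i < 1$ holding for all $i \in \setposi{M}$.

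The upper bound is immediate: \lmref{lemma:eigenvalues of tilde B} already gives $\tilde{\lambda}_i < 1$ for every $i$, independently of the choice of $d$. Thus the only remaining requirement is to guarantee the lower bound $\tilde{\lambda}_i > -1$. First I would invoke the lower estimate from \lmref{lemma:eigenvalues of tilde B}, namely
\begin{equation*}
	\tilde{\lambda}_i > 1 - d\left( 1 + \frac{\rho\left(N\bI - \bA^H\bA\right)}{N} \right),
\end{equation*}
and impose that this lower estimate be at least $-1$. Solving the inequality $1 - d\bigl(1 + \rho(N\bI - \bA^H\bA)/N\bigr) \ge -1$ for $d$ yields precisely the stated threshold
\begin{equation*}
	d < \frac{2}{ 1 + \frac{\rho\left(N\bI - \bA^H\bA\right)}{N} },
\end{equation*}
under which $\tilde{\lambda}_i > -1$ is forced for all $i$.

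Combining the two bounds then gives $-1 < \tilde{\lambda}_i < 1$ for every eigenvalue, so $\rho\left(\tilde{\bB}^\star\right) < 1$, and the convergence conclusion follows from \lmref{The-synchronous updates}. I would also note that the hypothesis $-\frac{N-1}{\tilde{\sigma}_z^2}\mathbf{1} \le \bbnu\left(0\right) \le \bzero$ is exactly what \thref{The-theta_1 conv} requires for $\bbnu\left(t\right)$ to converge to the finite fixed point $\bbnu^\star$ defining $\tilde{\bB}^\star$, so this initialization condition is precisely what makes the fixed-point matrix $\tilde{\bB}^\star$ well defined and the eigenvalue estimates applicable. I do not anticipate any genuine obstacle here, since the argument is a direct algebraic combination of \lmref{lemma:eigenvalues of tilde B} and \lmref{The-synchronous updates}; the only point deserving care is the use of the reality of the eigenvalues to convert the spectral-radius condition into the symmetric interval $(-1,1)$, which is what allows the automatic upper bound and the derived lower bound to together certify $\rho\left(\tilde{\bB}^\star\right)<1$.
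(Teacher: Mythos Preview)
Your proposal is correct and follows exactly the paper's approach: the paper states that the result is a direct consequence of \lmref{The-synchronous updates} and \lmref{lemma:eigenvalues of tilde B}, and you have spelled out precisely how these two lemmas combine. The only minor slip is that when you write ``impose that this lower estimate be at least $-1$'' you obtain $d \le 2/(1+\rho(N\bI-\bA^H\bA)/N)$ rather than the strict inequality; but since the lower bound in \lmref{lemma:eigenvalues of tilde B} is itself strict, either yields $\tilde{\lambda}_i > -1$, so the conclusion stands.
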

\begin{proof}
	This is a direct result from Lemmas \ref{The-synchronous updates} and \ref{lemma:eigenvalues of tilde B}.
\end{proof}

From Theorem \ref{corol:rang of v'}, we can find that EIGA will always converge with a sufficiently small damping factor  and the range of $d$ is mainly determined by $\rho\left(N\bI - \bA^H\bA\right)$. 
The spectral radius $\rho\left(N\bI - \bA^H\bA\right)$ depends on the measurement matrix $\bA$.
We next discuss the range of $\rho\left(N\bI - \bA^H\bA\right)$ in the worst case and give the range of damping factor accordingly.
The range of $\rho\left(N\bI - \bA^H\bA\right)$ and the corresponding range of damping factor in massive MIMO-OFDM channel estimation will be discussed later in this section.
\begin{theorem}\label{lemma:radius of NI-A^HA}
	The spectral radius of $N\bI - \bA^H\bA$ satisfies 
	\begin{equation}
		\rho\left(N\bI- \bA^H\bA \right)\le NM -N.
	\end{equation}
	If $\rank{\bA} = 1$, then $\rho\left(N\bI- \bA^H\bA \right) = NM-N$.
\end{theorem}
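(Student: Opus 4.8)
The plan is to reduce the statement to a direct eigenvalue computation for the Hermitian matrix $N\bI - \bA^H\bA$, using the constant (unit) magnitude hypothesis only to pin down the trace of $\bA^H\bA$. First I would observe that $\bA^H\bA$ is Hermitian and positive semidefinite, so its eigenvalues $\lambda_1 \ge \cdots \ge \lambda_M \ge 0$ are real and nonnegative, and the eigenvalues of $N\bI - \bA^H\bA$ are exactly $N - \lambda_i$, $i \in \setposi{M}$. Consequently $\rho\left(N\bI - \bA^H\bA\right) = \max_i \left| N - \lambda_i \right|$, and everything comes down to locating the $\lambda_i$ in an interval.

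The crucial input from the hypothesis is the trace. Since every entry of $\bA$ has unit magnitude, each diagonal entry of $\bA^H\bA$ equals $\sum_{n=1}^{N} \left| a_{nj} \right|^2 = N$, so $\tr{\bA^H\bA} = \sum_i \lambda_i = NM$. Because all $\lambda_i \ge 0$, this forces $0 \le \lambda_i \le NM$ for every $i$, as each nonnegative eigenvalue is dominated by the total sum. Hence $N - \lambda_i \in \left[ N - NM, N \right]$, and therefore $\left| N - \lambda_i \right| \le \max\left( NM - N, N \right) = NM - N$ for $M \ge 2$ (the case $M = 1$ being trivial, since then $\bA^H\bA = N$ and both sides vanish). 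Taking the maximum over $i$ yields $\rho\left(N\bI - \bA^H\bA\right) \le NM - N$.

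For the equality under $\rank{\bA} = 1$, I would use that $\rank{\bA^H\bA} = \rank{\bA} = 1$, so exactly one eigenvalue is nonzero. Since the eigenvalues still sum to the trace $NM$, that single nonzero eigenvalue must equal $NM$, while the remaining $M-1$ eigenvalues are $0$. The spectrum of $N\bI - \bA^H\bA$ is then $N - NM$ together with $N$ repeated $M-1$ times, whose largest modulus is $\max\left( NM - N, N \right) = NM - N$; this attains the upper bound from the first part.

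There is no genuine obstacle here; the two points requiring care are (i) invoking the unit-magnitude assumption precisely where it is needed, namely to evaluate $\tr{\bA^H\bA} = NM$, rather than elsewhere, and (ii) handling the absolute value correctly so that both the most negative eigenvalue $N - \lambda_1$ and the most positive one $N - \lambda_M$ are accounted for, with the dominating contribution $NM - N$ arising from the most negative eigenvalue once $M \ge 2$.
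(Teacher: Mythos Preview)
Your proof is correct and follows essentially the same approach as the paper: both arguments use that $\bA^H\bA$ is positive semidefinite, that the unit-magnitude hypothesis forces $\tr{\bA^H\bA}=NM$, that each nonnegative eigenvalue is bounded above by this trace, and then bound $\left|N-\lambda_i\right|$ by $\max\left(NM-N,N\right)=NM-N$ for $M\ge 2$. For the rank-one case the paper writes $\bA=\ba\bb^H$ explicitly to compute $\rho\left(\bA^H\bA\right)=NM$, whereas you reach the same conclusion slightly more cleanly by noting that $\rank{\bA^H\bA}=1$ forces the single nonzero eigenvalue to equal the trace; this is a cosmetic difference only.
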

\begin{proof}
	See in Appendix \ref{proof:radius of NI-A^HA}.
\end{proof}
\begin{corol}\label{corol:1}
	Given a finite initialization $\btheta\left(0\right) \in \bbC^{M \times 1}$ and $\bbnu\left(0\right)$ with $-\frac{N-1}{\tilde{\sigma}_z^2}\mathbf{1}\le   \bbnu\left(0\right) \le  \bzero$.
	Then,
	$\btheta\left(t\right)$ in \eqref{equ:intermideate variable} converges to its fixed point  if the damping factor satisfies $d < \frac{2}{M}$.
\end{corol}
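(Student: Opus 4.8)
The plan is to obtain \coref{corol:1} as an immediate consequence of the convergence threshold established in \thref{corol:rang of v'} combined with the worst-case spectral-radius bound of \thref{lemma:radius of NI-A^HA}. The underlying idea is that the simple, matrix-free range $d < 2/M$ is always contained in the $\bA$-dependent admissible range $d < 2/(1 + \rho(N\bI - \bA^H\bA)/N)$, so it automatically inherits the convergence guarantee regardless of the particular measurement matrix.

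First I would recall the bound $\rho(N\bI - \bA^H\bA) \le NM - N$ from \thref{lemma:radius of NI-A^HA}. Dividing through by $N > 0$ gives $\rho(N\bI - \bA^H\bA)/N \le M - 1$, and therefore $1 + \rho(N\bI - \bA^H\bA)/N \le M$. Since both quantities are strictly positive, taking reciprocals reverses the inequality and multiplying by $2$ yields
\begin{equation*}
	\frac{2}{M} \le \frac{2}{1 + \frac{\rho(N\bI - \bA^H\bA)}{N}}.
\end{equation*}
Hence any damping factor satisfying $d < 2/M$ also satisfies $d < 2/(1 + \rho(N\bI - \bA^H\bA)/N)$, which is precisely the hypothesis of \thref{corol:rang of v'}. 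Invoking that theorem under the stated initialization $-\frac{N-1}{\tilde{\sigma}_z^2}\mathbf{1} \le \bbnu(0) \le \bzero$ then delivers the convergence of $\btheta(t)$ to its fixed point.

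As for the main obstacle: there is essentially no residual analytic difficulty, since all the substantive work has already been carried out upstream — the decoupling of the $\bbnu$- and $\btheta$-iterations and the spectral-radius criterion in \lmref{The-synchronous updates}, the conversion of that criterion into the damping bound in \thref{corol:rang of v'}, and the worst-case estimate of $\rho(N\bI - \bA^H\bA)$ in \thref{lemma:radius of NI-A^HA}. The only point deserving care is that the bound employed here is the worst case (attained, per \thref{lemma:radius of NI-A^HA}, only when $\rank{\bA} = 1$), so $d < 2/M$ is a conservative, matrix-independent sufficient condition rather than a sharp threshold; whenever $\rho(N\bI - \bA^H\bA)$ can be estimated more tightly for the specific measurement matrices arising in massive MIMO-OFDM, the sharper range of \thref{corol:rang of v'} should be preferred and will permit substantially larger damping factors.
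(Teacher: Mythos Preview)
Your proposal is correct and follows exactly the same approach as the paper, which simply states that the corollary is a direct result of \thref{corol:rang of v'} and \thref{lemma:radius of NI-A^HA}. Your chain of inequalities showing that $d < 2/M$ implies the hypothesis of \thref{corol:rang of v'} merely makes explicit the one arithmetic step the paper leaves to the reader.
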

\begin{proof}
	It is a direct result from Theorems \ref{corol:rang of v'} and \ref{lemma:radius of NI-A^HA}.
\end{proof}

From Corollary \ref{corol:1}, we can find that in the worst case, if $d< \frac{2}{M}$, then EIGA converges.

Now, let us discuss the range of $\rho\left(N\bI - \bA^H\bA\right)$ in massive MIMO-OFDM channel estimation, where the range of $d$ will be expanded.
We first consider the case when general pilot sequences with constant magnitude
property are adopted.
In this case, $\bA$ is defined in \eqref{equ:maMIMO rece signal 4}.
From the definitions in \eqref{equ:def of V} and \eqref{equ:def of F}, we can obtain that $\bV_v\in\bbC^{N_{r,v} \times F_vN_{r,v}}$
and $\bV_h\in\bbC^{N_{r,h} \times F_hN_{r,h}}$ are partial DFT matrices, i.e., $	\bV_v = \tilde{\bI}_{N_{r,v}\times F_vN_{r,v}}\tilde{\bV}_v$ and $\bV_h = \tilde{\bI}_{N_{r,h}\times F_hN_{r,h}}\tilde{\bV}_h$,
%\begin{equation*}
%	\bV_v = \tilde{\bI}_{N_{r,v}\times F_vN_{r,v}}\tilde{\bV}_v,  \ \bV_h = \tilde{\bI}_{N_{r,h}\times F_hN_{r,h}}\tilde{\bV}_h,
%\end{equation*}  
$\tilde{\bV}_v$ and $\tilde{\bV}_h$ are $F_vN_{r,v}$ and $F_hN_{r,h}$ dimensional DFT matrices, respectively, $\tilde{\bI}_{N\times FN}$ is a matrix containing the first $N$ rows of the $FN$ dimensional identity matrix,  $F_v$ and $F_h$ are two fine (oversampling) factors.
$\bF$ can be re-expressed as
\begin{equation*}
	\bF  \triangleq \tilde{\bI}_{N_p\times F_\tau N_p}\tilde{\bF}\tilde{\bI}_{F_\tau N_p\times F_\tau N_f} 	\in \bbC^{N_p \times N_\tau N_f},
\end{equation*}
$\tilde{\bF}$ is the $F_\tau N_p$ dimensional DFT matrix, $\tilde{\bI}_{F_\tau N_p\times F_\tau N_f}$ is a matrix containing the first $F_\tau N_f$ columns of the $F_\tau N_p$ dimensional identity matrix, 
i.e., $\bF$ is the matrix obtained by $\tilde{\bF}$ after row extraction and column extraction.
Similarly, $\bF_d$ in \eqref{equ:def of Fd} can be re-expressed as
\begin{equation}
	\bF_d \triangleq \tilde{\bI}_{N_p\times F_\tau N_p}\tilde{\bF},
\end{equation}
and we have 
\begin{equation}
	\bF = \bF_d\tilde{\bI}_{F_\tau N_p\times F_\tau N_f}.
\end{equation}
From the definitions above, we can obtain that $	\bV_v\bV_v^H = F_vN_{r,v}\bI$, $	\bV_h\bV_h^H = F_hN_{r,h}\bI$, and $	\bF_d\bF_d^H = F_\tau N_p\bI$.
%\begin{equation*}
%	\bV_v\bV_v^H = F_vN_{r,v}\bI, 
%\end{equation*} 
%\begin{equation*}
%	\bV_h\bV_h^H = F_hN_{r,h}\bI,
%\end{equation*}
%\begin{equation*}
%	\bF_d\bF_d^H = F_\tau N_p\bI.
%\end{equation*}
We then have the following theorem.
\begin{theorem}\label{lemma:SR in mMIMO-CE 1}
	For matrix $\bA$ in \eqref{equ:maMIMO rece signal 4}, we have, 
	\begin{equation}
		\rho\left(N\bI - \bA^H\bA\right) \le\left(  KF_vF_hF_\tau-1\right)  N.
	\end{equation}
	In this case, if 
	\begin{equation}
		d < \frac{2}{KF_vF_hF_\tau},	
	\end{equation}
	EIGA converges.
\end{theorem}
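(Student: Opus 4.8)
The plan is to control the extreme eigenvalues of the Hermitian matrix $N\bI - \bA^H\bA$ by bounding $\lambda_{\max}\left(\bA^H\bA\right)$, and then to feed the resulting spectral-radius estimate into \thref{corol:rang of v'}. Since $\bA^H\bA \succeq 0$, every eigenvalue of $N\bI - \bA^H\bA$ has the form $N - \sigma$ with $\sigma\ge 0$ an eigenvalue of $\bA^H\bA$, so that
\begin{equation*}
	\rho\left( N\bI - \bA^H\bA \right) \le \max\braces{ N - \lambda_{\min}\left(\bA^H\bA\right), \ \lambda_{\max}\left(\bA^H\bA\right) - N }.
\end{equation*}
The first argument is at most $N$, so the whole problem reduces to showing $\lambda_{\max}\left(\bA^H\bA\right) \le KF_vF_hF_\tau N$; this yields $\lambda_{\max}\left(\bA^H\bA\right) - N \le \left(KF_vF_hF_\tau - 1\right)N$, which dominates $N$ in the regime of interest ($KF_vF_hF_\tau \ge 2$) and gives the claimed bound.

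First I would remove the column extraction. By the construction in \eqref{equ:maMIMO rece signal 4}, $\bA$ is obtained from $\tilde{\bA} \triangleq \bM^T \otimes \bV$ by deleting the columns that correspond to the zero-variance entries of $\mathrm{vec}\braces{\bH}$, i.e. $\bA = \tilde{\bA}\bE$ for a column-selection matrix $\bE$ whose columns are distinct standard basis vectors. Then $\bE\bE^H$ is a diagonal $0/1$ projection with $\bE\bE^H \preceq \bI$, so $\bA\bA^H = \tilde{\bA}\bE\bE^H\tilde{\bA}^H \preceq \tilde{\bA}\tilde{\bA}^H$, and since $\bA^H\bA$ and $\bA\bA^H$ share the same nonzero spectrum,
\begin{equation*}
	\lambda_{\max}\left(\bA^H\bA\right) = \lambda_{\max}\left(\bA\bA^H\right) \le \lambda_{\max}\left(\tilde{\bA}\tilde{\bA}^H\right).
\end{equation*}

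Next I would exploit the Kronecker structure. Using $\left(\bX\otimes\bY\right)\left(\bX\otimes\bY\right)^H = \bX\bX^H \otimes \bY\bY^H$, we have $\tilde{\bA}\tilde{\bA}^H = \left( \bM^T\bM^* \right) \otimes \left( \bV\bV^H \right)$. The orthogonality relations stated just before the theorem give $\bV\bV^H = \left(\bV_v\bV_v^H\right)\otimes\left(\bV_h\bV_h^H\right) = F_vF_hN_r\bI$. For the other factor, $\bM^T\bM^* = \sum_{k=1}^K \bP_k\bF\bF^H\bP_k^H$; writing $\bF = \bF_d\tilde{\bI}_{F_\tau N_p\times F_\tau N_f}$ shows $\bF\bF^H \preceq \bF_d\bF_d^H = F_\tau N_p\bI$, and since each $\bP_k$ is unitary ($\bP_k\bP_k^H = \bI$ under unit-magnitude pilots) conjugation preserves the Loewner inequality, giving $\bP_k\bF\bF^H\bP_k^H \preceq F_\tau N_p\bI$ and hence $\lambda_{\max}\left(\bM^T\bM^*\right) \le KF_\tau N_p$. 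Because the eigenvalues of a Kronecker product of Hermitian positive semidefinite matrices are the pairwise products, $\lambda_{\max}\left(\tilde{\bA}\tilde{\bA}^H\right) \le KF_\tau N_p \cdot F_vF_hN_r = KF_vF_hF_\tau N$ (using $N = N_rN_p$). This establishes the spectral-radius bound; substituting $\rho\left(N\bI - \bA^H\bA\right)/N \le KF_vF_hF_\tau - 1$ into the criterion of \thref{corol:rang of v'} turns $d < 2/\left(1 + \rho/N\right)$ into the sufficient condition $d < 2/\left(KF_vF_hF_\tau\right)$, as claimed.

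The main obstacle I anticipate is the chain of positive-semidefinite (Loewner-order) manipulations in the third step: one must verify that $\bF\bF^H \preceq F_\tau N_p\bI$ really follows from the row/column-extraction representation of $\bF$ relative to the partial DFT matrix $\bF_d$, that conjugation by the unitary diagonal $\bP_k$ preserves this inequality, and that summing the $K$ user contributions stays within the Loewner order. Keeping the Kronecker eigenvalue bookkeeping consistent, and confirming that the $\lambda_{\max}$ side ($\le N$) is dominated once $KF_vF_hF_\tau \ge 2$, is the remaining routine but error-prone part.
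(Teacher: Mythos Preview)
Your proposal is correct and follows essentially the same route as the paper: pass from $\bA$ to $\tilde{\bA}=\bM^T\otimes\bV$ via the column selection, use the Kronecker structure together with $\bV\bV^H=F_vF_hN_r\bI$ and $\bF\bF^H\preceq \bF_d\bF_d^H=F_\tau N_p\bI$ to bound $\lambda_{\max}(\bA^H\bA)\le KF_vF_hF_\tau N$, and then feed this into \thref{corol:rang of v'}. The only cosmetic difference is that the paper phrases the extraction step via principal-submatrix interlacing on $\bA^H\bA$ and bounds $\rho(\sum_k \bP_k\bF\bF^H\bP_k^H)$ by $K\rho(\bF\bF^H)$ using a Weyl-type inequality, whereas you use the equivalent Loewner-order arguments $\bA\bA^H\preceq\tilde{\bA}\tilde{\bA}^H$ and $\bP_k\bF\bF^H\bP_k^H\preceq F_\tau N_p\bI$; these are the same computations in different packaging.
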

\begin{proof}
	See in Appendix \ref{proof:SR in mMIMO-CE 1}.
\end{proof}

In our simulations, $K = 48$, $M = 29277$, and $F_v= F_h= F_\tau=2$,
when general pilot sequences with constant magnitude
property are adopted, $d < 0.0052$ is sufficient to ensure the convergence of EIGA.
Note that this range is much larger than the worst case  $ d< \frac{2}{M} = 6.8\times 10^{-5}$ in Corollary \ref{corol:1}.
We finally consider the special case, where the adjustable phase shift pilots (APSPs) are used.
In this case, $\bA$ is equal to $\bA_p$ defined in \eqref{equ:A in APSP}.
And we have the following theorem.
\begin{theorem}\label{lemma:SR in mMIMO-CE 2}
	For $\bA$ in \eqref{equ:A in APSP}, we have, 
	\begin{equation}
		\rho\left(N\bI - \bA^H\bA\right) \le  \left(F_vF_hF_\tau -1\right)N.
	\end{equation}
	In this case, if 
	\begin{equation}
		d < \frac{2}{F_vF_hF_\tau},
	\end{equation}
	EIGA converges.
\end{theorem}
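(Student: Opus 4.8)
The plan is to bound $\rho\left(N\bI-\bA^H\bA\right)$ by exploiting the tight-frame structure of $\tilde{\bA}_p=\bF_d\otimes\bV$ and then invoke \thref{corol:rang of v'}. Since $\bA^H\bA=\bA_p^H\bA_p$ is Hermitian and positive semidefinite, its eigenvalues $\lambda_i$ are real and nonnegative, and the eigenvalues of $N\bI-\bA^H\bA$ are $N-\lambda_i$. Hence it suffices to upper bound $\lambda_{\max}(\bA_p^H\bA_p)$: once $\lambda_i\in[0,F_vF_hF_\tau N]$ is established, every $N-\lambda_i$ lies in $[-(F_vF_hF_\tau-1)N,\,N]$, so, using $F_vF_hF_\tau\ge 2$ as holds in the regime of interest, $\rho\left(N\bI-\bA^H\bA\right)=\max_i|N-\lambda_i|\le (F_vF_hF_\tau-1)N$.

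To bound $\lambda_{\max}(\bA_p^H\bA_p)$ I would pass to the Gram matrix on the other side. Because the nonzero eigenvalues of $\bA_p^H\bA_p$ and $\bA_p\bA_p^H$ coincide, $\lambda_{\max}(\bA_p^H\bA_p)=\lambda_{\max}(\bA_p\bA_p^H)$. Writing $\bA_p=\tilde{\bA}_p\bE_p$ from \eqref{equ:A in APSP}, the selection matrix satisfies $\bE_p\bE_p^H=\sum_{i}\be_{q_i}\be_{q_i}^H$, a diagonal coordinate projection; thus $\bE_p\bE_p^H$ is positive semidefinite and dominated by $\bI$. Consequently $\bA_p\bA_p^H=\tilde{\bA}_p\bE_p\bE_p^H\tilde{\bA}_p^H$ is dominated in the positive-semidefinite order by $\tilde{\bA}_p\tilde{\bA}_p^H$, so that $\lambda_{\max}(\bA_p\bA_p^H)\le \lambda_{\max}(\tilde{\bA}_p\tilde{\bA}_p^H)$.

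The remaining computation is the key structural step, and is where the $K$-free improvement over \thref{lemma:SR in mMIMO-CE 1} comes from. Using $\tilde{\bA}_p=\bF_d\otimes\bV$ and the mixed-product rule for the Kronecker product, $\tilde{\bA}_p\tilde{\bA}_p^H=(\bF_d\bF_d^H)\otimes(\bV\bV^H)$. The row-orthogonality identities recorded just above the theorem, namely $\bF_d\bF_d^H=F_\tau N_p\bI$ together with $\bV\bV^H=(\bV_v\bV_v^H)\otimes(\bV_h\bV_h^H)=F_vN_{r,v}F_hN_{r,h}\bI=F_aN_r\bI$, then give $\tilde{\bA}_p\tilde{\bA}_p^H=F_\tau N_p\cdot F_aN_r\,\bI=F_vF_hF_\tau N\,\bI$, since $N=N_rN_p$ and $F_a=F_vF_h$. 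Hence $\lambda_{\max}(\bA_p^H\bA_p)\le F_vF_hF_\tau N$, which establishes the claimed bound on $\rho\left(N\bI-\bA^H\bA\right)$.

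Finally, the convergence claim is immediate from \thref{corol:rang of v'}: with $\rho\left(N\bI-\bA^H\bA\right)/N\le F_vF_hF_\tau-1$ we have $1+\rho\left(N\bI-\bA^H\bA\right)/N\le F_vF_hF_\tau$, so $\frac{2}{1+\rho\left(N\bI-\bA^H\bA\right)/N}\ge\frac{2}{F_vF_hF_\tau}$; thus any $d<\frac{2}{F_vF_hF_\tau}$ also satisfies the sufficient condition of \thref{corol:rang of v'}, whence $\btheta(t)$ converges. I expect no genuine obstacle beyond carefully tracking that it is $\bA_p\bA_p^H$ (rather than $\bA_p^H\bA_p$) whose Kronecker structure collapses to a scalar multiple of the identity; the selection-matrix domination step is precisely what makes this reduction rigorous despite the column extraction by $\bE_p$.
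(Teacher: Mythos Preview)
Your proposal is correct and follows essentially the same route as the paper: bound $\rho(\bA_p^H\bA_p)$ by passing to $\tilde{\bA}_p\tilde{\bA}_p^H$, compute the latter as $F_vF_hF_\tau N\,\bI$ via the Kronecker identities $\bF_d\bF_d^H=F_\tau N_p\bI$ and $\bV\bV^H=F_aN_r\bI$, and then invoke \thref{corol:rang of v'}. The paper's own proof is extremely terse (it simply asserts $\rho(\bA^H\bA)\le\rho(\tilde{\bA}_p^H\tilde{\bA}_p)=\rho(\tilde{\bA}_p\tilde{\bA}_p^H)=F_vF_hF_\tau N$ as ``not difficult''), whereas you supply the PSD-domination argument $\bE_p\bE_p^H\preceq\bI$ explicitly and are careful to flag the implicit hypothesis $F_vF_hF_\tau\ge 2$ needed so that $\max\{N,(F_vF_hF_\tau-1)N\}=(F_vF_hF_\tau-1)N$; the paper leaves both of these tacit (its analogous proof for the non-APSP case uses the principal-submatrix interlacing viewpoint rather than PSD domination, but the two are equivalent here).
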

\begin{proof}
	See in Appendix \ref{proof:SR in mMIMO-CE 2}.
\end{proof}
For the case with $F_v = F_h = F_\tau = 2$,
	$d < 0.25$ is sufficient for EIGA to converge.

\subsection{Fixed Point}\label{sec:fixed point of simplified IGA}

In this subsection, we present the analysis for the fixed point of EIGA. 
The discussion on how to determine the value of $\tilde{\sigma}_z^2$ will also be presented.
Denote the fixed points of the common NP  and the NP of the $m$-projection in EIGA  as ${\bvartheta}^{\star} = \funf[{\btheta}^{\star},{\bbnu}^{\star}]$,  and ${\bvartheta}_{0n}^{\star} = \funf[{\btheta}_{0n}^{\star},\bbnu_{0n}^{\star}], n\in \setposi{N}$, respectively.
Denote the NP of $p_0$ at the fixed point of EIGA as
\begin{equation*}
	\bvartheta_{0}^{\star} = \funf[\btheta_0^{\star}, \bbnu_0^\star] \triangleq \frac{N}{\left({N-1}\right)}\bvartheta^{\star}.
\end{equation*}
Substituting  ${\bvartheta}\left(t+1\right) = {\bvartheta}\left(t\right) = {\bvartheta}^{\star}$ and $\bvartheta_{0n}\left(t\right) = {\bvartheta}_{0n}^{\star}$ into the last equation of
\eqref{equ:update of overlinevartheta in RIGA} (the step $1$ of Algorithm \ref{Alg:RIGA}), 
%and combining the relationship between the NP of $p_0$ and the common NP of $p_1, p_2, \cdots, p_N$ in \eqref{equ:update of vartheta_0 in RIGA2}, 
we can obtain the fixed point
\begin{equation}\label{equ:m-condition of RIGA1}
	\bvartheta_0^{\star}  = \frac{N}{N-1}\overlinebvartheta^{\star} =  \frac{1}{N}\sum_{n=1}^{N}\tildebvartheta_{0n}^{\star}.
\end{equation}
Comparing the first equation in \eqref{equ:m-condition of RIGA1} with $e$-condition in \eqref{equ:m and e conditions}, we can find that the fixed point of EIGA satisfies an alternative version of the $e$-condition since $\overlinebvartheta$ is calculated as the arithmetic mean of $\bvartheta_n, n\in\setposi{N}$. Then, from the second equation in \eqref{equ:m-condition of RIGA1}, we can obtain
\begin{subequations}
	\begin{equation}\label{equ:m-condition of RIGA1 new}
		\btheta_{0}^{\star} = \frac{1}{N}\sum_{n=1}^{N}\btheta_{0n}^{\star}
	\end{equation}
	\begin{equation}\label{equ:m-condition of RIGA2}
		\bbnu_{0}^{\star} = \frac{1}{N}\sum_{n=1}^{N}\bbnu_{0n}^{\star} = \bbnu_{0n}^{\star}, n\in \setposi{N},
	\end{equation}
\end{subequations}
where \eqref{equ:m-condition of RIGA2} comes from $\bbnu_{0n}\left(t\right) =\bbnu_{0n'}\left(t\right) , n,n'\in \setposi{N}, \forall t$.
Denote the means and covariance matrices of $p_0\left(\bh;\bvartheta_{0}^{\star}\right)$, $p_0\left(\bh;\tildebvartheta_{0n}^{\star}\right), n\in\setposi{N}$, and $p_n\left(\bh;\overlinebvartheta^{\star}\right), n\in\setposi{N}$, as 
\begin{subequations}\label{equ:definition of fixed point of RIGA1}
	\begin{equation}\label{equ:fixed point of mu0 and Sigma0 in RIGA}
		\bmu_0^{\star} = \bmu_0\left(\bvartheta_{0}^{\star}\right),\ \bSigma_{0}^{\star} = \bSigma_{0}\left(\bvartheta_{0}^{\star}\right),
	\end{equation}
	\begin{equation}
		\bmu_{0n}^{\star} = \bmu_0\left(\tildebvartheta_{0n}^{\star}\right),\ \bSigma_{0n}^{\star} = \bSigma_{0}\left( \tildebvartheta_{0n}^{\star} \right), n\in\setposi{N},
	\end{equation}
	\begin{equation}\label{equ:fixed point of mun and Sigman in RIGA}
		\bmu_n^{\star} = \bmu_{n}\left(\overlinebvartheta^{\star}\right),\ \bSigma_{n}^{\star} = \bSigma_{n}\left(\overlinebvartheta^{\star}\right), n\in\setposi{N},
	\end{equation}
\end{subequations}
where functions $\bmu_n\left(\cdot\right)$ and $\bSigma_n\left(\cdot\right), n\in\setnnega{N}$, are given by \eqref{equ:mu_0 and Sigma_0} and \eqref{equ:mu_n and Sigma_n}, respectively.
Then, we have the following lemma.

\begin{lemma}\label{lem:m and e-condition of RIGA}
	At the fixed point of EIGA, the mean of $p_0\left(\bh;\bvartheta_{0}^{\star}\right)$ on the OBM is equal to the arithmetic mean of the means of $p_1\left(\bh;\overlinebvartheta^{\star}\right), p_2\left(\bh;\overlinebvartheta^{\star}\right), \cdots, p_N\left(\bh;\overlinebvartheta^{\star}\right)$, on the AMs. 
	Meanwhile, the variance of $p_0\left(\bh;\bvartheta_{0}^{\star}\right)$ is equal to the variance of $p_n\left(\bh;\overlinebvartheta^{\star}\right), n\in\setposi{N}$, i.e., 
%	$\bmu_0^{\star} = \frac{1}{N}\sum_{n=1}^{N}\bmu_n^{\star}$, and $\diag{\bSigma_{0}^{\star}} = \diag{\bSigma_{n}^{\star}}, n\in\setposi{N}$. 
		\begin{subequations}
				\begin{equation}
					\bmu_0^{\star} = \frac{1}{N}\sum_{n=1}^{N}\bmu_n^{\star},
					\end{equation}
			    \begin{equation}\label{equ:Sigma equation of the fixed point in RIGA}
					\diag{\bSigma_{0}^{\star}} = \diag{\bSigma_{n}^{\star}}, n\in\setposi{N}.
				    \end{equation}
			\end{subequations}
\end{lemma}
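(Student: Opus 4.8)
The plan is to derive both identities from the $m$-projection invariance property \eqref{equ:mp invariant}--\eqref{equ:m-p invariant 2}, combined with the two fixed-point identities \eqref{equ:m-condition of RIGA1 new} and \eqref{equ:m-condition of RIGA2} that EIGA enforces at convergence and the closed-form parameterizations \eqref{equ:mu_0 and Sigma_0} and \eqref{equ:mu_n and Sigma_n}. The pivotal observation I would establish first is that, at the fixed point, the covariance $\bSigma_{0n}^{\star}$ of the $m$-projection is independent of $n$ and equal to $\bSigma_{0}^{\star}$; once this is in hand, the covariance claim is immediate and the mean claim reduces to the averaging of the FONPs. For this reason I would prove the covariance identity \eqref{equ:Sigma equation of the fixed point in RIGA} before the mean identity.

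For the covariance, by definition of the EIGA fixed point $p_0\left(\bh;\tildebvartheta_{0n}^{\star}\right)$ is the $m$-projection of $p_n\left(\bh;\overlinebvartheta^{\star}\right)$ onto the OBM, so the general invariance \eqref{equ:m-p invariant 2} applies to this pair and yields $\diag{\bSigma_{n}^{\star}} = \diag{\bSigma_{0n}^{\star}}$ for each $n\in\setposi{N}$. Next I would use that $\bSigma_0\left(\cdot\right)$ in \eqref{equ:Sigma_0} depends on its argument only through the SONP: since \eqref{equ:m-condition of RIGA2} asserts $\bbnu_{0n}^{\star} = \bbnu_{0}^{\star}$ for every $n$, it follows that $\bSigma_{0n}^{\star} = \bSigma_0\left(\tildebvartheta_{0n}^{\star}\right) = \bSigma_0\left(\bvartheta_{0}^{\star}\right) = \bSigma_{0}^{\star}$. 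Chaining the two gives $\diag{\bSigma_{n}^{\star}} = \diag{\bSigma_{0}^{\star}}$, which is \eqref{equ:Sigma equation of the fixed point in RIGA}. As a cross-check, the same conclusion follows directly from the constant-magnitude structure: in \eqref{equ:Sigma_n} the matrix $\bLambda^{\star}\bgamma_n\bgamma_n^H\bLambda^{\star}$ shares the same diagonal as $\left(\bLambda^{\star}\right)^2$ for every $n$ because $\left|\left[\bgamma_n\right]_m\right|=1$, while \eqref{equ:condition of nu} identifies $\bSigma_{0}^{\star}$ with $\bLambda^{\star} - \tfrac{1}{\beta^{\star}}\left(\bLambda^{\star}\right)^2$.

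For the mean I would again invoke \eqref{equ:m-p invariant 2}, whose first block yields $\bmu_{n}^{\star} = \bmu_{0n}^{\star} = \bmu_0\left(\tildebvartheta_{0n}^{\star}\right)$. Writing $\bmu_0\left(\tildebvartheta_{0n}^{\star}\right) = \tfrac{1}{2}\bSigma_{0n}^{\star}\btheta_{0n}^{\star}$ via \eqref{equ:mu_0} and substituting $\bSigma_{0n}^{\star} = \bSigma_{0}^{\star}$ from the previous step, I can factor the common matrix $\bSigma_{0}^{\star}$ out of the arithmetic mean:
\begin{equation*}
	\frac{1}{N}\sum_{n=1}^{N}\bmu_{n}^{\star}
	= \frac{1}{2}\bSigma_{0}^{\star}\left( \frac{1}{N}\sum_{n=1}^{N}\btheta_{0n}^{\star} \right)
	= \frac{1}{2}\bSigma_{0}^{\star}\btheta_{0}^{\star}
	= \bmu_{0}^{\star},
\end{equation*}
where the middle equality is the FONP averaging \eqref{equ:m-condition of RIGA1 new} and the last is \eqref{equ:mu_0} applied to $\bvartheta_{0}^{\star}$. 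The step I expect to require the most care is precisely the $n$-independence $\bSigma_{0n}^{\star}=\bSigma_{0}^{\star}$: it rests on \eqref{equ:m-condition of RIGA2}, which is itself a consequence of the constant-magnitude entry property established in Theorem \ref{the:same nu}, so I would make sure that hypothesis is explicitly invoked rather than silently assumed. Everything else is direct substitution into the parameterizations and the invariance identity.
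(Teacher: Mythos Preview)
Your proposal is correct and follows essentially the same approach as the paper: both proofs combine the $m$-projection invariance \eqref{equ:mp invariant} with the fixed-point identities \eqref{equ:m-condition of RIGA1 new}--\eqref{equ:m-condition of RIGA2} and the parameterizations \eqref{equ:mu_0 and Sigma_0}, first establishing $\bSigma_{0n}^{\star}=\bSigma_{0}^{\star}$ and then factoring it out of the averaged FONPs. The only minor remark is that the $n$-independence in \eqref{equ:m-condition of RIGA2} is, in EIGA, a direct consequence of the common-NP formula \eqref{equ:tilde nu} rather than of Theorem~\ref{the:same nu} (which concerns the original IGA), though the underlying constant-magnitude mechanism is indeed the same.
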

\begin{proof}
	See in Appendix \ref{proof:m and e-condition of RIGA}.
\end{proof}

From Lemma \ref{lem:m and e-condition of RIGA}, the two conditions of the fixed point of EIGA are summarized as
\begin{equation}\label{equ:m and e-conditions of RIGA}
	\begin{cases}
		m\textrm{-condition:}\ \bbeta_0\left(\bvartheta_0^{\star}\right) = \frac{1}{N}\sum_{n=1}^{N}\bbeta_n\left(\overlinebvartheta^{\star}\right), \\
		\ e\textrm{-condition:}\ \bvartheta_{0}^{\star} = \frac{N}{N-1}\overlinebvartheta^{\star},
	\end{cases}
\end{equation}
where $\bbeta_n\left(\cdot\right) = \left[ \bmu_{n}^T\left(\cdot\right) \ \mathrm{diag}^T\braces{\bSigma_{n}\left(\cdot\right)} \right]^T \in \bbC^{2M\times 1}$.

For the fixed point of EIGA in the asymptotic case, we first present the following theorem.
\begin{theorem}\label{the:the expression of mu_0^star in S-IGA}
	If the initialization of the SONP of the common NP in EIGA  satisfies $\overlinebnu\left(0\right) \le  0$, then, the fixed points of the SONPs of the NP of $p_0$ satisfy $\bbnu_0^{\star} <0$, and the fixed point of $\bmu_0$ defined in \eqref{equ:fixed point of mu0 and Sigma0 in RIGA} satisfies 
%	$\bmu_0^{\star} = \bD\left[ \bA^H\bA\left( \bD- \frac{1}{N}\bLambda^{\star} \right) + \beta^{\star}\bI  \right]^{-1}\bA^H\by$,
		\begin{equation}\label{equ:mu_0^star 1}
				\bmu_0^{\star} = \bD\left[ \bA^H\bA\left( \bD- \frac{1}{N}\bLambda^{\star} \right) + \beta^{\star}\bI  \right]^{-1}\bA^H\by,
			\end{equation}
	where 
	\begin{subequations}\label{equ:Lambda^star and beta^star}
		\begin{equation}\label{equ:Lambda^star first}
			\bLambda^{\star} \triangleq \left( \bD^{-1} - \Diag{\overlinebnu^{\star}} \right)^{-1},
		\end{equation}
		\begin{equation}\label{equ:beta^star first}
			\beta^{\star} \triangleq \tilde{\sigma}_z^2 + \tr{\bLambda^{\star}} > 0.
		\end{equation}
	\end{subequations}
\end{theorem}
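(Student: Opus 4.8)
The plan is to reduce the whole statement to two ingredients already in hand: the convergence result of Theorem~\ref{The-theta_1 conv}, which controls the sign of the second-order parameters, and the componentwise identity \eqref{equ:condition of nu}, which will do all the work in the final algebraic collapse. I would first settle the sign claim. Since the admissible initialization obeys $\overlinebnu\left(0\right)\le\bzero$ and lies in the range $\tilde{\bg}_{\textrm{min}}\le\overlinebnu\left(0\right)\le\bzero$, Theorem~\ref{The-theta_1 conv} gives a finite fixed point with $\tilde{\bg}_{\textrm{min}}<\overlinebnu^{\star}<\bzero$. The $e$-condition $\bvartheta_0^{\star}=\frac{N}{N-1}\overlinebvartheta^{\star}$ in \eqref{equ:m and e-conditions of RIGA} transfers this to its second-order block, so $\bbnu_0^{\star}=\frac{N}{N-1}\overlinebnu^{\star}<\bzero$. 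Consequently $\bD^{-1}-\Diag{\bbnu_0^{\star}}$, and hence $\bSigma_0^{\star}$, is positive definite and $\beta^{\star}=\tilde{\sigma}_z^2+\tr{\bLambda^{\star}}>0$, so every inverse appearing below is well defined.

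Next I would rewrite $\bmu_0^{\star}$. By \eqref{equ:mu_0} and \eqref{equ:Sigma_0}, $\bmu_0^{\star}=\frac{1}{2}\bSigma_0^{\star}\btheta_0^{\star}$ with $\bSigma_0^{\star}=(\bD^{-1}-\Diag{\bbnu_0^{\star}})^{-1}$; substituting $\bbnu_0^{\star}=\frac{N}{N-1}\overlinebnu^{\star}$ and invoking \eqref{equ:condition of nu} identifies $\bSigma_0^{\star}=\bLambda^{\star}-\frac{1}{\beta^{\star}}(\bLambda^{\star})^2=\bLambda^{\star}\bW$, where $\bW\triangleq\bI-\frac{1}{\beta^{\star}}\bLambda^{\star}$. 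For $\btheta_0^{\star}$ I would use that the $\btheta$-iteration \eqref{equ:intermideate variable} is affine, so at the fixed point $\overlinebtheta^{\star}=(\bI-\tilde{\bB}^{\star})^{-1}\bb^{\star}$ (the matrix $\bI-\tilde{\bB}^{\star}$ is invertible because all eigenvalues of $\tilde{\bB}^{\star}$ are strictly below $1$ by Lemma~\ref{lemma:eigenvalues of tilde B}). Writing $\bI-\tilde{\bB}^{\star}=d(\bI-\bB^{\star})$ from \eqref{equ:tilde B 0} and factoring $\bI-\bB^{\star}=\bW^{-1}\bK$ with $\bK\triangleq\bW-\frac{N-1}{\beta^{\star}}(\bI-\frac{1}{N}\bA^H\bA)\bLambda^{\star}$ obtained from \eqref{equ:tilde B star 1}, the damping $d$ and the $\bW^{-1}$ factor carried by $\bb^{\star}$ cancel, yielding $\btheta_0^{\star}=\frac{N}{N-1}\overlinebtheta^{\star}=\frac{2}{\beta^{\star}}\bK^{-1}\bA^H\by$. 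Therefore $\bmu_0^{\star}=\frac{1}{\beta^{\star}}\bSigma_0^{\star}\bK^{-1}\bA^H\by$.

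It then remains to collapse this into the closed form, i.e.\ to show $\frac{1}{\beta^{\star}}\bSigma_0^{\star}\bK^{-1}=\bD\bX^{-1}$ with $\bX\triangleq\bA^H\bA(\bD-\frac{1}{N}\bLambda^{\star})+\beta^{\star}\bI$, which is equivalent to $\beta^{\star}\bK(\bSigma_0^{\star})^{-1}\bD=\bX$. Using the diagonal commutativity of $\bLambda^{\star}$ and $\bW$, a short computation gives $\bK(\bSigma_0^{\star})^{-1}=(\bLambda^{\star})^{-1}-\frac{N-1}{\beta^{\star}}(\bI-\frac{1}{N}\bA^H\bA)\bW^{-1}$, so that $\beta^{\star}\bK(\bSigma_0^{\star})^{-1}\bD=\beta^{\star}(\bLambda^{\star})^{-1}\bD-(N-1)\bW^{-1}\bD+\frac{N-1}{N}\bA^H\bA\bW^{-1}\bD$. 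To match $\bX$ it suffices to verify two diagonal identities: matching the left factor of $\bA^H\bA$ requires $\frac{N-1}{N}\bW^{-1}\bD=\bD-\frac{1}{N}\bLambda^{\star}$, and matching the purely diagonal remainder requires $\beta^{\star}(\bLambda^{\star})^{-1}\bD-(N-1)\bW^{-1}\bD=\beta^{\star}\bI$. Both are diagonal, hence entrywise, and each reduces to the single scalar relation encoded in \eqref{equ:condition of nu}, namely $\beta^{\star}(d_i-\lambda_i)=\lambda_i(Nd_i-\lambda_i)$ with $d_i=[\bD]_{ii}$ and $\lambda_i=[\bLambda^{\star}]_{ii}$. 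Checking these two scalar equalities finishes the argument and yields \eqref{equ:mu_0^star 1}.

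The main obstacle is the last step. The difficulty is purely organizational but easy to mishandle: one must arrange the product $\bK(\bSigma_0^{\star})^{-1}$ so that the non-commuting factor $\bA^H\bA$ is isolated with a common diagonal multiplier on both sides, after which the matrix identity splits cleanly into an $\bA^H\bA$-coefficient identity and a diagonal-remainder identity. Keeping track of the $\bW^{-1}$ and $\bLambda^{\star}$ factors so that these two pieces emerge in exactly the form that \eqref{equ:condition of nu} certifies is where the bookkeeping is delicate; the constant-magnitude property enters implicitly here through $\beta^{\star}=\tilde{\sigma}_z^2+\tr{\bLambda^{\star}}$ being $n$-independent and through $\sum_n\bgamma_n\bgamma_n^H=\bA^H\bA$, which are what make a single common $\bLambda^{\star}$ and a single $\bK$ legitimate in the first place.
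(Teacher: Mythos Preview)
Your proposal is correct, and the algebraic verifications you sketch all go through: in particular, both diagonal identities you isolate at the end are equivalent to the single scalar relation $(\beta^\star-\lambda_i)(Nd_i-\lambda_i)=(N-1)\beta^\star d_i$, which is exactly \eqref{equ:condition of nu} unpacked entrywise. The cancellation of $d$ and of $\bW^{-1}$ when passing from $(\bI-\tilde{\bB}^\star)^{-1}\bb^\star$ to $\frac{2(N-1)}{N\beta^\star}\bK^{-1}\bA^H\by$ is also correct, and the invertibility of $\bI-\tilde{\bB}^\star$ (hence of $\bK$, hence of $\bX$ via your identity) is justified by Lemma~\ref{lemma:eigenvalues of tilde B} as you say.

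Your route is, however, organized differently from the paper's. The paper does not solve the affine fixed point of the $\btheta$-iteration directly; instead it invokes Lemma~\ref{lem:m and e-condition of RIGA}, the $m$-condition $\bmu_0^\star=\frac{1}{N}\sum_{n=1}^N\bmu_n^\star$, expands each $\bmu_n^\star=\bSigma_n^\star\big(\frac{y_n}{\tilde{\sigma}_z^2}\bgamma_n+\frac{1}{2}\overlinebtheta^\star\big)$, uses the $e$-condition and $\btheta_0^\star=2(\bSigma_0^\star)^{-1}\bmu_0^\star$ to obtain a self-consistency equation $\bmu_0^\star=\bQ\bmu_0^\star+\bq$, and then simplifies $(\bI-\bQ)^{-1}\bq$. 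Your approach bypasses Lemma~\ref{lem:m and e-condition of RIGA} entirely, working purely at the level of the iterating system \eqref{equ:intermideate variable}--\eqref{equ:tilde B 0} and the fixed-point relation \eqref{equ:condition of nu}. This is more elementary in that it does not appeal to the $m$-projection interpretation, at the cost of being slightly less transparent about \emph{where} the constant-magnitude structure enters (in the paper's route it enters visibly through $\bgamma_n^H\bLambda^\star\bgamma_n=\tr{\bLambda^\star}$ and $\sum_n\bgamma_n\bgamma_n^H=\bA^H\bA$ when summing the $\bmu_n^\star$; in yours it is already baked into the form of $\bB^\star$ and $\beta^\star$). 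Both approaches reduce to the same scalar identity from \eqref{equ:condition of nu} at the final collapse.
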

\begin{proof}
	See in Appendix \ref{proof:the expression of mu_0^star in S-IGA}.
\end{proof}
Theorem \ref{the:the expression of mu_0^star in S-IGA} provides the expression of $\bmu_0^{\star}$ in EIGA.
We then show that $\bmu_{0}^{\star}$ above can be asymptotically optimal when $M < N$ and $N$ tends to infinity, where $M$ and $N$ are the numbers of the variables to be estimated and the observations, respectively.
We first define an injection as $f: \bbR^{+}\to\bbR$, 
\begin{equation}\label{equ:function f beta}
	f\left(x\right) = x - \tr{\left( \bD^{-1} + \frac{N-1}{x}\bI \right)^{-1}}, x>0.
\end{equation}
%\begin{defin}
%	Define an injection $f: \bbR^{+}\to\bbR$ w.r.t. $x>0$ as
%	\begin{equation}\label{equ:function f beta}
	%		f\left(x\right) = x - \tr{\left( \bD^{-1} + \frac{N-1}{x}\bI \right)^{-1}}.
	%	\end{equation}
%\end{defin}
$f\left(x\right)$ plays an important role in the calculation of the virtual noise variance.
\begin{lemma}\label{lemma:bijection of f}
	If $M<N$, then $f\left(x\right)$ is a monotonically increasing function, and we have 
%	$f\left(x\right) = f\left(y\right) \iff x = y$,
		\begin{equation*}
				f\left(x\right) = f\left(y\right) \iff x = y,
			\end{equation*} 
	where $x, y, f\left(x\right), f\left(y\right) >0$.
\end{lemma}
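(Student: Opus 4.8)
The plan is to exploit the fact that $\bD$ is positive definite and diagonal, so that the matrix appearing inside the trace is itself diagonal and the trace collapses to an explicit scalar sum. Writing $d_i>0$ for the $i$-th diagonal entry of $\bD$, the matrix $\bD^{-1}+\frac{N-1}{x}\bI$ is diagonal with strictly positive entries $\frac{1}{d_i}+\frac{N-1}{x}$ for every $x>0$, hence invertible, and
\[
	\tr{\left( \bD^{-1} + \frac{N-1}{x}\bI \right)^{-1}} = \sum_{i=1}^{M} \frac{d_i x}{x + \left(N-1\right)d_i}.
\]
This reduces $f$ to the scalar function $f\left(x\right) = x - \sum_{i=1}^{M} \frac{d_i x}{x + \left(N-1\right)d_i}$, which is differentiable on $\bbR^{+}$. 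I would make the injectivity claim follow from strict monotonicity, so the whole argument rests on showing $f'\left(x\right)>0$.

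Next I would differentiate term by term. A direct computation gives
\[
	f'\left(x\right) = 1 - \sum_{i=1}^{M} \frac{\left(N-1\right)d_i^2}{\left( x + \left(N-1\right)d_i \right)^2}.
\]
Since $x>0$ forces $x+\left(N-1\right)d_i>\left(N-1\right)d_i>0$, each summand obeys $\frac{\left(N-1\right)d_i^2}{\left(x+\left(N-1\right)d_i\right)^2} < \frac{\left(N-1\right)d_i^2}{\left(N-1\right)^2 d_i^2} = \frac{1}{N-1}$, and summing over the $M$ indices yields $\sum_{i=1}^{M} \frac{\left(N-1\right)d_i^2}{\left(x+\left(N-1\right)d_i\right)^2} < \frac{M}{N-1}$. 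This is exactly where the hypothesis enters: for integer dimensions, $M<N$ is equivalent to $M\le N-1$, so $\frac{M}{N-1}\le 1$, and therefore $f'\left(x\right) > 1-\frac{M}{N-1} \ge 0$ strictly for every $x>0$.

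Strict positivity of the derivative on all of $\bbR^{+}$ establishes that $f$ is strictly, hence monotonically, increasing, which is the first claim. The equivalence $f\left(x\right)=f\left(y\right)\iff x=y$ then follows immediately: the $\Leftarrow$ direction is trivial, and the $\Rightarrow$ direction is just injectivity, a consequence of strict monotonicity; the extra constraints $f\left(x\right),f\left(y\right)>0$ are immaterial to the argument, since injectivity already holds on the whole domain $\bbR^{+}$ and a fortiori on the subset where $f>0$. I do not expect a genuine obstacle here, as the argument is an elementary calculus estimate; the only point demanding care is the bound on $f'\left(x\right)$, where the assumption $M<N$ must be invoked precisely in the form $M\le N-1$ to close the inequality $\frac{M}{N-1}\le 1$.
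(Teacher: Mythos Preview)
Your proposal is correct and matches the paper's own proof essentially line for line: both rewrite the trace as the scalar sum $\sum_{i=1}^M \frac{d_i x}{x+(N-1)d_i}$, differentiate to obtain $f'(x)=1-\sum_{i=1}^M\frac{(N-1)d_i^2}{(x+(N-1)d_i)^2}$, bound each summand by $\frac{1}{N-1}$, and invoke $M\le N-1$ to conclude $f'(x)>0$. The only extra remark in the paper is that $f(0)=0$ forces $f(x)>0$ for $x>0$, which you rightly note is immaterial to the injectivity claim.
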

\begin{proof}
	See in Appendix \ref{proof:bijection of f}.
\end{proof}
Based on Lemma \ref{lemma:bijection of f}, we have the following theorem, which illustrates how we can determine the virtual noise variance.
\begin{theorem}\label{the:asymptotic value of mu_0 under case 2}
	When the initialization of the SONP of the common NP in EIGA satisfies $\bbnu\left(0\right) \le  0$ and $M <N$, the asymptotic values of $\bLambda^{\star}$ and $f\left(\beta^{\star}\right)$ as $N$ tends to infinity satisfy 
%	$\lim\limits_{N\to \infty}\left[ \bLambda^{\star} \right]_{i,i} = 0, i\in \setposi{M}$, and $		\lim\limits_{N\to \infty}f\left(\beta^{\star}\right) = \tilde{\sigma}_z^2$.
		\begin{subequations}
				\begin{equation}
						\lim\limits_{N\to \infty}\left[ \bLambda^{\star} \right]_{i,i} = 0, i\in \setposi{M},
					\end{equation}
			\begin{equation}
					\lim\limits_{N\to \infty}f\left(\beta^{\star}\right) = \tilde{\sigma}_z^2.
				\end{equation}
			\end{subequations}
	Then, if $\tilde{\sigma}_z^2 = f\left(\sigma_z^2\right)$, we can obtain $\lim\limits_{N\to\infty}\beta^{\star} = \sigma_z^2$ and 
%	$	\lim\limits_{N\to \infty}\bmu_{0}^{\star} = \tilde{\bmu}$,
	\begin{equation}\label{equ:Thm. equation}
		\lim\limits_{N\to \infty}\bmu_{0}^{\star} = \tilde{\bmu},
	\end{equation}
	where $\tilde{\bmu}$ is the \textsl{a posteriori} mean in \eqref{equ:post mean}.
	Moreover,  when $M$ is fixed, 
	we have $\lim\limits_{N\to\infty}f\left(\sigma_z^2\right) = \sigma_z^2$. 
	In this case, $\lim\limits_{N\to \infty}\bmu_{0}^{\star} = \tilde{\bmu}$  holds if either $\tilde{\sigma}_z^2 = \sigma_z^2$ or $\tilde{\sigma}_z^2 = f\left(\sigma_z^2\right)$ is satisfied.
\end{theorem}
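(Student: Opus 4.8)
The plan is to pin down the asymptotics of the scalar $\beta^{\star}$ and the diagonal matrix $\bLambda^{\star}$ first, and then transfer these to $\bmu_0^{\star}$ through the closed form of Theorem \ref{the:the expression of mu_0^star in S-IGA}. First I would show that $\beta^{\star}$ stays bounded: since $\bLambda^{\star}$ and $\bD$ are diagonal, Lemma \ref{Lemma-spectral radius} gives $[\bLambda^{\star}]_{i,i}\le \rho(\bLambda^{\star})<\beta^{\star}/N$ for every $i$, hence $\tr{\bLambda^{\star}}<M\beta^{\star}/N$. Substituting into $\beta^{\star}=\tilde{\sigma}_z^2+\tr{\bLambda^{\star}}$ from \eqref{equ:beta^star first} yields $\beta^{\star}(1-M/N)<\tilde{\sigma}_z^2$, so $\beta^{\star}<\tilde{\sigma}_z^2/(1-M/N)$, which is bounded whenever $M/N$ is bounded away from $1$. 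The same inequality $[\bLambda^{\star}]_{i,i}<\beta^{\star}/N$ then forces $\lim_{N\to\infty}[\bLambda^{\star}]_{i,i}=0$, giving the first claimed limit.

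Next I would prove $\lim_{N\to\infty}f(\beta^{\star})=\tilde{\sigma}_z^2$. Working component-wise with $d_i\triangleq[\bD]_{i,i}$, the fixed-point identity \eqref{equ:condition of nu} reduces (cancel the common factor and cross-multiply) to the scalar relation $[\bLambda^{\star}]_{i,i}=\beta^{\star}d_i/(Nd_i+\beta^{\star}-[\bLambda^{\star}]_{i,i})$. On the other hand, \eqref{equ:function f beta} and \eqref{equ:beta^star first} give $f(\beta^{\star})-\tilde{\sigma}_z^2=\tr{\bLambda^{\star}}-\tr{(\bD^{-1}+\tfrac{N-1}{\beta^{\star}}\bI)^{-1}}=\sum_i([\bLambda^{\star}]_{i,i}-\mu_i)$ with $\mu_i\triangleq\beta^{\star}d_i/((N-1)d_i+\beta^{\star})$. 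Since $\mu_i$ has the same form as $[\bLambda^{\star}]_{i,i}$ but with $-d_i$ replacing $-[\bLambda^{\star}]_{i,i}$ in the denominator, subtracting gives $[\bLambda^{\star}]_{i,i}-\mu_i=\beta^{\star}d_i([\bLambda^{\star}]_{i,i}-d_i)/\big((Nd_i+\beta^{\star}-[\bLambda^{\star}]_{i,i})((N-1)d_i+\beta^{\star})\big)$. Using $[\bLambda^{\star}]_{i,i}<\beta^{\star}/N$ (which also forces $[\bLambda^{\star}]_{i,i}<2d_i$, hence $|[\bLambda^{\star}]_{i,i}-d_i|<d_i$) together with the denominator lower bound $N(N-1)d_i^2$, each summand is at most $\beta^{\star}/(N(N-1))$; summing the $M<N$ terms yields $|f(\beta^{\star})-\tilde{\sigma}_z^2|<\beta^{\star}/(N-1)\to0$, which is the second claimed limit.

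With both limits in hand I would conclude as follows. Choosing $\tilde{\sigma}_z^2=f(\sigma_z^2)$ makes $f(\beta^{\star})\to f(\sigma_z^2)$. Differentiating \eqref{equ:function f beta} gives $f'(x)=1-\sum_i (N-1)d_i^2/(x+(N-1)d_i)^2>1-M/(N-1)$ uniformly in $x$, so $f$ is strictly increasing with derivative bounded below away from $0$ when $M/N$ is bounded away from $1$ (this is the quantitative form of Lemma \ref{lemma:bijection of f}); the mean value theorem then turns $f(\beta^{\star})-f(\sigma_z^2)\to0$ into $\beta^{\star}\to\sigma_z^2$. Finally, in $\bmu_0^{\star}=\bD[\bA^H\bA(\bD-\tfrac1N\bLambda^{\star})+\beta^{\star}\bI]^{-1}\bA^H\by$ from Theorem \ref{the:the expression of mu_0^star in S-IGA}, the bracketed matrix differs from the MMSE matrix $\bA^H\bA\bD+\sigma_z^2\bI$ by $-\tfrac1N\bA^H\bA\bLambda^{\star}+(\beta^{\star}-\sigma_z^2)\bI$; the second term vanishes since $\beta^{\star}\to\sigma_z^2$, and the first has operator norm at most $\tfrac1N\rho(\bA^H\bA)\rho(\bLambda^{\star})\to0$. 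The resolvent identity together with boundedness of the limiting resolvent then gives $\bmu_0^{\star}\to\tilde{\bmu}$. The $M$-fixed statement is even more direct: then $\tr{\bLambda^{\star}}\to0$, so $f(\sigma_z^2)\to\sigma_z^2$ and $\beta^{\star}\to\tilde{\sigma}_z^2$, whence either $\tilde{\sigma}_z^2=\sigma_z^2$ or $\tilde{\sigma}_z^2=f(\sigma_z^2)$ drives $\beta^{\star}\to\sigma_z^2$.

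The step I expect to be the main obstacle is controlling the perturbation $\tfrac1N\bA^H\bA\bLambda^{\star}$ inside the matrix inverse when $M$ grows proportionally to $N$. The naive bound $\rho(\bA^H\bA)\le\sqrnormf{\bA}=NM$ only gives $\tfrac1N\rho(\bA^H\bA)\rho(\bLambda^{\star})=\Order{M/N}$, which does not vanish; the argument must instead invoke the structural estimate $\rho(\bA^H\bA)=\Order{N}$ implied by Theorems \ref{lemma:SR in mMIMO-CE 1} and \ref{lemma:SR in mMIMO-CE 2} (equivalently the bounds on $\rho(N\bI-\bA^H\bA)$), which makes the perturbation $\Order{1/N}$ and is precisely where the constant-magnitude structure of the measurement matrix is essential. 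A secondary technical point is that all three limits implicitly require $M/N$ bounded away from $1$ — so that $\beta^{\star}$ stays bounded and $f'$ stays bounded below — which is the effective content of the hypothesis $M<N$.
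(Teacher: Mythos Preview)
Your proposal is correct and follows the same overall arc as the paper's proof---bound $\bLambda^{\star}$, show $f(\beta^{\star})\to\tilde\sigma_z^2$, invert $f$, then substitute into the closed form for $\bmu_0^{\star}$---but the detailed computations differ in several places, and in two of them your version is actually more careful than the paper's.

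For the bound $[\bLambda^{\star}]_{i,i}=O(1/N)$, you invoke Lemma~\ref{Lemma-spectral radius} directly, whereas the paper re-derives the same inequality by tracking the $e$-condition through \eqref{equ:Sigma_0^star}, \eqref{equ:Sigma_n^star} and the identity \eqref{equ:nu0 divide N}. For the limit $f(\beta^{\star})\to\tilde\sigma_z^2$, you use the scalar fixed-point identity $[\bLambda^{\star}]_{i,i}=\beta^{\star}d_i/(Nd_i+\beta^{\star}-[\bLambda^{\star}]_{i,i})$ extracted from \eqref{equ:condition of nu} and compare term-by-term with the corresponding term of $f$; the paper instead sandwiches $\overlinebnu^{\star}/(N-1)$ between $-\tfrac{N-1}{(N-2)\beta^{\star}}\bone$ and $-\tfrac{1}{\beta^{\star}}\bone$ and bounds $\tr{\bLambda^{\star}}$ between two auxiliary traces $f_0$ and $f_2$. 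Both routes give the same $O\big(M/(N(N-1))\big)$ error. One cosmetic slip: your remark ``$[\bLambda^{\star}]_{i,i}<\beta^{\star}/N$ forces $[\bLambda^{\star}]_{i,i}<2d_i$'' is not the right reason---what you actually need is $0<[\bLambda^{\star}]_{i,i}<d_i$, which follows immediately from \eqref{equ:Lambda^star first} and $\overlinebnu^{\star}<0$.

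Your mean-value-theorem step (using $f'(x)>1-\tfrac{M}{N-1}$ uniformly) is sharper than the paper's bare appeal to Lemma~\ref{lemma:bijection of f}, which only gives injectivity of each $f_N$ and does not by itself control $\beta^{\star}_N-\sigma_z^2$ as $N$ varies. Likewise, your closing paragraph correctly isolates the step the paper glosses over: after substituting into \eqref{equ:mu_0^star 1}, the perturbation $\tfrac1N\bA^H\bA\bLambda^{\star}$ needs $\rho(\bA^H\bA)=O(N)$ to vanish, and that is exactly what Theorems~\ref{lemma:SR in mMIMO-CE 1}--\ref{lemma:SR in mMIMO-CE 2} supply for the structured measurement matrices considered here. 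The paper's proof simply writes the limit without this justification. Your caveat that the argument tacitly needs $M/N$ bounded away from~$1$ (so that $\beta^{\star}$ stays bounded and $f'$ stays bounded below) is accurate; the paper's alternative bound $\beta^{\star}<\tilde\sigma_z^2+\tr{\bD}$ carries the analogous implicit assumption that $\tr{\bD}$ is bounded.
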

\begin{proof}
	See in Appendix \ref{proof:asymptotic value of mu_0 under case 2}.
\end{proof}
Theorem \ref{the:asymptotic value of mu_0 under case 2}  provides the asymptotic values of $\bLambda^{\star}$ and $f\left(\beta^{\star}\right)$ when $N$ tends to infinity and $M < N$.
It also shows that if we set the virtual noise variance as the exact one, i.e., $\tilde{\sigma}^2 = \sigma_z^2$, then $\bmu_{0}^{\star}$ is asymptotically optimal as $N$ tends to infinity when $M$ is fixed.
Most importantly, $\bmu_{0}^{\star}$ is asymptotically optimal as $N$ tends to infinity if $\tilde{\sigma}_z^2$ is set to be $f\left(\sigma_z^2\right)$ and $M<N$. 
In massive MIMO-OFDM channel estimation, $M$ can be large when the number of users is large. In order to ensure that $M < N$ and guarantee asymptotically optimal performance of EIGA, an appropriate number of users can be chosen by using the statistical CSI of users in the BS.
Meanwhile, it can be checked that $0<f\left(\sigma_z^2\right) < \sigma_z^2$.

\section{Simulation Results}

In this section, we provide simulation results to illustrate
the complexity and performance of the proposed EIGA for massive MIMO-OFDM channel estimation. 
{The widely adopted QuaDRiGa \cite{quadriga} is used to generate the SF domain channel $\bG_{k}$ for each user.}
The simulation scenario is set as "3GPP\_38.901\_UMa", and main parameters for the simulations are summarized in Table \ref{tab:para}. 
%Note that similar channel estimation results can be achieved under LOS conditions.
\begin{table}[htbp]
	\centering
	\caption{Parameter Settings of the QuaDRiGa}\label{tab:para}
	%	\footnotesize
	\begin{tabular}{cc}
		\hline
		Parameter &Value \\
		\hline
		Number of BS antenna $N_{r,v}\times N_{r,h}$ & $8\times 16$ \\
		UT number $K$ & $48$ \\
		Center frequency $f_c$ & $4.8$GHz \\
		Number of training subcarriers $N_p$ & $360$ \\
		Subcarrier spacing $\Delta_f$ & $15$kHz \\
		Number of subcarriers $N_c$ & $2048$ \\
		CP length $\dnnot{N}{g}$ & $144$ \\		Mobile velocity of users &  $3-10$ kmph  \\
		\hline
	\end{tabular}
\end{table}
%The layout of the massive MIMO-OFDM system is plotted in Fig. \ref{fig:layout}.
%\begin{figure}[htbp]
%	\centering
%    \includegraphics[width=0.4\textwidth]{}
%    \caption{\small The layout of the massive MIMO-OFDM system.}
%    \label{fig:layout}
%\end{figure}
We locate the BS at $\left( 0,0,25 \right)$ and randomly generate the users in a $120^{\circ}$ sector with radius $r = 200$m around $(0, 0, 1.5)$. 
%{The SF domain channels are normalized as $ \Exp\braces{\lVert \bG_{k} \rVert_F^2}   = N_rN_p $.}
%\textcolor{red}{The adjustable phase shift pilots \cite{channelaqyou} are adopted as the training signal.}
%We set the transmit power of the training signal for each user  to $1$. 
The $\textrm{SNR}$ is set as $\textrm{SNR} = \frac{1}{\sigma_z^2}$.
The APSPs are adopted as the pilot. 
%Note that any other training signal that satisfies the constant magnitude property can be applied.
We set the fine factors to $F_v = F_h = F_\tau = 2$ in all simulations,  which can achieve significant performance gain compared with the case with $F_v = F_h = F_\tau = 1$ as shown in \cite{IGA}.
It has also been shown that setting the fine factors to $2$ is sufficient to obtain good performance \cite{2Dlu,8067658,9967937}.
We adopt a standard Bayesian learning method proposed in \cite{8074806} to obtain the channel power matrix $\bOmega_k$ of each user from the generated SF domain channel $\bG_{k}$.
The number of total non-zero components in $\braces{\bOmega_k}_{k=1}^{48}$ is calculated as {$M_a = 29277$}, which is smaller than that of the observations $N = N_{r,v} \times N_{r,h} \times N_p = 46080$.
With a total of $48$ users, each user contains an average of $610$ variables to be estimated, i.e., the number of non-zero components in the channel power matrix of each user is $610$. 
This value is quite small when compared to the number of total components of the channel power matrix $\bOmega_k \in \bbC^{F_vF_hN_r\times F_\tau N_f}$ of each user, where $N_r = 128$  and $N_f = 26$ in our simulations.  
This coincides with the sparsity of the beam domain channel.
We use the normalized mean-squared error (NMSE) as the performance metric for the channel estimation,
\begin{equation*}
	\textrm{NMSE} = \frac{1}{KN_{\textrm{sam}}}\sum_{k=1}^{K}\sum_{n=1}^{N_{\textrm{sam}}}\frac{\lVert \bG_{k}^{\left(n\right)} - \hat{\bG}_{k}^{\left(n\right)} \rVert_F^2}{\lVert \bG_{k}^{\left(n\right)}  \rVert_F^2},
\end{equation*}
where $N_{\textrm{sam}}$ is the number of the channel samples, $\bG_{k}^{\left(n\right)}$ is the $n$-th channel sample of user $k$, $\hat{\bG}_{k}^{\left(n\right)}$ is the estimate of the $\bG_{k}^{\left(n\right)}$ and $\lVert \cdot \rVert_F$ is the F-norm. 
We set $N_{\textrm{sam}} = 1000$ in  our simulations.
Based on the received signal model \eqref{equ:rece signal 2},
we compare EIGA with the following algorithms. \\
\textbf{GAMP}: Generalized approximate message passing algorithm proposed in \cite{GAMP}.\\
\textbf{IGA}:  The original information geometry approach proposed in \cite{IGA}. \\
\textbf{MMSE}: The MMSE estimation of the beam domain
channels based on \eqref{equ:post mean}.

%\begin{figure*}[htbp]
%	\begin{minipage}[t]{0.32\linewidth}
%		\centering
%		\includegraphics[width=1\textwidth]{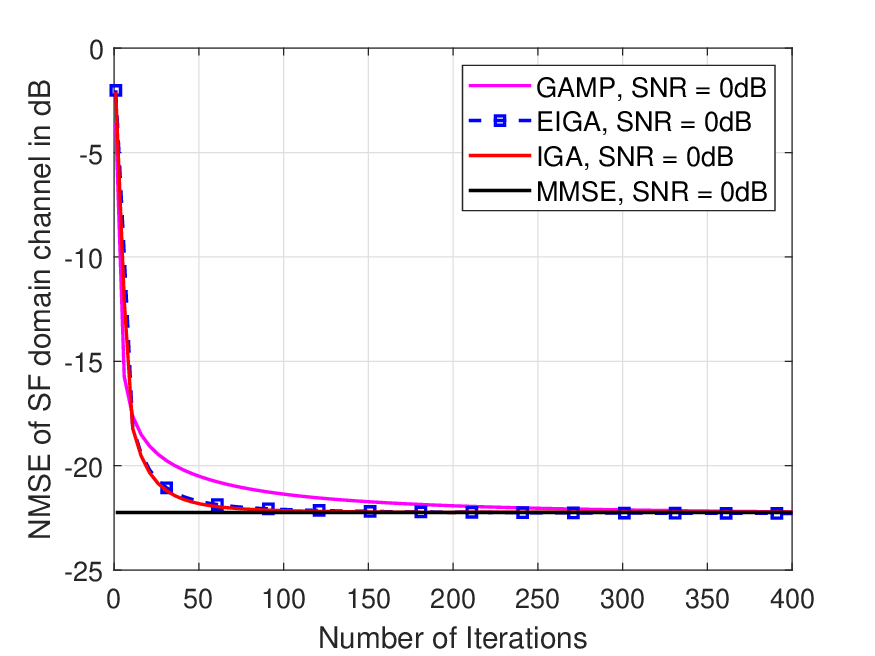}
%		\caption{\small Convergence performance of EIGA  compared with GAMP and IGA at SNR = $0$ dB.}
%		\label{fig:0dB}
%	\end{minipage}
%	\hfill
%	\begin{minipage}[t]{0.32\linewidth}
%		\centering
%		\includegraphics[width=1\textwidth]{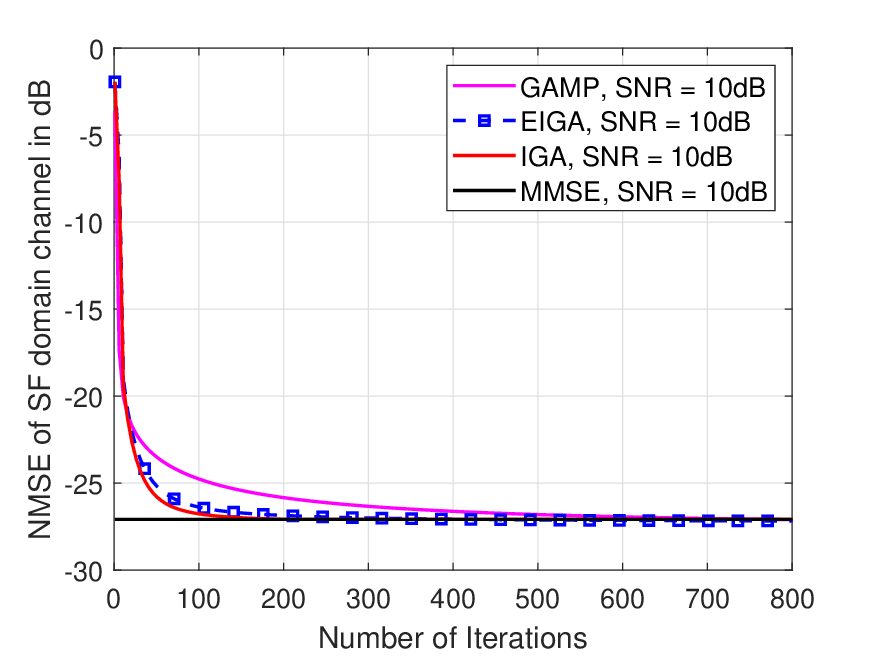}
%		\caption{\small Convergence performance of EIGA  compared with GAMP and IGA at SNR = $10$ dB.}
%		\label{fig:10dB}
%	\end{minipage}
%	\hfill
%	\begin{minipage}[t]{0.32\textwidth}
%		\centering
%		\includegraphics[width=1\textwidth]{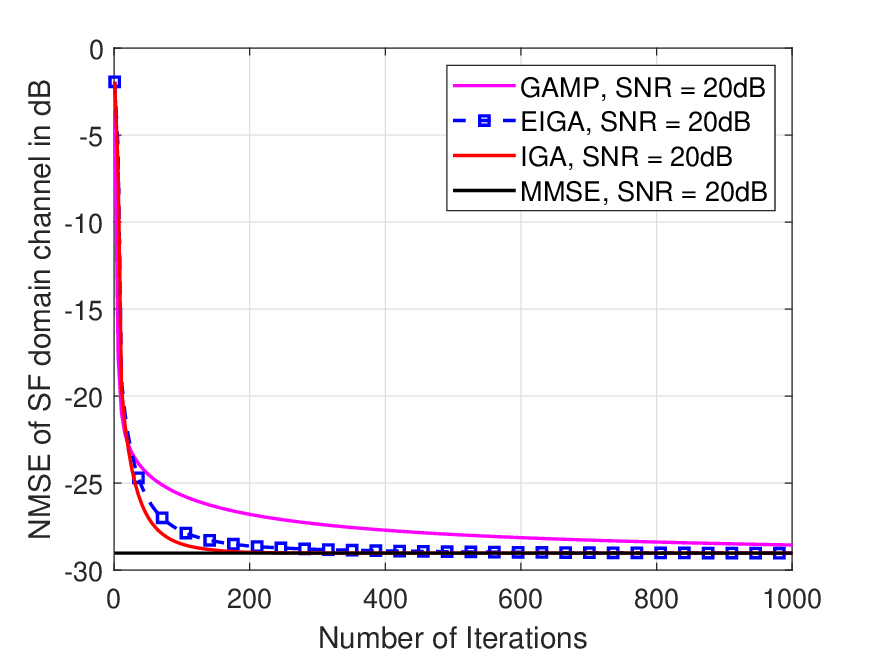}
%		\caption{\small Convergence performance of EIGA  compared with GAMP and IGA at SNR = $20$ dB.}
%		\label{fig:20dB}
%	\end{minipage}
%\end{figure*}

%\begin{figure*}[htbp]
%	\centering
%	\subfigure[]{
%	\begin{minipage}[t]{0.31\linewidth}
%		   \includegraphics[width=1\textwidth]{Fig/Fig_0dB.eps}
%		\end{minipage}
%	}
%    \centering
%    \subfigure[]{
%	\begin{minipage}[t]{0.31\linewidth}
%           \includegraphics[width=1\textwidth]{Fig/Fig_10dB.eps}
%		\end{minipage}
%	}
%	\centering
%	\subfigure[]{
%	\begin{minipage}[t]{0.31\textwidth}
%	       \includegraphics[width=1\textwidth]{Fig/Fig_20dB.eps}
%		\end{minipage}
%	}
%	\centering
%	\caption{\small Convergence performance of simplified IGA  compared with AMP and IGA at SNR =  (a) $0$dB; (b) $10$dB;  (c) $20$ dB.}
%	\label{fig:convergence}
%\end{figure*}

\subsection{Complexity}

The computational complexities of different  algorithms are summarized in Table \ref{tab:complex}, where $C$ is given by \eqref{equ:C}.
The actual computational complexity of different 
algorithms in our simulations are summarized in the Table \ref{tab:complex2}. We can find that the complexity of MMSE is the highest since a matrix-inversion is involved. 
On the other hand, owing to the utilization of the structure of $\bA_p$ in \eqref{equ:rece signal 2} and FFT, the complexity of EIGA is the
lowest among all the algorithms. Then, we combine the number of iterations to compare the overall computational complexities of EIGA. Taking the SNR $=10$dB as an example, from Fig. \ref{fig:10dB}, we can see that IGA and EIGA require about $200$ and $300$ iterations for convergence, respectively (GAMP requires around $600$ iterations). 
In this case, the overall computational complexity of EIGA is saved by $275$ times and $6.59\times10^{4}$ times compared to IGA and MMSE estimation, respectively.

\begin{table}[H]
	\centering
	\caption{COMPLEXITIES OF  ALGORITHMS}\label{tab:complex}
	\footnotesize
	\begin{tabular}{cc}
		\hline
		\textbf{Algorithm} & \textbf{Complexity} \\
		MMSE & $\mathcal{O}\left( M_a^3 + M_a^2N \right)$ \\
		{ GAMP$/$IGA (per iteration)} & $\mathcal{O}\left( NM_a \right)$ \\
		EIGA (per iteration) & $\mathcal{O}\left( {C} \right)$ \\
		\hline	
	\end{tabular}
\end{table}
\begin{table}[htbp]
	\centering
	\caption{ACTUAL COMPLEXITIES OF  ALGORITHMS}\label{tab:complex2}
	%	\footnotesize
	\begin{tabular}{cc}
		\hline
		\textbf{Algorithm} & \textbf{Complexity} \\
		MMSE  & $6.46\times 10^{13}$ \\
		{ GAMP/IGA (per-iteration) } & $1.35 \times 10^9$ \\
		EIGA (per-iteration)  & $3.27\times 10^6$ \\
		\hline	
	\end{tabular}
\end{table}

%\begin{table}[htbp]
%	\centering
%	\caption{Runtime of Algorithms}\label{tab:runtime}
%	%	\footnotesize
%	\begin{tabular}{cc}
%		\hline
%		\textbf{Algorithm} & \textbf{Runtime (second)} \\
%		MMSE  & $3440.641232$ \\
%		AMP ($100$ iterations) & $207.252554$ \\
%		IGA ($100$ iterations) & $187.458764$ \\
%		Simplified IGA ($100$ iterations)  & $3.097429$ \\
%		\hline	
%	\end{tabular}
%\end{table}
%We have also summarized the runtime for each algorithm in the Table \ref{tab:runtime} to provide a more intuitive comparison, using a simulation platform with a Xeon Gold 6154 CPU.

\subsection{Performance}

\begin{figure}[htbp]
	\centering
	\includegraphics[width=0.45\textwidth]{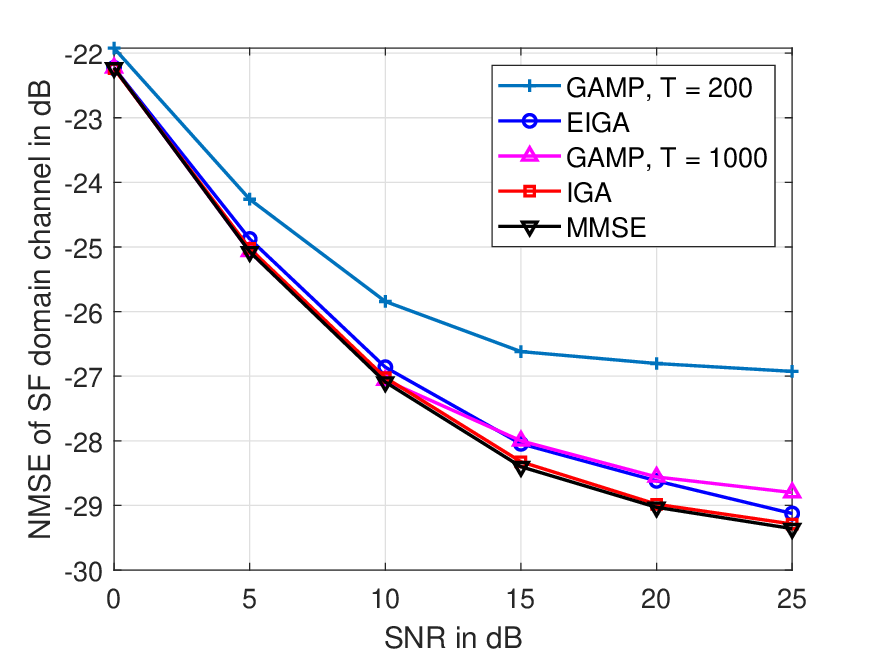}
	\caption{ NMSE performance of EIGA compared with GAMP, IGA and MMSE.}
	\label{fig:SNR}
\end{figure}

Fig. \ref{fig:SNR} shows the NMSE performance of EIGA channel estimation compared with GAMP, IGA and MMSE. 
The iteration numbers of EIGA and IGA are set as $200$. 
The  iteration number of GAMP is set as $200$ and $1000$. The damping factors of the iterative algorithms for different SNRs are summarized in Table \ref{tab:damping}.
\begin{table}[htbp]
	\centering
	\caption{Damping Factors}\label{tab:damping}
	\begin{tabular}{c|c|c|c|c|c|c}
		\hline
	 \multirow{2}*{Algorithm} & \multicolumn{6}{c}{SNR (dB)} \\
	\cline{2-7}
		        & 0  & 5 & 10 &15 &20 &25 \\
		\hline
	 GAMP   & 0.32     & 0.32 & 0.3 & 0.28 &0.28 &0.28 \\
		 IGA & 0.03 &0.03 &0.028 &0.025 &0.025 &0.025       \\
		 EIGA & 0.22 & 0.22 & 0.21 & 0.2 & 0.2  & 0.2\\
		\hline
	\end{tabular}
\end{table}
We can find that IGA with $200$ iterations and GAMP with $1000$ iterations can obtain almost the same NMSE performance as the MMSE estimation at all SNRs.  
The performance of EIGA can approach that of the MMSE estimation with a small gap.
The NMSE performance gain of EIGA compared to GAMP with $200$ iterations is about $1.3$dB when SNR is $20$dB.

\begin{figure}[htbp]
	\centering
	\includegraphics[width=0.45\textwidth]{}
	\caption{ Convergence performance of EIGA  compared with GAMP and IGA at SNR = $0$ dB.}
	\label{fig:0dB}
\end{figure}
\begin{figure}[htbp]
	\centering
	\includegraphics[width=0.45\textwidth]{Fig/Fig_10dB.eps}
	\caption{ Convergence performance of EIGA  compared with GAMP and IGA at SNR = $10$ dB.}
	\label{fig:10dB}
\end{figure}
\begin{figure}[htbp]
	\centering
	\includegraphics[width=0.45\textwidth]{}
	\caption{ Convergence performance of EIGA  compared with GAMP and IGA at SNR = $20$ dB.}
	\label{fig:20dB}
\end{figure}

Fig. \ref{fig:0dB} to Fig. \ref{fig:20dB} illustrate the convergence performance of EIGA compared with GAMP and IGA, where the SNR is set as $0$dB, $10$dB and $20$dB, respectively. 
%\begin{figure}[htbp]
%	\centering
%    \includegraphics[width=0.35\textwidth]{Fig/Fig_0dB.eps}
%    \caption{\small Convergence performance of simplified IGA  compared with AMP and IGA at SNR = $0$ dB.}
%    \label{fig:0dB}
%\end{figure}
%\begin{figure}[htbp]
%   \centering
%   \includegraphics[width=0.35\textwidth]{Fig/Fig_10dB.eps}
%   \caption{\small Convergence performance of simplified IGA  compared with AMP and IGA at SNR = $10$ dB..}
%   \label{fig:10dB}
%\end{figure}
%\begin{figure}[htbp]
%	\centering
%    \includegraphics[width=0.35\textwidth]{Fig/Fig_20dB.eps}
%    \caption{\small Convergence performance of EIGA  compared with AMP and IGA at SNR = $20$ dB.}
%    \label{fig:20dB}
%\end{figure}
In the case with SNR $=0$dB, EIGA and IGA require about $200$ and $120$ iterations, respectively, to converge and achieves the optimal solution as that by the MMSE estimation, while the GAMP needs around $300$ iterations to converge.
In the case with SNR $=10$dB, EIGA requires about $300$ iterations to converge, while IGA and GAMP converge in around  $200$ and $600$ iterations, respectively.
In the case with SNR $=20$dB, EIGA converges in about $300$ iterations, IGA requires about $200$  iterations to converge, while GAMP takes more than $1000$ iterations to converge. 
It can also be found that EIGA and IGA show similar convergence behavior,
%This can be attributed to the similarity of the processes of those two approaches.
while the computational complexity of EIGA is much lower than that of IGA.  
Compared with GAMP, EIGA converges with a faster rate. 
The EIGA along with the original IGA are developed based on the structure of the \textsl{a posteriori} distribution $p\left(\bh|\by\right)$ within the framework of information geometry theory.
As a result, we are able to resolve the statistical inference problem from an intrinsic and general standpoint.
This might be a significant factor in the improved convergence behavior of EIGA for massive MIMO-OFDM channel estimation.

\section{Conclusion}

In this paper, we have proposed the EIGA for channel estimation in massive MIMO-OFDM systems.
The original IGA is first revisited.
By using the constant magnitude property of the measurement matrix entries, we reveal that the FONPS of $\braces{p_n}_{n=1}^N$ on the AMs are asymptotically equal at the fixed point of IGA, and the SONPs of $\braces{p_n}_{n=1}^N$ on the AMs are equal to each other at each iteration.
Based on these results, we simplify its iteration by using the common NP to replace the original NPs of $\braces{p_n}_{n=1}^N$ on the AMs and propose the EIGA.
In EIGA, the common NP is the only parameter involved for the iteration.
A FFT-based fast implementation of EIGA is then provided.
Next, we present the convergence analysis for EIGA, where we discuss the ranges of damping that can guarantee the convergence of EIGA in general case and massive MIMO-OFDM channel estimation.
Compared to the general case, the range of damping in channel estimation is considerably wider.
Furthermore, we show that at its fixed point, the \textsl{a posteriori} mean obtained by EIGA is asymptotically optimal.
Simulation results verify that the proposed EIGA can obtain near optimal channel estimation performance with significantly reduced computational complexity compared with the existing algorithms.

\bibliographystyle{IEEEtran}  
\bibliography{IEEEabrv,reference} 

%\clearpage

%% Proofs

%\setcounter{page}{1}
%\setcounter{equation}{0}

%\begin{center}
%	\Large
%	Proof to EIGA for Massive MIMO-OFDM Channel Estimation
%	\\[10pt]
%	\normalsize 
%\end{center}
%
%
%\author{
%	Jiyuan~Yang,~\IEEEmembership{Member,~IEEE,}
%	~Yan~Chen,
%	~Mingrui~Fan,
%	~An-An~Lu,~\IEEEmembership{Member,~IEEE,}
%	~Wen~Zhong,
%	~Xiqi~Gao,~\IEEEmembership{Fellow,~IEEE,}
%	~Xiaohu~You,~\IEEEmembership{Fellow,~IEEE,}
%	~Xiang-Gen~Xia,~\IEEEmembership{Fellow,~IEEE,}
%	and~Dirk~Slock,~\IEEEmembership{Fellow,~IEEE}
%	
%}

\appendices
\section{Proof of Theorem \ref{the:same nu}}\label{proof:same nu}
We use induction. With the same initialization, $\bbnu_{n}\left(0\right) = \bbnu_{n'}\left(0\right), n, n'\in \setposi{N}$. Assume that at iteration $t$, we have $\bbnu_n\left(t\right) = \bbnu_{n'}\left(t\right)$. From \eqref{equ:nu_0n}, 
\begin{subequations}
	\begin{equation}\label{equ:nu_0n in T1}
		\begin{split}
			\bbnu_{0n}\left(t\right) 
			\!\equaa\! \diag{\!\bD^{-1}-\left[ \bLambda_n\left(t\right) - \frac{1 }{\beta_n\left(t\right)}\bLambda_n^2\left(t\right) \right]^{-1}\!},
		\end{split}
	\end{equation}
	\begin{equation}\label{equ:Lambda_n in T1}
		\bLambda_n\left(t\right) = \left(\bD^{-1} - \Diag{\bbnu_n\left(t\right)}\right)^{-1},
	\end{equation}
	\begin{equation}\label{equ:beta_n in T1}
		\beta_n\left(t\right) = \sigma_z^2 + \bgamma_n^H\bLambda_n\left(t\right)\bgamma_n \equab \sigma_z^2 + \tr{\bLambda_n\left(t\right)}, 
	\end{equation}
\end{subequations}
where $\left(\textrm{a}\right)$ and $\left(\textrm{b}\right)$ come from that the magnitudes of the elements in $\bA$ are $1$.
$\bLambda_n\left(t\right) = \bLambda_{n'}\left(t\right), n, n'\in \setposi{N}$, can be immediately obtained since $\bbnu_n\left(t\right) = \bbnu_{n'}\left(t\right)$. 
Then, $\beta_n\left(t\right) = \beta_{n'}\left(t\right)$, can be obtained. Hence, we have $\bbnu_{0n}\left(t\right) = \bbnu_{0n'}\left(t\right)$. From \eqref{equ:update of NP in IGA}, $\bbnu_{n}\left(t+1\right), n\in\setposi{N}$, is calculated as 
%$\bbnu_{n}\left(t+1\right) \!=\! d\sum\nolimits_{n'\neq n}\left( \bbnu_{0n'}\left(t\right) - \bbnu_{n'}\left(t\right) \right) + \left(1-d\right)\bbnu_{n}\left(t\right)$.
\begin{equation*}\label{equ:update of nu_n in T1}
\bbnu_{n}\left(t+1\right) \!=\! d\sum\nolimits_{n'\neq n}\left( \bbnu_{0n'}\left(t\right) - \bbnu_{n'}\left(t\right) \right) + \left(1-d\right)\bbnu_{n}\left(t\right).
\end{equation*}
Since $\bbnu_{0n}\left(t\right) = \bbnu_{0n'}\left(t\right)$, $\bbnu_n\left(t\right) = \bbnu_{n'}\left(t\right), n, n'\in \setposi{N}$, we have $\bbnu_{n}\left(t+1\right) = \bbnu_{n'}\left(t+1\right), n, n'\in \setposi{N}$.

Assume that we have $\bbnu_0\left(t\right), \bbnu_n\left(t\right)<0, n\in\setposi{N}$, at the iteration $t$. Then, from \eqref{equ:nu_0n in T1}, we can obtain
\begin{align}\label{equ:nu_0n - nu_n}
	& \bbnu_{0n}\left(t\right)-\bbnu_{n}\left(t\right) \notag \\
	=& \diag{\bD^{-1}-\left[ \bLambda_n\left(t\right) - \frac{1 }{\beta_n\left(t\right)}\bLambda_n^2\left(t\right) \right]^{-1}} - \bbnu_n\left(t\right) \notag \\
	\equaa& \diag{\bLambda_n^{-1}\left(t\right) - \left[ \bLambda_n\left(t\right) - \frac{1 }{\beta_n\left(t\right)}\bLambda_n^2\left(t\right) \right]^{-1}}  \\
	=& \diag{ -\frac{1}{\beta_n\left(t\right)} \left( \bI - \frac{1}{\beta_n\left(t\right)}\bLambda_n\left(t\right)  \right)^{-1} }, n\in\setposi{N}, \notag
\end{align}
where $\left(\textrm{a}\right)$ comes from \eqref{equ:Lambda_n in T1}. 
From \eqref{equ:Lambda_n in T1}, we can obtain $\diag{\bLambda_n\left(t\right)} > 0$ since $\bD$ is positive definite diagonal and $\bbnu_n\left(t\right)<0$.  From \eqref{equ:beta_n in T1}, we then have $\beta_n\left(t\right) > 0$ and $\diag{\bLambda_n\left(t\right)} < \beta_n\left(t\right)$
since $\sigma_z^2 > 0$. Thus, $\diag{ \bI -  \frac{1}{\beta_n\left(t\right)}\bLambda_n\left(t\right) } > 0$ can be obtained. At last, we have $\bbnu_{0n}\left(t\right)-\bbnu_{n}\left(t\right) < 0, n\in\setposi{N}$. From \eqref{equ:update of NP in IGA}, $\bbnu_{n}, n\in\setposi{N}$, and $\bbnu_{0}$ are updated as described below \eqref{equ:beta_n in T1} and 
%$	\bbnu_{0}\left(t+1\right) = d\sum_{n=1}^N\left( \bbnu_{0n}\left(t\right) - \bbnu_{n}\left(t\right) \right) + \left(1-d\right)\bbnu_{0}\left(t\right)$, 
\begin{equation*}
	\bbnu_{0}\left(t+1\right) = d\sum_{n=1}^N\left( \bbnu_{0n}\left(t\right) - \bbnu_{n}\left(t\right) \right) + \left(1-d\right)\bbnu_{0}\left(t\right),
\end{equation*}
respectively. Combining $\bbnu_{0n}\left(t\right)-\bbnu_{n}\left(t\right) < 0, n\in\setposi{N}$, we have that $\bbnu_n\left(t+1\right)<0, n\in\setnnega{N}$.
From a similar process, it is not difficult to obtain that when $\bbnu_{0}\left(0\right), \bbnu_{n}\left(0\right) \le 0$, we have $\bbnu_{0}\left(1\right), \bbnu_{n}\left(1\right) < 0$.
This completes the proof.

\section{Proof of Theorem \ref{the:same theta_n}}\label{{proof:same theta_n}}
We first express $\btheta_{n}^{\star}, n\in\setposi{N}$, as
\begin{align}\label{equ:theta_n^star expression}
	&\btheta_{n}^{\star}\equaa 2\left( \bD^{-1}-\Diag{\bbnu_n^{\star}} + \frac{1}{\sigma_z^2}\bgamma_n\bgamma_n^H \right)\bmu_n^{\star} - \frac{2y_n}{\sigma_z^2}\bgamma_n \notag\\
	\equab& 2\left( \bD^{-1}- \frac{N-1}{N}\Diag{\bbnu_0^{\star}} + \frac{1}{\sigma_z^2}\bgamma_n\bgamma_n^H \right)\bmu_0^{\star} - \frac{2y_n}{\sigma_z^2}\bgamma_n \notag\\
	= &2 \left[ \! \frac{N-1}{N}\left( \bD^{-1} - \Diag{\bbnu_{0}^\star} \right) \!+\! \frac{1}{N}\bD^{-1} + \frac{1}{\sigma_z^2}\bgamma_n\bgamma_n^H \! \right] \bmu_{0}^\star  \notag \\
	&-   \frac{2y_n}{\sigma_z^2}\bgamma_n             \notag \\
	\equac&\frac{N-1}{N}\btheta_{0}^{\star} + 2\left( \frac{1}{N}\bD^{-1} + \frac{1}{\sigma_z^2}\bgamma_n\bgamma_n^H \right)\bmu_0^{\star} - \frac{2y_n}{\sigma_z^2}\bgamma_n, 
\end{align}
where $\left(\textrm{a}\right)$ comes from \eqref{equ:mu_n and Sigma_n} and Sherman-Morrison formula,
$\left(\textrm{b}\right)$ comes from the two conditions in \eqref{equ:m and e conditions} and $\left(\textrm{c}\right)$ comes from \eqref{equ:mu_0 and Sigma_0}. Combining the expression of $\bSigma_{n}$ in \eqref{equ:Sigma_n} and Sherman-Morrison formula, we can obtain
\begin{equation}
	\bSigma_n^{-1}\left(\bvartheta_{n}\right) = \bD^{-1}-\Diag{\bbnu_{n}} + \frac{1}{\sigma_z^2}\bgamma_n\bgamma_n^H, n\in\setposi{N}.
\end{equation}
From \eqref{equ:mu_n}, $\left(\textrm{a}\right)$ can be obtained. Then, from \eqref{equ:theta_n^star expression}, we have
\begin{align}\label{equ:inequality 1}
	&\frac{1}{NM}\sum_{n=1}^{N}\norm{\btheta_{n}^{\star} - \frac{N-1}{N}\btheta_{0}^{\star}}_{\bD}^2 \nonumber\\ 
	\overset{\left(\textrm{a}\right)}{=}& \frac{4}{NM}\sum_{n=1}^{N}\norm{ \frac{1}{N}\bD^{-1}\bmu_0^{\star} + \frac{\bgamma_n^H\bmu_0^{\star} - y_n}{\sigma_z^2}\bgamma_n  }_{\bD}^2 \nonumber\\ 
	\overset{\left(\textrm{b}\right)}{\le}& \frac{8}{NM}\sum_{n=1}^{N}\left( \norm{\frac{1}{N}\bD^{-1}\bmu_0^{\star}}_{\bD}^2 + \norm{ \frac{\bgamma_n^H\bmu_0^{\star} - y_n}{\sigma_z^2}\bgamma_n }_{\bD}^2  \right) \\
	\equac& \frac{8}{N^2M}\norm{\bD^{-1}\bmu_0^{\star}}_{\bD}^2 + \frac{8}{NM}\sum_{n=1}^{N}\left| \frac{\bgamma_n^H\bmu_0^{\star}-y_n}{\sigma_z^2} \right|^2\norm{\bgamma_n}_{\bD}^2 \nonumber\\
	\equad& \frac{8}{N^2M}\norm{\bD^{-1}\bmu_0^{\star}}_{\bD}^2 + \frac{8\tr{\bD}}{NM\sigma_z^4}\norm{\bA\bmu_0^{\star} - \by}^2,\nonumber
\end{align}
where $\left(\textrm{a}\right)$ and $\left(\textrm{c}\right)$ come from the homogeneity of the norm\cite[Definition 5.1.1]{horn2012matrix},
$\left(\textrm{b}\right)$ comes from 
\begin{equation*}
	\norm{\ba + \bb}_{\bD}^2 \le  2\left( \norm{\ba}_{\bD}^2 + \norm{\bb}_{\bD}^2 \right),
\end{equation*}
and $\left(\textrm{d}\right)$ comes from that $\bA$ is of constant magnitude entries and \eqref{equ:gamma_n}.
Define 
\begin{equation*}
	\bR_{yy} \triangleq \Exp\braces{\by\by^H}= \bA\bD\bA^H + \sigma_z^2\bI \in \bbC^{N\times N},
\end{equation*}
where $\by$ is defined in \eqref{equ:maMIMO rece signal 4}.
Then, $\bR_{yy}$ is positive definite. 
From the push-through identity, we have $\tilde{\bmu} = \bD\bA^H\bR_{yy}^{-1}\by$, where $\tilde{\bmu}$ is given by \eqref{equ:post mean}.
Meanwhile, it is shown that at the fixed point of IGA, $\bmu_0^{\star}$ is equal to the \textsl{a posteriori} mean $\tilde{\bmu}$ \cite[Theorem 2]{IGA}. 
Substituting $\bmu_{0}^{\star} = \tilde{\bmu} = \bD\bA^H\bR_{yy}^{-1}\by$ into the last equation of \eqref{equ:inequality 1}, we can obtain 
\begin{align}\label{equ:inequlity1}
	&\Exp\braces{\norm{\bD^{-1}\bmu_0^{\star}}_{\bD}^2} = \Expect{}{\norm{\bA^H\bR_{yy}^{-1}\by}_{\bD}^2} \nonumber \\
	=&\Exp\braces{\tr{\bR_{yy}^{-1}\bA\bD\bA^H\bR_{yy}^{-1}\by\by^H}} = \tr{\bR_{yy}^{-1}\bA\bD\bA^H} \nonumber \\
	\equaa&\tr{\bI-\sigma_z^2\bR_{yy}^{-1}} \overset{\left(\mathrm{b}\right)}{\le}N,
\end{align}
where $\left(\mathrm{a}\right)$ comes from the definition of $\bR_{yy}$ and $\left(\mathrm{b}\right)$ comes from $\bR_{yy} \succeq \sigma_z^2\bI $. 
Also,
\begin{align}\label{equ:inequlity2}
	&\Exp\braces{\norm{\bA\bmu_{0}^{\star} - \by}^2} = \sigma_z^4\Exp\braces{\norm{\bR_{yy}^{-1}\by}^2}\nonumber \\
	= &\sigma_z^4\tr{\bR_{yy}^{-1}} 
	\le \sigma_z^2\tr{\bI} = \sigma_z^2N.
\end{align}
%where $\left(\textrm{a}\right)$ comes from that $\bR_{yy}\succeq\sigma_z^2\bI\succ0$ and thus $\bR_{yy}^{-1} \preceq\sigma_z^{-2}\bI$.
Substituting \eqref{equ:inequlity1} and \eqref{equ:inequlity2} into \eqref{equ:inequality 1}, we can obtain
\begin{equation*}
	0\le \frac{1}{NM}\sum_{n=1}^{N}\Exp\braces{\norm{\btheta_{n}^{\star} - \frac{N-1}{N}\btheta_{0}^{\star}}_{\bD}^2}\le \frac{8}{NM} + \frac{8\tr{\bD}}{\sigma_z^2M}.
\end{equation*}
Since $\tr{\bD}$ and $\sigma_z^2$ are bounded, \eqref{equ:FONPs in IGA} can be obtained.
This completes the proof.

%\vspace{-0.5ex}
\section{Calculation of \texorpdfstring{$\bvartheta\left(t+1\right)$}{}} \label{proof:calculation of tilde_bvartheta_s}
Define 
%$\bvartheta_s\left(t\right) \triangleq \sum_{n=1}^N\bvartheta_{0n}\left(t\right) = \funf[\btheta_s\left(t\right),\bbnu_s\left(t\right)].$
\begin{equation*}
	\bvartheta_s\left(t\right) \triangleq \sum_{n=1}^N\bvartheta_{0n}\left(t\right) = \funf[\btheta_s\left(t\right),\bbnu_s\left(t\right)].
\end{equation*}
From \eqref{equ:tilde vartheta_0n}, we have \eqref{equ:calculation of tilde_betheta_s} and
\begin{figure*}
	\begin{align}\label{equ:calculation of tilde_betheta_s}
		&{\btheta}_s\left(t\right) = \sum_{n=1}^{N}{\btheta}_{0n}\left(t\right) 
		\equaa \bJ\left( \frac{2}{\beta\left(\bbnu\left(t\right)\right)}\sum_{n=1}^{N}\bgamma_n y_n \!-\! \frac{1}{\beta\left(\bbnu\left(t\right)\right)}\sum_{n=1}^{N}\bgamma_n\bgamma_n^H\bLambda\left(\bbnu\left(t\right)\right){\btheta}\left(t\right) + N{\btheta}\left(t\right) \right)\nonumber\\
		\equab &\bJ\left( \frac{2}{\beta\left(\bbnu\left(t\right)\right)}\bA^H\by - \frac{1}{\beta\left(\bbnu\left(t\right)\right)}\bA^H\bA\bLambda\left(\bbnu\left(t\right)\right){\btheta}\left(t\right) + N {\btheta}\left(t\right)  \right)
		= \bJ\left( \frac{1}{\beta\left(\bbnu\left(t\right)\right)}\bA^H\left(2\by - \bA\bLambda\left(\bbnu\left(t\right)\right){\btheta}\left(t\right) \right) + N{\btheta}\left(t\right) \right),
	\end{align}
	\hrule
\end{figure*}
\begin{equation}\label{equ:calculation of nu_s}
	\!\!\!{\bbnu}_s \!=\! \sum_{n=1}^{N}{\bbnu}\left(t\right) 
	\!\equac\! N\mathrm{diag}\!\left\lbrace\! \bD^{\!-1}\!\!\!-\!\left(\! \bLambda\!\left(t\right) \!-\! \frac{1}{\beta\left(\bbnu\left(t\right)\right)}\bLambda^2\!\left(t\right)\! \right)^{\!-1}\!\right\rbrace,
\end{equation}
where
\begin{equation*}
\bJ = \left( \bI - \frac{1}{\beta\left(\bbnu\left(t\right)\right)}\bLambda\left(\bbnu\left(t\right)\right) \right)^{-1}
\end{equation*}
%$\bJ = \left( \bI - \frac{1}{\beta\left(\bbnu\left(t\right)\right)}\bLambda\left(\bbnu\left(t\right)\right) \right)^{-1}$,
$\bLambda\left(\bbnu\left(t\right)\right)$ and $\beta\left(\bbnu\left(t\right)\right)$ are given by \eqref{equ:Lambda} and \eqref{equ:beta}, respectively,
$\left(\textrm{a}\right)$ and $\left(\textrm{c}\right)$ come from \eqref{equ:tilde vartheta_0n}, and $\left(\textrm{b}\right)$ comes from \eqref{equ:gamma_n} and $\bA^H = \left[ \bgamma_1, \ \bgamma_2, \ \cdots, \ \bgamma_N \right]$. 
Then, from the update way of $\bvartheta$ in the last equation of \eqref{equ:update of overlinevartheta in RIGA}, we have
\begin{subequations}\label{equ:update of vartheta}
	\begin{equation}\label{equ:update of theta}
		\btheta\left(t+1\right) = \frac{d\left(N-1\right)}{N}\btheta_s\left(t\right) + \left(1-dN\right)\btheta\left(t\right),
	\end{equation}
	\begin{equation}\label{equ:update of nu}
		\bbnu\left(t+1\right) = \frac{d\left(N-1\right)}{N}\bbnu_s\left(t\right) + \left(1-dN\right)\bbnu\left(t\right).
	\end{equation}
\end{subequations}
Substituting \eqref{equ:calculation of tilde_betheta_s} and \eqref{equ:calculation of nu_s} into \eqref{equ:update of vartheta}, we can obtain \eqref{equ:tilde vartheta_s}.
\begin{figure*}
	\begin{subequations}\label{equ:tilde vartheta_s}
		\begin{equation}\label{equ:tilde theta_s}
			\btheta\left(t+1\right) = 
			\frac{d\left(N-1\right)}{N}\left( \bI - \frac{1}{\beta\left(\bbnu\left(t\right)\right)}\bLambda\left(\bbnu\left(t\right)\right) \right)^{-1}\left[ \frac{1}{\beta\left(\bbnu\left(t\right)\right)}\bA^H\left(2\by - \bA\bLambda\left(\bbnu\left(t\right)\right){\btheta}\left(t\right) \right) + N{\btheta}\left(t\right) \right] + \left(1-dN\right)\btheta\left(t\right)
		\end{equation}
		\begin{equation}\label{equ:tilde nu_s}
			\bbnu\left(t+1\right) = {d\left(N-1\right)}\diag{\bD^{-1}-\left( \bLambda\left(\bbnu\left(t\right)\right) - \frac{1}{\beta\left(\bbnu\left(t\right)\right)}\bLambda^2\left(\bbnu\left(t\right)\right)\right)^{-1}} + \left(1-dN\right)\bbnu\left(t\right)
		\end{equation}
	\end{subequations}
	\hrule
\end{figure*}

We now show that $\bbnu\left(t+1\right)$ in \eqref{equ:tilde nu_s} can be re-expressed as that in \eqref{equ:update of nu_0}.
From \eqref{equ:tilde nu_s},  \eqref{equ:aux0 in AppA} on the next page  is direct.
\begin{figure*}
	\begin{equation}\label{equ:aux0 in AppA}
		\bbnu\left(t+1\right) = \underbrace{d\left(N-1\right)\left( \diag{\bD^{-1}-\left( \bLambda\left(\bbnu\left(t\right)\right) - \frac{1}{\beta\left(\bbnu\left(t\right)\right)}\bLambda^2\left(\bbnu\left(t\right)\right)\right)^{-1}} - \bbnu\left(t\right) \right)}_{d\bg\left(\bbnu\left(t\right)\right)}  + \left(1-d\right)\bbnu\left(t\right)
	\end{equation}
	\hrule
\end{figure*}
Thus,  $\bg\left(\bbnu\left(t\right)\right)$ can be expressed as \eqref{equ:aux1 in AppA} on the next page, where $\left(\textrm{a}\right)$ comes from that \eqref{equ:Lambda}.
\begin{figure*}
	\begin{equation}\label{equ:aux1 in AppA}
		\begin{split}
			&\bg\left(\bbnu\left(t\right)\right) = \left(N-1\right) \diag{ \bD^{-1} - \left( \bLambda\left(\bbnu\left(t\right)\right) - \frac{1}{\beta\left(\bbnu\left(t\right)\right)}\bLambda^2\left(\bbnu\left(t\right)\right) \right)^{-1}} - \left(N-1\right)\bbnu\left(t\right)\\
			=&\left(N-1\right)\diag{ \bD^{-1} - \Diag{\bbnu\left(t\right)} - \left( \bLambda\left(\bbnu\left(t\right)\right) - \frac{1}{\beta\left(\bbnu\left(t\right)\right)}\bLambda^2\left(\bbnu\left(t\right)\right) \right)^{-1}}\\
			\equaa&\left(N-1\right)	\diag{ \bLambda^{-1}\left(\bbnu\left(t\right)\right) - \bLambda^{-1}\left(\bbnu\left(t\right)\right)\left( \bI - \frac{1}{\beta\left(\bbnu\left(t\right)\right)}\bLambda\left(\bbnu\left(t\right)\right) \right)^{-1} } = -\left(N-1\right)\diag{\left(\beta\left(\bbnu\left(t\right)\right)\bI - \bLambda\left(\bbnu\left(t\right)\right)\right)^{-1}}
		\end{split}
	\end{equation}
	\hrule
\end{figure*}
We then show that when $t = 0$, the matrices that need to be inverted in \eqref{equ:aux1 in AppA} are intertible.
From \eqref{equ:Lambda} and $\bbnu\left(0\right) \le \bzero$, we can obtain that $\bLambda\left(\bbnu\left(0\right)\right)$ is positive definite and hence invertible.
From \eqref{equ:beta}, we have
\begin{equation*}
	\beta\left(\bbnu\left(0\right)\right) > \left[ \bLambda\left(\bbnu\left(0\right)\right) \right]_{i,i} > 0, i\in \setposi{M}.
\end{equation*}
This implies that
\begin{equation*}
	\beta\left(\bbnu\left(0\right)\right)\bI - \bLambda\left(\bbnu\left(0\right) \right)
\end{equation*}
is positive definite and hence invertible.
Moreover, combining \eqref{equ:aux0 in AppA} and \eqref{equ:aux1 in AppA}, we have $\bg\left(\bbnu\left(0\right)\right) < \bzero$, and 
\begin{equation*}
	\bbnu\left(1\right) = d\bg\left(\bbnu\left(0\right)\right) + \left(1-d\right)\bbnu\left(0\right) < \bzero
\end{equation*}
is finite.
%Similarly, we can obtain that $\bLambda\left(\bbnu\left(1\right)\right)$ and 
%\begin{equation*}
%	\beta\left(\bbnu\left(1\right)\right)\bI - \bLambda\left(\bbnu\left(1\right) \right)
%\end{equation*}
%are both positive definite and invertible.
Following by that, assuming that at the $t$-th iteration, where $t \ge 1$, we have
$\bbnu\left(t\right) < \bzero$ is finite, $\bLambda\left(\bbnu\left(t\right)\right)$ and
\begin{equation*}
	\beta\left(\bbnu\left(t\right)\right)\bI - \bLambda\left(\bbnu\left(t\right) \right)
\end{equation*} 
are positive definite and invertible.
In the same way, it can be readily checked that $\bbnu\left(t+1\right) < \bzero$ is finite.
Hence, we have $\bLambda\left(\bbnu\left(t+1\right)\right)$ and
\begin{equation*}
	\beta\left(\bbnu\left(t+1\right)\right)\bI - \bLambda\left(\bbnu\left(t\right) \right)
\end{equation*}
are positive definite and invertible.
By induction, we have shown that when $t \ge 1$, we have $\bbnu\left(t\right) < \bzero$ is finite,
and for $t \ge 0$, $\bLambda\left(\bbnu\left(t\right)\right)$ and 
\begin{equation*}
	\beta\left(\bbnu\left(t\right)\right)\bI - \bLambda\left(\bbnu\left(t\right) \right)
\end{equation*}
are positive definite and invertible.

We now show that $\btheta\left(t+1\right)$ in \eqref{equ:tilde theta_s} can be re-expressed as that in \eqref{equ:update of theta_0}.
From \eqref{equ:tilde theta_s}, we can obtain  \eqref{equ:aux2 in AppA}, where we omit some of the counter $t$ at the right hand side of the equation for the notational convenience, and
\begin{equation}\label{equ:T}
	\bT\left(\bbnu\right) = \left( \bI - \frac{1}{\beta\left(\bbnu\right)} \bLambda\left(\bbnu\right) \right)^{-1},
\end{equation} 
where the matrix invertibility comes from \eqref{equ:Lambda} and \eqref{equ:beta} directly. 
\begin{figure*}
	\begin{equation}\label{equ:aux2 in AppA}
		\begin{split}
			&\btheta\left(t+1\right) = 
			\frac{d\left(N-1\right)}{N}\bT\left(\bbnu\right)\left[ \frac{1}{\beta\left(\bbnu\right)}\bA^H\left(2\by\right) - \frac{1}{\beta\left(\bbnu\right)}\bA^H \bA\bLambda\left(\bbnu\right){\btheta}\left(t\right)  + N\btheta\left(t\right) \right] + \left(1-dN\right)\btheta\left(t\right)\\
			=&\underbrace{\frac{2d\left(N-1\right)}{\beta\left(\bbnu\right) N}\bT\left(\bbnu\right)\bA^H\by}_{\bb\left(\bbnu\right)} + \underbrace{\left[ \frac{d\left(N-1\right)}{N}\bT\left(\bbnu\right)\left(N\bI - \frac{1}{\beta\left(\bbnu\right)}\bA^H\bA\bLambda\left(\bbnu\right)  \right)\btheta\left(t\right) - d\left(N-1\right)\btheta\left(t\right) \right]}_{d\bB\left(\bbnu\right) \btheta\left(t\right) }  + \left(1-d\right)\btheta\left(t\right)
		\end{split}
	\end{equation}
	\hrule
\end{figure*}
Thus, we can obtain 
\begin{align}
	&\bB\left(\bbnu\right)  \\
	=&  \frac{\left(N-1\right)}{N}\bT\left(\bbnu\right)\left(N\bI - \frac{1}{\beta\left(\bbnu\right)}\bA^H\bA\bLambda\left(\bbnu\right)  \right) - \left(N-1\right)\bI \nonumber \\
	=& \left(N-1\right) \left[\frac{\bT\left(\bbnu\right)}{N}\left( N\bI - \frac{\bA^H\bA\bLambda\left(\bbnu\right)}{\beta\left(\bbnu\right)} \right)  - \bT\left(\bbnu\right)\bT^{-1}\left(\bbnu\right) \right]                     \nonumber \\
	=&  \frac{\left(N-1\right)}{\beta\left(\bbnu\right)} \bT\left(\bbnu\right)\left( \bI - \frac{1}{N}\bA^H\bA \right)\bLambda\left(\bbnu\right).               \nonumber
\end{align}
Also, it is not difficult to show that given a finite $\btheta\left(0\right)$, $\btheta\left(t\right)$ is finite at each iteration.

\section{Proof of Lemma \ref{lemma:function tilde{g}}}\label{proof:function tilde{g}}
From Appendix \ref{proof:calculation of tilde_bvartheta_s}, it can be checked that given  $\bbnu \le  \bzero$, $\tilde{\bg}\left(\bbnu\right)$ and $\bg\left(\bbnu\right)$ are well defined.
%We first show that function $\bg\left(\bbnu\right)$ defined in \eqref{equ:function g text} satisfies  three similar properties mentioned above, and then the process of proving $\tilde{\bg}\left(\bbnu\right)$ will become clear.
%We begin with that $\bg\left(\bbnu\right)$ satisfies monotonicity. 
Denote $g_i\left(\bbnu\right)$, $\nu_{i}$, $d_i$ and ${\lambda}_i\left(\bbnu\right)$ as the $i$-th components of $\bg\left(\bbnu\right)$, $\bbnu$, the diagonals of  $\bD$ and $\bLambda\left(\bbnu\right)$, respectively, where $i\in \setposi{M}$.
Due to $\bbnu \le \bzero$, we have
\begin{subequations}
	\begin{equation}
		\beta\left(\bbnu\right) = \tilde{\sigma}_z^2 + \sum_{i=1}^{M}\lambda_{i}\left(\bbnu\right) {>}0,
	\end{equation}
	\begin{equation}\label{equ:lambda i}
		{\lambda}_{i}\left(\bbnu\right) = \frac{1}{d_i^{-1} - \nu_{i}}{>}0,
	\end{equation}
	\begin{equation}\label{g_i}
		\begin{split}
			g_{i}(\bbnu) &=- \frac{N-1}{\beta(\bbnu) - {\lambda}_i(\bbnu)} \\
			&= -\frac{N-1}{\tilde{\sigma}_z^2 + \sum_{i'\neq i}{\lambda}_{i'}\left(\bbnu\right)} < 0.	
		\end{split}
	\end{equation}
\end{subequations}
From \eqref{g_i} and \eqref{equ:update of nu_0}, the two properties of $\bg\left(\bbnu\right)$ and $\tilde{\bg\left(\bbnu\right)}$, i.e., the monotonicity and the scalability, are not difficult to see. 
We next show its boundedness. 

From the definitions, we have
\begin{subequations}
	%	\begin{equation}
		%		\lim_{\nu_{1},\nu_{ 2},\ldots,\nu_{ M}\rightarrow-\infty}{\lambda}_i\left(\bbnu\right) = 0, i\in \setposi{M},
		%	\end{equation}
	\begin{equation}
		\lim_{\nu_{1},\nu_{ 2},\ldots,\nu_{ M}\rightarrow-\infty}\beta\left(\bbnu\right) = \tilde{\sigma}_z^2.
	\end{equation}
\end{subequations} 
From the monotonicity of $\bg\left(\bbnu\right)$, we can obtain
\begin{equation}
	\bg\left(\bbnu\right)>	\lim_{\nu_{1},\nu_{ 2},\ldots,\nu_{ M}\rightarrow-\infty}\bg\left(\bbnu\right) =  -\frac{N-1}{\tilde{\sigma}_z^2}\mathbf{1} = \tilde{\bg}_{\textrm{min}} .
\end{equation}
Thus,  $\tilde{\bg}_{\textrm{min}}<\bg\left(\bbnu\right)<\bzero$.
Then,  $\tilde{\bg}_{\textrm{min}}<\tilde{\bg}\left(\bbnu\right) < \bzero$ directly follows from the definition of $\tilde{\bg}\left(\bbnu\left(t\right)\right)$ in \eqref{equ:update of nu_0}. 
This completes the proof.

%From the definition, we have
%\begin{equation*}
%	\tilde{\bg}\left(\bbnu\right) = d\bg\left(\bbnu\right) + \left(1-d\right)\bbnu,
%\end{equation*}
%where $0<d\le 1$.
%Since given $\bbnu < \bbnu' \le \bo$, we have $\bg\left(\bbnu\right) < \bg\left(\bbnu'\right)$.
%The monotonicity of $\tilde{\bg}\left(\bbnu\right)$ can be readily checked.
%Then, since we have 
% \begin{equation}
	% 	\tilde{\bg}\left(\alpha \bbnu\right)-\alpha \tilde{\bg}\left(\bbnu\right)=d\left(\bg\left(\alpha \bbnu\right)-\alpha \bg\left(\bbnu\right)\right),
	% \end{equation}
%the scalability of $\tilde{\bg}\left(\bbnu\right)$ can be immediately shown through the same way as that of $\bg\left(\bbnu\right)$. 
%We finally prove that $\tilde{\bg}\left(\bbnu\right)$ is also bounded. 
%Given $\tilde{\bg}_{\textrm{min}}\le  \bbnu\le \bo$, combining $\tilde{\bg}_{\textrm{min}}<\bg\left(\bbnu\right)<\bo$,
%we can readily obtain that $\tilde{\bg}_{\textrm{min}}<\tilde{\bg}\left(\bbnu\right) < \bo$.
%This completes the proof.

\section{Proof of Theorem \ref{The-theta_1 conv}}\label{proof:theta_1 conv}

Consider $\bbnu\left(1\right) = \tilde{\bg}\left(\bbnu\left(0\right)\right)$.
If $\bbnu\left(1\right) \le \bbnu\left(0\right)$, by Lemma \ref{lemma:function tilde{g}},
\begin{equation}
	\bbnu\left(2\right) = \tilde{\bg}\left(\bbnu\left(1\right)\right)\le \tilde{\bg}\left(\bbnu\left(0\right)\right) = \bbnu\left(1\right).
\end{equation}
Then, the sequence $\bbnu\left(t\right)$ is a decreasing sequence.
By Lemma \ref{lemma:function tilde{g}}, this sequence is also bounded.
Thus, it converges to a finite vector $\bbnu^\star$.
Also, by Lemma \ref{lemma:function tilde{g}}, we have the result of Theorem \ref{The-theta_1 conv}.
The case of $\bbnu\left(1\right) \ge \bbnu\left(0\right)$ can be similarly proved.
This completes the proof.

\section{ Calculation of \eqref{equ:relation of bbnu} }\label{proof:Calculation 2}
From \eqref{equ:update of nu_0} and ${\bbnu^\star} = \tilde{\bg}\left(\bbnu^\star\right)$, we have $\bbnu^\star = \bg\left(\bbnu^\star\right)$.
Substituting $\bbnu^\star = \bg\left(\bbnu^\star\right)$ and $\bbnu\left(t\right) = \bbnu^{\star}$ into the first equation of \eqref{equ:aux1 in AppA} in Appendix \ref{proof:calculation of tilde_bvartheta_s}, we can obtain \eqref{equ:relation of bbnu}.

\section{Proof of Lemma \ref{The-synchronous updates}}\label{Proof-The-syn}
%\subsection{Sufficient Condition}
Define $\btheta^\star$ as
\begin{equation}
	\btheta^\star \triangleq \left(\bI - \tilde{\bB}^\star\right)^{-1}\bb^\star.
\end{equation}
Since $\rho\left(\tilde{\bB}^\star\right)<1$, $1$ is not an eigenvalue of $\tilde{\bB}^\star$ and $\bI-\tilde{\bB}^\star$ is invertible. Thus,  the above $\btheta^\star$ exists
% and is unique.
%Also, since we have shown that $\bbnu^\star$ is finite, then, $\tilde{\bB}^\star$ and $\bb^\star$ are also finite.
%This implies that $\btheta^\star$ is finite.

We next show that $\btheta\left(t\right)$ converges to $\btheta^\star$. 
Since $\rho\left(\tilde{\bB}^\star\right) <1$, there exists a matrix norm $\norm{\cdot}$ such that \cite[Lemma 5.6.10]{horn2012matrix}
\begin{equation}
	\norm{\tilde{\bB}^\star} < 1.
\end{equation}
Then, let $\norm{\cdot}$ be the vector norm that induces the matrix norm $\norm{\cdot}$ \cite[Definition 5.6.1]{horn2012matrix}.
Define the error between $\btheta\left(t\right)$ and $\btheta^{\star}$ as
\begin{equation}
	\varepsilon\left(t\right) \triangleq \lVert \btheta\left(t\right) - \btheta^{\star} \rVert.
\end{equation}
Then, we can obtain \eqref{ineq0}  on the next page.
\begin{figure*}
	\begin{equation}\label{ineq0}
		\begin{split}
			&\varepsilon\left(t+1\right) = \lVert \btheta\left(t+1\right) - \btheta^{\star}\rVert
			= \lVert \tilde{\bB}\left( \bbnu\left(t\right) \right)\btheta\left(t\right) - \tilde{\bB}^{\star}\btheta^{\star} + \bb\left(\bbnu\left(t\right)\right)-\bb^{\star} \rVert\\
			&=\lVert \tilde{\bB}\left(\bbnu\left(t\right)\right)\left( \btheta\left(t\right)-\btheta^{\star}\right) + \left( \tilde{\bB}\left(\bbnu\left(t\right)\right)-\tilde{\bB}^{\star} \right)\btheta^{\star}  + \bb\left(\bbnu\left(t\right)\right)-\bb^{\star}  \rVert \\
			&{\le}\lVert \tilde{\bB}\left(\bbnu\left(t\right)\right)\left( \btheta\left(t\right)-\btheta^{\star}\right) \rVert  + \lVert \left( \tilde{\bB}\left(\bbnu\left(t\right)\right)-\tilde{\bB}^{\star} \right)\btheta^{\star} + \bb\left(\bbnu\left(t\right)\right)-\bb^{\star} \rVert\\
			&{\le }\norm{\tilde{\bB}\left(\bbnu\left(t\right)\right)}\varepsilon\left(t\right) + \lVert \left( \tilde{\bB}\left(\bbnu\left(t\right)\right)-\tilde{\bB}^{\star} \right)\btheta^{\star} + \bb\left(\bbnu\left(t\right)\right)-\bb^{\star} \rVert
		\end{split}
	\end{equation}
	\hrule
\end{figure*}
%$\left(\mathrm{a}\right)$ comes from the triangle inequality \cite[Definition 5.1.1 (3)]{horn2012matrix}, $\left(\mathrm{b}\right)$ comes from
%\cite[Theorem 5.6.2 (b)]{horn2012matrix}, 
%which is induced by the $\ell_2$ norm, 
%and  $\lambda_{\textrm{max}}\left(\cdot\right)$ is the largest eigenvalue of a matrix.
%Since $\rho(\tilde{\bB}^\star) <1$, we have 
%\begin{equation*}
%	\normmm{\tilde{\bB}^\star} = \lambda_{\textrm{max}}\left(\tilde{\bB}^\star\right) <1.
%\end{equation*}
Define a sequence $c\left(t\right)$ as
\begin{equation}\label{definition-ct}
	c\left(t\right)\triangleq\lVert \left( \tilde{\bB}\left(\bbnu\left(t\right)\right)-\tilde{\bB}^{\star} \right)\btheta_0^{\star} + \bb\left(\bbnu\left(t\right)\right)-\bb^{\star} \rVert.
\end{equation}
Since $\bbnu$ converges to $\bbnu^\star$, we have
\begin{equation*}
	\liminfty{t}\tilde{\bB}\left(\bbnu\left(t\right)\right) = \tilde{\bB}^{\star}, \ \liminfty{t}\bb\left(\bbnu\left(t\right)\right) = \bb^{\star},
\end{equation*}
and thus,
\begin{equation}\label{equ:aux1 in App2}
	\liminfty{t}c\left(t\right)= 0, 
\end{equation}
\begin{equation}\label{equ:aux2 in App2}
	\liminfty{t}\norm{\tilde{\bB}\left(\bbnu\left(t\right)\right)} = \norm{\tilde{\bB}^\star} <1.
\end{equation}
Let 
\begin{equation}
	\delta_1 \triangleq \frac{1 - \norm{\tilde{\bB}^\star}}{2} >0,
\end{equation}
\begin{equation}
	\delta_2 \triangleq \norm{\tilde{\bB}^\star} + \delta_1 <1.
\end{equation}
%Since $\rho\left(\tilde{\bB}^\star\right) <1$, we have $-1<\lambda_{\textrm{max}}\left(\tilde{\bB}^\star\right)<1$, and hence $\delta_2 > 0$.
To show 
\begin{equation*}
	\liminfty{t}\varepsilon\left(t\right) = 0,
\end{equation*}
we only need to show that $\forall \epsilon > 0$, $\exists \  t_0$, when $t > t_0$, we have
\begin{equation*}
	\varepsilon\left(t\right) < \epsilon.
\end{equation*} 
From \eqref{equ:aux1 in App2} and \eqref{equ:aux2 in App2}, we can obtain that $\exists \ t_1 > 0$, when $t > t_1$, we have
\begin{equation*}
	c\left(t\right) < \frac{\epsilon\left(1-\delta_2\right)}{2},
\end{equation*}
\begin{equation*}
	\norm{\tilde{\bB}\left(\bbnu\left(t\right)\right)} \le \delta_2.
\end{equation*}
Then, for $t \ge t_1 $, we have
\begin{equation*}
	\varepsilon\left(t+1\right) \le \delta_2\varepsilon\left(t\right) + \frac{\epsilon\left(1-\delta_2\right)}{2},
\end{equation*} 
and hence for any positive integer $\Delta t$,
\begin{equation*}
	\begin{split}
		&\varepsilon\left(t+\Delta t\right)\\ < & \delta_2^{\Delta t}\varepsilon\left(t\right) + \left(\delta_2^{\Delta t-1} + \delta_2^{\Delta t-2} + \cdots + \delta_2^{0}\right)  \frac{\epsilon\left(1-\delta_2\right)}{2} \\
		<& \delta_2^{\Delta t}\varepsilon\left(t\right) + \frac{\epsilon}{2}.
	\end{split}
\end{equation*}
Since $0<\delta_2<1$, we have 
\begin{equation*}
	\liminfty{\Delta t}\delta_2^{\Delta t} = 0.
\end{equation*}
Let $\Delta t$ such that 
\begin{equation*}
	\delta_2^{\Delta t} < \frac{\epsilon}{2\varepsilon\left(t_1\right)}.
\end{equation*}
Let $t_0 = t_1 + \Delta t$.
Then, when $t> t_0$, we have
\begin{align*}
	\varepsilon\left(t\right) &= \varepsilon\left(t_0 + t - t_0\right) = \varepsilon\left(t_1 + \Delta t + t - t_0\right) \\
	&< \delta_2^{ \Delta t + t - t_0}\varepsilon\left(t_1\right) + \frac{\epsilon}{2} < \delta_2^{\Delta t}\varepsilon\left(t_1\right) + \frac{\epsilon}{2} \\ \nonumber
	&< \frac{\epsilon}{2\varepsilon\left(t_1\right)}\varepsilon\left(t_1\right) + \frac{\epsilon}{2} = \epsilon.
\end{align*}
This proves 
\begin{equation*}
	\lim\limits_{t\to \infty} \varepsilon\left(t\right) = 0.
\end{equation*}
Since all vector norms are equivalent, it implies that $\norm{\btheta\left(t\right) - \btheta^\star}_2$ with the Euclidean norm also goes to zero as $t \to \infty$. 
This completes the proof of Lemma \ref{The-synchronous updates}.

\section{Proof of Lemma \ref{Lemma-spectral radius}}\label{{Proof-Lemma-spec}}
From \eqref{equ:update of nu_0}, we have
\begin{align}
	&\bbnu^{\star} = \bg\left(\bbnu^\star\right) \nonumber \\
	=&-\left(N-1\right)\diag{\left( \beta^{\star}\bI - \bLambda^{\star}    \right)^{-1}}.
\end{align}
From the definition of $\beta^{\star}$ in \eqref{equ:beta0 star} and $\bLambda^{\star}$ in \eqref{equ:Lambda0 star}, we can readily show that $\beta^{\star}\bI - \bLambda^{\star} $ is invertible.
Since we have proven that $\bbnu^{\star} <\bzero$ in Theorem \ref{The-theta_1 conv},
from the definition of $\bLambda^\star$ in \eqref{equ:Lambda0 star}, we can obtain
\begin{equation}\label{Lambda *}
	\begin{split}
		\bLambda^\star &= \left( \bD^{-1} -\Diag{\bbnu^{\star}} \right)^{-1} \\
		&{\prec} \left( -\Diag{\bbnu^{\star}} \right)^{-1}
		{=}\frac{1}{N-1}\left( \beta^*\bI - \bLambda^* \right).
	\end{split}
\end{equation}
%where $\left(\mathrm{a}\right)$ is from $\bD^{-1}$ and $\bbnu^\star < \bo$,
%and for two real diagonal matrices $\bD_1$ and $\bD_2$, $\bD_1 \prec \bD_2$ is equivalent to $\diag{\bD_1} < \diag{\bD_2}$.
From the definition, we can obtain $\bLambda^\star$ is diagonal positive definite.
Let $\lambda_i^\star = \left[ \bLambda^\star \right]_{i,i}, i\in \setposi{M}$.
Hence, $\lambda_i^\star$ is an eigenvalue of $\bLambda^{\star}$ and  $\lambda_i^\star > 0, i\in \setposi{M}$. 
Then, from \eqref{Lambda *}, 
we have 
\begin{equation}
	\lambda_i^\star - \frac{\beta^\star - \lambda_i^\star}{N-1} < 0, i\in \mathcal{Z}_M^+,
\end{equation}
which implies that $\lambda_i^\star < \frac{\beta^*}{N}, i\in \mathcal{Z}_M^+$. Hence, we have $\rho\left( \bLambda^\star \right) < \frac{\beta^\star}{N}$.  This completes the proof.

\section{Proof of Lemma \ref{lemma:eigens of B star}}\label{proof:eigens of B star}
%From the Proposition \ref{prop:1}, we have $\bI - \frac{1}{N}\bD^{-1}\bLambda^{\star}$ and $\frac{1}{\beta^\star}\bLambda^{\star}$ are both positive definite and diagonal.
%Then, we can obtain the following similarity relationship:
We first prove that the eigenvalues of $\bB^\star$ are all real.
Let
\begin{equation}
	\begin{split}
		\bQ &\triangleq \left( \bI - \frac{1}{N}\bD^{-1}\bLambda^\star \right)^{1/2}\left( \frac{\bLambda^\star}{\beta^\star} \right)^{1/2}\left( N\bI - \bA^H\bA \right)\\
		& \ \ \times\left( \bI - \frac{1}{N}\bD^{-1}\bLambda^\star \right)^{1/2}\left( \frac{\bLambda^\star}{\beta^\star} \right)^{1/2}\\
		&= \bK^{-1}\bB^\star\bK \sim \bB^\star,
	\end{split}
\end{equation}
where $\bK$ is the following diagonal positive definite matrix: 
\begin{equation}
	\bK = \left( \frac{\bLambda^\star}{\beta^\star}\right)^{-1/2}\left( \bI - \frac{1}{N}\bD^{-1}\bLambda^\star \right)^{1/2}. 
\end{equation}
Thus, $\bB^\star$ and $\bQ$ have the same eigenvalues.
From the definition, $\bQ$ is Hermitian. Therefore, the eigenvalues of $\bQ$ and $\bB^\star$ are all real. 
Then, from  \eqref{equ:Q in Appendix} on the next page,
\begin{figure*}
	\begin{equation} \label{equ:Q in Appendix}
		\bQ = \underbrace{N\left(\bI - \frac{1}{N}\bD^{-1}\bLambda^\star\right)\frac{\bLambda^\star}{\beta^\star}}_{\bQ_1} +  \underbrace{\left( \bI - \frac{1}{N}\bD^{-1}\bLambda^\star \right)^{1/2}\left( \frac{\bLambda^\star}{\beta^\star} \right)^{1/2}\left( - \bA^H\bA \right) \left( \bI - \frac{1}{N}\bD^{-1}\bLambda^\star \right)^{1/2}\left( \frac{\bLambda^\star}{\beta^\star} \right)^{1/2}}_{\bQ_2}
	\end{equation}
	\hrule
\end{figure*}
we have $\bQ_1, \bQ_2$ are Hermitian, and hence \cite[6.70 (a), pp116]{seber2008matrix}
\begin{equation}
	\lambda_{max}\left(\bQ\right) \le \lambda_{max}\left(\bQ_1\right) + \lambda_{max}\left(\bQ_2\right).
\end{equation}
Then, for $\bQ_1$, we can readily check that it is positive definite, and thus
\begin{equation}
	\lambda_{max}\left(\bQ_1\right) = \rho\left(\bQ_1\right) \overset{\left(\mathrm{a}\right)}{\le}N\rho\left(\bI - \frac{1}{N}\bD^{-1}\bLambda^\star\right)\rho\left(\frac{\bLambda^\star}{\beta^\star}\right)
	\overset{\left(\mathrm{b}\right)}{<}1,
\end{equation}
where $\left(\mathrm{a}\right)$ comes from \cite[Exercise below Theorem 5.6.9]{horn2012matrix} and $\left(\mathrm{b}\right)$ comes from \eqref{equ:aux1 in text} and \eqref{equ:aux in lemma5}. 
%Then, from Sylvester’s law of inertia \cite[16.48]{seber2008matrix}, the inertia of $\bQ_2$ is the same as $-\bA^H\bA$. Since $-\bA^H\bA$ is negative semidefinite, 
Define $\bK_1$ as
\begin{equation}
	\bK_1 = \left( \bI - \frac{1}{N}\bD^{-1}\bLambda^\star \right)^{1/2}\left( \frac{\bLambda^\star}{\beta^\star} \right)^{1/2}.
\end{equation}
Then, we can obtain that $-\bQ_2 = \left(\bA\bK_1\right)^H\bA\bK_1$.
Hence, $\bQ_2$ is negative semidefinite,  and we have $\lambda_{max}\left(\bQ_2\right) \le  0$. Thus, we have
\begin{equation}
	\lambda_{max}\left(\bQ\right) < 1.
\end{equation}
Since $\bB^\star \sim \bQ$, we have $\lambda_{max}\left(\bB^\star\right) < 1$.
Then, from \eqref{equ:range of eigen of B}, we have 
\begin{equation}
	\rho\left(\bB^{\star}\right)< 1\times \rho\left(N\bI - \bA^H\bA\right)\times \frac{1}{N} = \frac{\rho\left(N\bI - \bA^H\bA\right)}{N}.
\end{equation}
Thus, we can obtain that 
\begin{equation*}
	-\frac{\rho\left(N\bI - \bA^H\bA\right)}{N} < \lambda_{B,i} < 1.
\end{equation*}
%Since we also have $\lambda_{B,i} < 1$, we have $1-M < \lambda_{B,i} <1$.
This completes the proof.

\section{Proof of Theorem \ref{lemma:radius of NI-A^HA}}\label{proof:radius of NI-A^HA}
We first give the range of $\rho\left(\bA^H\bA\right)$.
To do so, we begin by giving the detailed expression for $\bA$.
After vectorizing (\ref{equ:maMIMO rece signal 2}), we have
\begin{equation}\label{equ:maMIMO rece signal 3}
	\by = \tilde{\bA}\tilde{\bh} + \bz,
\end{equation}
where $\by, \bz\in\bbC^{N\times 1}$ and  $\tilde{\bh}\in\bbC^{\tilde{M}\times 1}$ are the vectorizations of $\bY$, $\bZ$ and $\bH$, respectively, 
\begin{equation}\label{equ:def of tilde A}
	\tilde{\bA}\triangleq \bM^T\otimes\bV \in\bbC^{N\times \tilde{M}},
\end{equation}
$N =N_rN_p$ and $\tilde{M} = KF_aF_\tau N_rN_f$. 
Define the number of non-zero components in $\bomega \triangleq \mtxvec{\left[ \bOmega_1, \ \bOmega_2, \ \cdots, \ \bOmega_K \right]}$ as $M \triangleq \lVert \bomega \rVert_0$.
Then, $M$ is the actual number of variables to be estimated, i.e., the number of components in $\tilde{\bh}$ with non-zero variance. 
Denote the indexes of non-zero components in $\bomega$ as $\mathcal{P} \triangleq \braces{p_1,p_2,\ldots,p_M}$,
where $1\le p_1 < p_2 <\ldots <p_M \le \tilde{M}$. 
We define an extraction matrix as $\bE  \triangleq \left[ \be_{p_1}, \ \be_{p_2}, \ \cdots, \ \be_{p_M} \right]\in \bbC^{\tilde{M}\times M}$,
%\begin{equation*}
%	\bE  \triangleq \left[ \be_{p_1}, \ \be_{p_2}, \ \cdots, \ \be_{p_M} \right]\in \bbC^{\tilde{M}\times M},
%\end{equation*}
where $\be_i\in\bbC^{\tilde{M}\times 1}, i\in\mathcal{P}$ is the $i$-th column of the $\tilde{M}$ dimensional identity matrix.
Then, \eqref{equ:maMIMO rece signal 3} can be rewritten as $\by = \bA\bh + \bz$,
where 
\begin{equation*}\label{equ:A in general}
	\bA = \tilde{\bA}\bE \in\bbC^{N\times M}
\end{equation*}
is the matrix of $\tilde{\bA}$ after column extraction, $\bh = \bE^T\tilde{\bh} \in\bbC^{M\times 1}$ is the vector of $\tilde{\bh}$ after variable extraction and $\bD \triangleq \Diag{\bE^T\bomega}$.
From the definition, $\bA^H\bA$ is positive semidefinite. The eigenvalues $v_1\le v_2\le  \cdots\le  v_M$ of $\bA^H\bA$ are real and nonnegative. 
Thus, we can obtain $v_M = \rho\left(\bA^H\bA\right)$.
Then, we have
\begin{align}
	\rho\left(\bA^H\bA\right) &= v_M  \le \sum_{m=1}^{M}v_m 
	= \tr{\bA\bA^H} \nonumber\\
	&\le \sum_{m=1}^{M}v_M = M\rho\left( \bA^H\bA \right).
\end{align} 
When $\left|a_{i,j}\right|=1, \forall i,j$, we can obtain
\begin{equation}
	\tr{\bA\bA^H} = \lVert \bA\rVert_F^2 = NM.
\end{equation}
Thus, we have $\rho\left(\bA^H\bA\right) \le NM \le M\rho\left(\bA^H\bA\right)$, which implies that
\begin{equation}
	N\le \rho\left( \bA^H\bA \right)\le NM.
\end{equation}
Hence, we have $0\le v_1\le \cdots \le v_M \le NM$.
The eigenvalues of $N\bI - \bA^H\bA$ are $v_m' =N - v_m, m\in \setposi{M}$.
Thus, we have $N-NM \le v_m'\le N$, and $\left|  v_m'\right| \le \max\braces{N,NM-N}$.
Since in this paper $M >1$, we have $\rho\left(N\bI - \bA^H\bA\right) \le NM-N$.
If $\rank{\bA} = 1$, then $\bA$ can be decomposed as $\bA = \ba\bb^H$, where $\ba \in \bbC^{N\times 1}$, $\bb \in \bbC^{M\times 1}$, and $\ba$ and $\bb$ are non-zero.
Combining \cite[6.54 (c)]{seber2008matrix}, $\bb \bb^H$ and $\bb^H \bb$ are positive semi-definite, we can obtain that 
\begin{align}
	\rho\left(\bA^H\bA\right) &= \rho\left( \bb \ba^H\ba \bb^H \right) = \ba^H\ba\rho\left( \bb \bb^H \right) =\ba^H\ba\rho\left( \bb^H \bb \right)\nonumber  \\ 
	&= \tr{\bA^H\bA} = NM.
\end{align}
Then, we have $\rho\left(N\bI - \bA^H\bA\right) = NM -N$.
This completes the proof.

\section{Proof of Theorem \ref{lemma:SR in mMIMO-CE 1}}\label{proof:SR in mMIMO-CE 1}

From the definition of $\bA$ in \eqref{equ:maMIMO rece signal 4}, we have 
\begin{equation*}
	\bA^H\bA = \bE^T\left( \tilde{\bA}^H\tilde{\bA} \right)\bE,
\end{equation*}
which implies that $\bA^H\bA$ is a principal submatrix of $\tilde{\bA}^H\tilde{\bA}$.
Combining \cite[Theorem 4.3.28]{horn2012matrix} and the componentary transformation, we have
\begin{equation}
	\lambda_{max}\left(\bA^H\bA\right) \le\lambda_{max}\left(\tilde{\bA}^H\tilde{\bA}  \right),  
\end{equation}
which implies that $\rho\left(\bA^H\bA\right) \le \rho\left(\tilde{\bA}^H\tilde{\bA}\right)$.
From \cite[6.54 (c), pp 107]{seber2008matrix}, we can obtain $\rho\left(\tilde{\bA}^H\tilde{\bA}\right) = \rho\left(\tilde{\bA}\tilde{\bA}^H\right)$.
From the definition \eqref{equ:def of tilde A}, we have
\begin{equation}
	\begin{split}
		\tilde{\bA}\tilde{\bA}^H &= \left(\bM^T \otimes \bV\right)\left( \bM^T \otimes \bV \right)^H \\
		&= \left( \bM^T\bM^* \right) \otimes \left( \bV_v\otimes \bV_h \right)\left( \bV_v^H\otimes \bV_h^H \right)\\
		&= F_vF_hN_r\bK \otimes \bI,
	\end{split}     
\end{equation}
where $\bK = \sum_{k=1}^{K}\bX_k\bF\bF^H\bX_k^H$. 
Since $\bK$ is Hermitian, we can decompose $\bK$ as $\bK = \bU\bLambda_K\bU^H$, where $\bU$ is unitary. 
Then, we can obtain
\begin{equation}
	\bK \otimes \bI = \left( \bU\otimes \bI \right)\left(\bLambda_K\otimes \bI\right)\left(\bU^H\otimes \bI\right) = \bU'\bLambda_K'\left(\bU'\right)^H,
\end{equation} 
where $\bU'$ is unitary and $\bLambda_K'$ is diagonal. 
Since $\bK \otimes \bI$ is also Hermitian, we can obtain that 
\begin{equation*}
	\rho\left(\bK\otimes \bI\right) = \rho\left(\bK\right).
\end{equation*}
Since $\bX_k$ is unitary, we also have $\rho\left(\bX_k\bF\bF^H\bX_k^H\right) = \rho\left( \bF\bF^H \right), \forall k$.
Finally, we have
\begin{equation}
	\begin{split}
		&\rho\left( \tilde{\bA}^H\tilde{\bA} \right) = \rho\left(\tilde{\bA}\tilde{\bA}^H\right) 
		= F_vF_hN_r\rho\left(\bK\right)\\
		\overset{\left(\mathrm{a}\right)}{\le}& F_vF_hN_r\sum_{k=1}^{K}\rho\left(\bF\bF^H\right) = KF_vF_hN_r\rho\left(\bF\bF^H\right),
	\end{split}
\end{equation}
where $\left(\textrm{a}\right)$ comes from \cite[6.70 (a), pp 116]{seber2008matrix} and $\bX_k\bF\bF^H\bX_k^H$ is positive semi-definite.
Similarly, we can obtain 
\begin{align}
	\rho\left( \bF\bF^H \right) &= \rho\left(\bF^H\bF\right) = \rho\left( 		 \tilde{\bI}_{F_\tau N_p\times F_\tau N_f}^T \bF_d^H\bF_d\tilde{\bI}_{F_\tau N_p\times F_\tau N_f} \right) \nonumber \\
	&\le \rho\left(  \bF_d^H\bF_d \right) = F_\tau N_P ,
\end{align}
%From the definition, we have
%\begin{equation}
%	\begin{split}
	%		\bF\bF^H &= \bF_d\tilde{\bI}_{F_\tau N_p\times F_\tau N_f}\tilde{\bI}_{F_\tau N_p\times F_\tau N_f}^H\bF_d^H\\
	%		&= \bF_d
	%		\begin{bmatrix}
		%			\bI_{F_\tau N_f} &\bO\\
		%			\bO & \bO
		%		\end{bmatrix}\bF_d^H \\
	%		&= \bF_d\bF_d^H - \bF_d
	%		\underbrace{		
		%			\begin{bmatrix}
			%				\bO & \bO\\
			%				\bO & \bI_{F_\tau \left(N_p- N_f\right)}
			%			\end{bmatrix}
		%		}_{\bT}\bF_d^H\\
	%		&= F_\tau N_p\bI - \bF_d\bT\bF_d^H
	%		\overset{\left(\mathrm{a}\right)}{\preceq}F_\tau N_p \bI,
	%	\end{split}
%\end{equation}
%where  $\left(\mathrm{a}\right)$ comes from $\bF_d\bT\bF_d^H$ is positive semi-definite. 
%Thus, we can obtain
%\begin{equation*}
%	\rho\left(\bF\bF^H\right) \le F_\tau N_p,
%\end{equation*}
and hence,
\begin{equation}
	\rho\left(\bA^H\bA\right)\le KF_vF_hF_\tau N_r N_p = KF_vF_hF_\tau N.
\end{equation}
From a similar process in Appendix \ref{proof:radius of NI-A^HA}, we can obtain 
\begin{equation}\label{equ:aux1 in appH}
	\rho\left(N\bI-\bA^H\bA\right) \le KF_vF_hF_\tau N - N.
\end{equation}
Substituting \eqref{equ:aux1 in appH} into the right hand side of \eqref{equ:range of d 0}, we have 
\begin{equation}
	\frac{2}{1 + \frac{\rho\left( N\bI - \bA^H\bA \right)}{N}} \ge \frac{2}{KF_vF_hF_\tau}.
\end{equation}
In this case, if $d < \frac{2}{KF_vF_hF_\tau}$, then EIGA converges.
This completes the proof.

\section{Proof of Theorem \ref{lemma:SR in mMIMO-CE 2}}\label{proof:SR in mMIMO-CE 2}
%Let $\bA$ be equal to $\bA_p$ in \cite[Equation (50)]{EIGA}, we have
%\begin{equation*}
%	\bA = \tilde{\bA}_p\bE_p,
%\end{equation*}
%where 
%\begin{equation*}
%	\tilde{\bA}_p = \bF_d\otimes \bV \in \bbC^{N \times F_vF_hF_\tau N},
%\end{equation*}
%and $\bE_p \in \bbC^{F_vF_hF_\tau N \times M}$ is another column extraction matrix \cite[above Equation (50)]{EIGA}.
From the definitions \eqref{equ:A in APSP}, it is not difficult to obtain that 
\begin{align}
	\rho\left(\bA^H\bA\right) \le \rho\left(\tilde{\bA}_p^H\tilde{\bA}_p\right) = \rho\left(  \tilde{\bA}_p\tilde{\bA}_p^H\right)= F_vF_hF_\tau N. 
\end{align}
Hence, we can obtain that 
\begin{equation*}
	\rho\left( N\bI - \bA^H\bA \right) \le \left( F_vF_hF_\tau -1 \right)N. 
\end{equation*}
Similarly, substituting the above range into the right hand side of \eqref{equ:range of d 0}, we have 
\begin{equation}
	\frac{2}{1 + \frac{\rho\left( N\bI - \bA^H\bA \right)}{N}} \ge \frac{2}{F_vF_hF_\tau}.
\end{equation}
In this case, if $d < \frac{2}{F_vF_hF_\tau}$, then EIGA converges.
This completes the proof.

\section{Proof of Lemma \ref{lem:m and e-condition of RIGA}}\label{proof:m and e-condition of RIGA}
	Given the fixed points $\bvartheta_{0}^{\star} = \funf[\btheta_{0}^{\star},\bbnu_0^{\star}]$, $\overlinebvartheta^{\star} = \funf[\overlinebtheta^{\star}\overlinebnu^{\star}]$ and $\tildebvartheta_{0n}^{\star} = \funf[\tildebtheta_{0n}^{\star},\tildebnu_{0n}^{\star}], n\in\setposi{N}$.
	From the definitions in \eqref{equ:definition of fixed point of RIGA1}, \eqref{equ:mu_0 and Sigma_0} and \eqref{equ:mu_n and Sigma_n}, we have
	\begin{subequations}
		\begin{equation}\label{equ:mu_0^star}
			\bmu_0^{\star} = \frac{1}{2}\bSigma_{0}^{\star}\btheta_{0}^{\star},
		\end{equation}
		\begin{equation}\label{equ:Sigma_0^star}
			\bSigma_{0}^{\star} = \left(  \bD^{-1} - \Diag{\bbnu_0^{\star}} \right)^{-1},
		\end{equation}
		\begin{equation}\label{equ:mu_0n^star}
			\bmu_{0n}^{\star} = \frac{1}{2}\bSigma_{0n}^{\star}{\btheta}_{0n}^{\star}, n\in\setposi{N},
		\end{equation}
		\begin{equation}\label{equ:Sigam_0n^star}
			\bSigma_{0n}^{\star} = \left(  \bD^{-1} - \Diag{{\bbnu}_{0n}^{\star}} \right)^{-1}, n\in\setposi{N},
		\end{equation}
		\begin{equation}\label{equ:mu_n^star}
			\bmu_{n}^{\star} = \bSigma_{n}^{\star}\left( \frac{y_n}{\tilde{\sigma}_z^2}\bgamma_n + \frac{1}{2}\overlinebtheta^{\star} \right), n\in\setposi{N},
		\end{equation}
		\begin{equation}\label{equ:Sigma_n^star}
			\bSigma_n^{\star} \equaa \bLambda^{\star} - \frac{1}{\beta^{\star}}\bLambda^{\star}\bgamma_n\bgamma_n^H\bLambda^{\star}, n\in\setposi{N},
		\end{equation}
	\end{subequations}
	where $\left(\textrm{a}\right)$ comes from that the magnitudes of the components in $\bA$ are $1$, and $\bLambda^{\star}$ and $\beta^{\star}$ are given by \eqref{equ:Lambda^star first} and \eqref{equ:beta^star first}, respectively.
	Note that the noise variance $\sigma_z^2$ is replaced with  $\tilde{\sigma}_z^2$ in \eqref{equ:mu_n^star} since its input noise variance is $\tilde{\sigma}_z^2$. 
	Then, we can obtain
	\begin{equation}\label{equ:m-condition of variance of RIGA1}
		\begin{split}
			\diag{\bSigma_{0}^{\star}} &= \diag{\left( \bD^{-1} - \Diag{\bbnu_0^{\star}} \right)^{-1}}\\
			&\equaa \diag{\left( \bD^{-1} - \Diag{{\bbnu}_{0n}^{\star}} \right)^{-1}}\\
			&\equab \diag{\bSigma_{0n}^{\star}} \equac \diag{\bSigma_{n}^{\star}}, n\in\setposi{N},
		\end{split}
	\end{equation}
	where $\left(\textrm{a}\right)$ comes from  \eqref{equ:m-condition of RIGA2},
	$\left(\textrm{b}\right)$ comes from \eqref{equ:Sigam_0n^star}, $\left(\textrm{c}\right)$ comes from that $p_0\left(\bh;\tildebvartheta_{0n}^{\star}\right)$ is the $m$-projection of $p_n\left(\bh;\overlinebvartheta^{\star}\right)$ and thus \eqref{equ:mp invariant} holds.
	Then, we can obtain 
	\begin{equation}
		\bmu_0^{\star} \equaa \frac{1}{2}\bSigma_{0}^{\star}\btheta_{0}^{\star} \equab \frac{1}{2N}\sum_{n=1}^{N}\bSigma_{0n}^{\star}\tildebtheta_{0n}^{\star} \overset{\left(\textrm{c}\right)}{=}\!\frac{1}{N}\sum_{n=1}^{N}\bmu_{0n}^{\star} \overset{\left(\textrm{d}\right)}{=}\!\frac{1}{N}\sum_{n=1}^{N}\bmu_{n}^{\star},
	\end{equation}
	where $\left(\textrm{a}\right)$ comes from \eqref{equ:mu_0^star}, $\left(\textrm{b}\right)$ comes from  \eqref{equ:m-condition of RIGA1 new} and \eqref{equ:m-condition of variance of RIGA1}, $\left(\textrm{c}\right)$ comes from \eqref{equ:mu_0n^star},
	and $\left(\textrm{d}\right)$ comes from that $p_0\left(\bh;\tildebvartheta_{0n}^{\star}\right)$ is the $m$-projection of $p_n\left(\bh;\overlinebvartheta^{\star}\right)$ and thus \eqref{equ:mp invariant} holds.
	This completes the proof.

\section{Proof of Theorem \ref{the:the expression of mu_0^star in S-IGA}}\label{proof:the expression of mu_0^star in S-IGA}

From Theorem \ref{The-theta_1 conv} and \eqref{equ:m-condition of RIGA2}, we can obtain $\bbnu_0^{\star} < 0$.
From Lemma \ref{lem:m and e-condition of RIGA}, we have
\begin{align}
	\bmu_0^{\star} &= \frac{1}{N}\bmu_n^{\star}
	\equaa \frac{1}{2N}\sum_{n=1}^{N}\bSigma_{n}^{\star}\left(\overlinebtheta^{\star} + \frac{2y_n}{\tilde{\sigma}_z^2}\bgamma_n\right)\nonumber \\
	&\equab \frac{1}{2N}\sum_{n=1}^{N}\bSigma_{n}^{\star}\left(\frac{N-1}{N}\btheta_0^{\star} + \frac{2y_n}{\tilde{\sigma}_z^2}\bgamma_n\right)\\
	&\equac \underbrace{		\frac{N-1}{N^2}\sum_{n=1}^{N}\bSigma_{n}^{\star}\left(\bSigma_{0}^{\star}\right)^{-1}}_{\bQ}\bmu_0^{\star} + \underbrace{\frac{1}{N\tilde{\sigma}_z^2}\sum_{n=1}^{N}\bSigma_{n}^{\star}\bgamma_n y_n}_{\bq}\nonumber\\
	&=\bQ\bmu_0^{\star} + \bq,\nonumber
\end{align}
where $\left(\textrm{a}\right)$ comes from \eqref{equ:mu_n^star}, $\left(\textrm{b}\right)$ comes from the $e$-condition in \eqref{equ:m and e-conditions of RIGA}, and $\left(\textrm{c}\right)$ comes from \eqref{equ:mu_0^star}. Combining \eqref{equ:Sigma_n^star}, $\bq$ can be expressed as 
\begin{align}
	\bq &= \frac{1}{N\tilde{\sigma}_z^2}\bLambda^{\star}\sum_{n=1}^{N}\left(\bI - \frac{1}{\beta^{\star}}\bgamma_n\bgamma_n^H\bLambda^{\star}\right)\bgamma_ny_n \nonumber \\
	&\equad \frac{1}{N\tilde{\sigma}_z^2}\bLambda^{\star}\left( \bA^H\by - \sum_{n=1}^{N}\frac{\bgamma_n^H\bLambda^{\star}\bgamma_n}{\beta^{\star}}\bgamma_ny_n \right)\\
	&\equae \frac{1}{N\tilde{\sigma}_z^2}\bLambda^{\star}\left(1-\frac{\tr{\bLambda^{\star}}}{\beta^{\star}}\right)\bA^H\by
	\equaf \frac{1}{N\beta^{\star}}\bLambda^{\star}\bA^H\by,\nonumber
\end{align}
where $\left(\textrm{d}\right)$ comes from the definition of $\bgamma_n$ in \eqref{equ:gamma_n}, i.e., $\bA^H = \left[ \bgamma_1, \ \bgamma_2, \ \cdots, \ \bgamma_N \right]$, $\left(\textrm{e}\right)$ comes from that the magnitudes of the components in $\bA$ are $1$ and $\bA^H = \left[ \bgamma_1, \ \bgamma_2, \ \cdots, \ \bgamma_N \right]$, and $\left(\textrm{f}\right)$ comes from $\eqref{equ:beta^star first}$.
Meanwhile, $\bQ$ can be expressed as 
\begin{align}
	\bQ &= \frac{N-1}{N^2}\sum_{n=1}^{N}\left(\bI - \frac{1}{\beta^{\star}} \bLambda^{\star}\bgamma_n\bgamma_n^H \right)\bLambda^{\star}\left( \bSigma_{0}^{\star} \right)^{-1}\nonumber\\
	&\overset{\left(\textrm{g}\right)}{=}\frac{N-1}{N^2}\left( N\bI - \frac{1}{\beta^{\star}}\bLambda^{\star}\bA^H\bA \right)\bLambda^{\star}\left( \bSigma_{0}^{\star} \right)^{-1}\nonumber\\
	&\overset{\left(\textrm{h}\right)}{=} \frac{N-1}{N^2}\left( N\bI - \frac{1}{\beta^{\star}}\bLambda^{\star}\bA^H\bA \right)\bLambda^{\star} \\
	&\ \ \ \ \times \left\lbrace \frac{1}{N-1}\left[N\left( \bD^{-1} - \Diag{\overlinebnu^{\star}}\right) - \bD^{-1} \right] \right\rbrace \nonumber \\
	&\overset{\left(\textrm{i}\right)}{=}\frac{1}{N^2}\left( N\bI - \frac{1}{\beta^{\star}}\bLambda^{\star}\bA^H\bA \right)\left( N\bI - \bLambda^{\star}\bD^{-1} \right)\nonumber\\
	&= \bI - \frac{1}{N}\bLambda^{\star}\bD^{-1} -\frac{1}{N\beta^{\star}}\bLambda^{\star}\bA^H\bA\left( \bI - \frac{1}{N}\bLambda^{\star}\bD^{-1} \right),\nonumber
\end{align}
where $\left(\textrm{g}\right)$ comes from $\bA^H = \left[ \bgamma_1 \ \bgamma_2 \ \ldots \ \bgamma_N \right]$, $\left(\textrm{h}\right)$ comes from \eqref{equ:Sigma_0^star} and $e$-condition in \eqref{equ:m and e-conditions of RIGA}, and $\left(\textrm{i}\right)$ comes from \eqref{equ:Lambda^star first}. Thus, we have
\begin{align}\label{equ:mu_0^star final}
	\bmu_0^{\star} &= \left(\bI-\bQ\right)^{-1}\bq\nonumber\\
	&= \left(  \frac{1}{N}\bLambda^{\star}\bD^{-1} +\frac{1}{N\beta^{\star}}\bLambda^{\star}\bA^H\bA\left( \bI - \frac{1}{N}\bLambda^{\star}\bD^{-1} \right) \right)^{-1}\nonumber\\
	&\ \ \ \ \times \frac{1}{N\beta^{\star}}\bLambda^{\star}\bA^H\by\nonumber\\
	&=\bD\left[ \bA^H\bA\left( \bD- \frac{1}{N}\bLambda^{\star} \right) + \beta^{\star}\bI  \right]^{-1}\bA^H\by.
\end{align}
We then show that the matrix inversion above is valid.
From \eqref{equ:Lambda^star first}, we can obtain $0<\left[ \bLambda^{\star} \right]_{i,i} < \left[\bD\right]_{i,i}$,
%\begin{equation}\label{equ:inequality of Lambda and D}
%	0<\left[ \bLambda^{\star} \right]_{i,i} < \left[\bD\right]_{i,i},
%\end{equation}
since we have $\overlinebnu^{\star}<0$. Then, we have $\bK\triangleq\bD - \frac{1}{N}\bLambda^{\star} \succ 0$.
From \eqref{equ:beta^star first}, we have 
\begin{equation}\label{equ:bound of beta^star}
	0<\tilde{\sigma}_z^2<\beta^{\star}<\tilde{\sigma}_z^2 + \tr{\bD}.
\end{equation}
Thus, we can obtain
\begin{equation}
	\left[ \bA^H\bA\left( \bD- \frac{1}{N}\bLambda^{\star} \right) + \beta^{\star}\bI  \right]^{-1} \!\!\!\!\!\!\!= \bK^{-1}\left( \bA^H\bA + \beta^{\star}\bK^{-1} \right)^{-1}.
\end{equation}
Hence, the matrix above is invertible and \eqref{equ:mu_0^star final} is valid. This completes the proof.

\section{Proof of Lemma \ref{lemma:bijection of f}}\label{proof:bijection of f}
$f$ can be expressed as 
%$f = x - \sum_{i=1}^{M}\frac{d_ix}{x + d_i\left(N-1\right)} $,
\begin{equation}
	f = x - \sum_{i=1}^{M}\frac{d_ix}{x + d_i\left(N-1\right)},
\end{equation}
where $d_i = \left[ \bD \right]_{i,i} > 0, i\in\setposi{M}$.
Then, the derivative  of $f$ satisfies 
\begin{equation}
	\frac{\intd f}{\intd x} = 1-\frac{1}{N-1}\sum_{i=1}^{M}\left( \frac{d_i}{d_i + \frac{x}{N-1}} \right)^2 \overset{\left(\textrm{a}\right)}{>}1-\frac{M}{N-1},
\end{equation}
where $\left(\textrm{a}\right)$ comes from $x >0$ and $d_i>0, i\in\setposi{M}$. If $M<N$ ($M$ and $N$ are both integers), then $f$ is a monotonically increasing function.
From $f\left(0\right) = 0$, we can obtain $f > 0$ when $x >0$.
This completes the proof.

\section{Proof of Theorem \ref{the:asymptotic value of mu_0 under case 2}}\label{proof:asymptotic value of mu_0 under case 2}
We first derive the asymptotic value of $\left[\bLambda^{\star}\right]_{i,i}, i\in\setposi{M}$, and $\beta^{\star}$ when $N$ tends to infinity and $M < N$. 
\begin{align}\label{equ:nu0 divide N}
	&\frac{1}{N}\Diag{\bbnu_0^{\star}} \equaa \Diag{\bbnu_0^{\star} - \overlinebnu^{\star}} 
	\equab \left( \bLambda^{\star} \right)^{-1} - \left( \bSigma_{0}^{\star} \right)^{-1} \nonumber \\
	\equac& \left( \bLambda^{\star} \right)^{-1} - \left( \bI \odot \bSigma_{n}^{\star} \right)^{-1}
	\equad \left( \bLambda^{\star} \right)^{-1} - \left( \bLambda^{\star} - \frac{1}{\beta^{\star}}\left( \bLambda^{\star}\right)^2  \right)^{-1}\nonumber\\
	=&-\frac{1}{\beta^{\star}}\left(\bI - \frac{1}{\beta^{\star}}\bLambda^{\star} \right)^{-1},
\end{align}
where $\left(\textrm{a}\right)$ comes from the $e$-condition in \eqref{equ:m and e-conditions of RIGA}, $\left(\textrm{b}\right)$ comes from \eqref{equ:Sigma_0^star} and \eqref{equ:Lambda^star first},  $\left(\textrm{c}\right)$ comes from
$\diag{\bSigma_{0}^{\star}} = \diag{\bSigma_{n}^{\star}}, n\in\setposi{N}$, in Lemma \ref{lem:m and e-condition of RIGA},
and $\left(\textrm{d}\right)$ comes from \eqref{equ:Sigma_n^star}, the magnitudes of the components in $\bA$ are $1$ and \cite[Equation 11.42, pp 252]{seber2008matrix}.
We then show that when $N$ tends to infinity, each diagonal component in $\bLambda^{\star}$ tends to $0$. Since we have $\overlinebnu^{\star} <0$, then, according to \eqref{equ:Lambda^star first} and \eqref{equ:beta^star first}, we have $\diag{\bLambda^{\star}} > 0$ and $\diag{\bLambda^{\star}}<\beta^{\star}$. Then, we can obtain
$0<\left[ \bI - \frac{1}{\beta^{\star}}\bLambda^{\star} \right]_{i,i}<1$ and $	1<\left[  \left( \bI - \frac{1}{\beta^{\star}}\bLambda^{\star} \right)^{-1} \right]_{i,i}, i\in\setposi{M}$. Combining \eqref{equ:nu0 divide N}, we have 
%$\frac{1}{N}\bbnu_0^{\star} < -\frac{1}{\beta^{\star}}$. 
\begin{equation}
	\frac{1}{N}\bbnu_0^{\star} < -\frac{1}{\beta^{\star}}.
\end{equation}
Then, from the $e$-condition in \eqref{equ:m and e-conditions of RIGA}, we have
\begin{equation}
	\overlinebnu^{\star} = \frac{N-1}{N}\bbnu_0^{\star} < -\frac{N-1}{\beta^{\star}}<0.
\end{equation}
Since $\bD$ is positive definite diagonal, from \eqref{equ:Lambda^star first}, we can obtain
\begin{equation}\label{equ:inequality of Lambada}
	0<	\left[ \bLambda^{\star} \right]_{i,i} < -\frac{1}{\left[\overlinebnu^{\star}\right]_i}<\frac{\beta^{\star}}{N-1}\overset{\left(\textrm{a}\right)}{<} \frac{\tilde{\sigma}_z^2 + \tr{\bD}}{N-1}, 
\end{equation}
where $i\in\setposi{M}$ and
$\left(\textrm{a}\right)$ comes from \eqref{equ:bound of beta^star}.
Thus, we obtain $\lim\limits_{N\to \infty}\left[ \bLambda \right]_{i,i} = 0$.

We then show the asymptotic value of $f\left(\beta^{\star}\right)$.
From \eqref{equ:bound of beta^star}, it is readily obtained that $\beta^{\star} > 0$, and thus, $f\left(\beta^{\star}\right)$ is valid.
Then, we can obtain the following relationship
\begin{equation}\label{equ:range of nu_0 divide N new}
	-\frac{N-1}{\left(N-2\right)\beta^{\star}}\mathbf{1}\overset{\left(\textrm{a}\right)}{<}  \frac{\bbnu_0^{\star}}{N} \equab \frac{\overlinebnu^{\star}}{N-1} \overset{\left(\textrm{c}\right)}{<} -\frac{1}{\beta^{\star}}\mathbf{1},
\end{equation}
where $\mathbf{1}$ is the all-one vector,
$\left(\textrm{a}\right)$ comes from $0<\beta^{\star}$ in \eqref{equ:bound of beta^star}, the last equation in \eqref{equ:nu0 divide N} and $\left[\bLambda^{\star}\right]_{i,i} < \beta^{\star}/\left(N-1\right)$ in \eqref{equ:inequality of Lambada}, 
$\left(\textrm{b}\right)$ comes from the $e$-condition in \eqref{equ:m and e-conditions of RIGA},
and $\left(\textrm{c}\right)$ comes from $-1/\left[ \bbnu^{\star} \right]_{i} < \beta^{\star}/\left(N-1\right)$ in \eqref{equ:inequality of Lambada}, 
$0<\beta^{\star}$ and $\bbnu^{\star} < 0$ in Theorem \ref{the:the expression of mu_0^star in S-IGA}. 
Combining $\beta^{\star}$ in \eqref{equ:Lambda^star and beta^star} and the relationship in  \eqref{equ:range of nu_0 divide N new},
we define three functions as
\begin{subequations}\label{equ:f1 and f2}
	\begin{equation}\label{equ:f0}
		f_0 \triangleq \tr{ \left( \bD^{-1} + \frac{\left(N-1\right)^2}{\left(N-2\right)\beta^{\star}} \bI\right)^{-1} },
	\end{equation}
	\begin{equation}
		f_1 \triangleq \tr{\bLambda^{\star}} = \tr{\left( \bD^{-1}-\Diag{\overlinebnu^{\star}} \right)^{-1}},
	\end{equation}
	\begin{equation}\label{equ:function f2}
		f_2\left(\beta^{\star}\right) \triangleq
		\tr{\left( \bD^{-1} + \frac{N-1}{\beta^{\star}}\bI \right)^{-1}}.
	\end{equation}
\end{subequations}
Write $f_2$ as the function of $\beta^{\star}$ since we will use this form in the following.
From  \eqref{equ:range of nu_0 divide N new}, it is not difficult to show that  $f_0 < f_1 < f_2$. 
Thus, we have $0<f_2-f_1$ and $f_0-f_1<0$.
From \eqref{equ:f0}, we can obtain 
\begin{align}\label{equ:f_0 - f_2}
	f_0 &= \tr{\left(  \bD^{-1} + \frac{N-1}{\beta^{\star}}\bI + \frac{N-1}{\left(N-2\right)\beta^{\star}}\bI \right)^{-1}} \\\nonumber
	&\overset{\left(\textrm{a}\right)}{>}\tr{ \bL^{-1} - \frac{N-1}{\left(N-2\right)\beta^{\star}}\bL^{-2} }\\
	&\overset{\left(\textrm{b}\right)}{>}f_2 - \frac{N-1}{\left(N-2\right)\beta^{\star}}\tr{\left( \frac{N-1}{\beta^{\star}}\bI \right)^{-2}} \nonumber\\
	&= f_2 - \frac{M\beta^{\star}}{\left(N-1\right)\left(N-2\right)},\nonumber
\end{align}
%\begin{align}
%		\bL = \left( \bD^{-1} + \frac{N-1}{\beta^{\star}}\bI \right),
%\end{align} 
where $	\bL = \left( \bD^{-1} + \frac{N-1}{\beta^{\star}}\bI \right)$,
$\left(\textrm{a}\right)$ comes from $\left(a+b\right)^{-1}>a^{-1} - a^{-2}b $ with $a,b >0$ and $\left(\textrm{b}\right)$ comes from that $\bD$ is positive definite. 
Then, we can obtain
\begin{equation}
	f_2 -  f_1  \overset{\left(\textrm{a}\right)}{<}  f_0 + \frac{M\beta^{\star}}{\left(N-1\right)\left(N-2\right)} - 
	f_1
	\overset{\left(\textrm{b}\right)}{<}\!\frac{M\left( \tilde{\sigma}_z^2 + \tr{\bD} \right)}{\left(N-1\right)\left(N-2\right)},
\end{equation}
where $\left(\textrm{a}\right)$ comes from \eqref{equ:f_0 - f_2}, and $\left(\textrm{b}\right)$ comes from $f_0 - f_1<0$ and \eqref{equ:bound of beta^star}.
From $f_2-f_1>0$ and
$M < N$, we have $\lim\limits_{N\to \infty}\left(f_2\left(\beta^{\star}\right) - f_1\right) = 0$.
From \eqref{equ:beta^star first} and $\tilde{\sigma}_z^2 = f\left(\sigma_z^2\right)$, we can immediately obtain 
%$\lim\limits_{N\to \infty}\beta^{\star} = \lim\limits_{N\to \infty}f\left(\sigma_z^2\right) + \lim\limits_{N\to \infty}f_2$,
\begin{equation}
	\lim\limits_{N,M\to \infty}\beta^{\star} = \lim\limits_{N,M\to \infty}f\left(\sigma_z^2\right) + \lim\limits_{N,M\to \infty}f_2,
\end{equation} 
and thus,  
%\begin{equation}\label{equ:case 2}
%	\begin{split}
	%		&\lim\limits_{N,M\to \infty}f\left(\beta^{\star}\right)   = \lim\limits_{N,M\to \infty}\left(\beta^{\star} - f_2\right) \\
	%		=& \lim\limits_{N,M\to \infty}\tilde{\sigma}_z^2
	%		= \lim\limits_{N,M\to \infty}f\left(\sigma_z^2\right).
	%	\end{split}
%\end{equation}
\begin{equation}\label{equ:case 2}
	\lim\limits_{N\to \infty}f\left(\beta^{\star}\right)   = \lim\limits_{N\to \infty}\left(\beta^{\star} - f_2\right) 
	= \lim\limits_{N\to \infty}f\left(\sigma_z^2\right).
\end{equation}
%when $\lim\limits_{N,M\to \infty}M/N = q$ and $0<q<1$.
Combining Lemma \ref{lemma:bijection of f}, when $M < N$ we can obtain 
\begin{equation}\label{equ:asy value of beta^star under case 2}
	\lim\limits_{N\to \infty}\beta^{\star} = \sigma_z^2.
\end{equation}
Combining  \eqref{equ:asy value of beta^star under case 2}, $\lim\limits_{N\to \infty}\left[\bLambda^{\star}\right]_{i,i} = 0, i\in \setposi{M}$, and $\bmu_{0}^\star$ in Theorem \ref{the:the expression of mu_0^star in S-IGA}, we have 
%$\lim\limits_{N\to \infty}\bmu_{0}^{\star}  = \bD\left( \bA^H\bA\bD + \sigma_z^2\bI  \right)^{-1}\bA^H\by = 		\tilde{\bmu}$.
\begin{equation}
	\lim\limits_{N\to \infty}\bmu_{0}^{\star}  = \bD\left( \bA^H\bA\bD + \sigma_z^2\bI  \right)^{-1}\bA^H\by = 		\tilde{\bmu}.
\end{equation}

From \eqref{equ:function f beta} and \eqref{equ:function f2}, we have $f\left(x\right) = x- f_2\left(x\right)$, where $x>0$.     
%$f_2\left(x\right)$ is obtained by replacing $\beta^{\star}$ with $x$ in \eqref{equ:function f2}.
We show that when $M$ is fixed, 
\begin{equation*}
	\lim\limits_{N\to \infty}f_2\left(x\right) = 0, x>0.
\end{equation*}
Denote $d_i \triangleq \left[\bD\right]_{i,i} >0, i\in \setposi{M}$, $d_{\textrm{min}}$ and $d_{\textrm{max}}$ as the minimum and the maximum of $\braces{d_i}_{i=1}^M$.
We have 
\begin{equation*}
	f_2\left(x\right) = \sum_{i=1}^{M}\frac{d_i x}{\left(N-1\right)d_i + x }.
\end{equation*}
Treating $f_2$ as a function of $\braces{d_i}_{i=1}^M$, we have
%$\partial f_2/\partial d_i = {\left(N-1\right)d_i^2}/{\left[ x + \left(N-1\right)d_i \right]^2} > 0, i\in \setposi{M}$. 
\begin{equation}
	\frac{\partial f_2}{\partial d_i} = \frac{\left(N-1\right)d_i^2}{\left[ x + \left(N-1\right)d_i \right]^2} > 0, i\in \setposi{M}.
\end{equation}
Thus, we can obtain $f_2^{\min} < f_2\left(x\right) < f_2^{\max}$, where
where 
%$f_2^{\min} = {\left( Md_{\textrm{min}}x\right) }/{\left( \left(N-1\right)d_\textrm{min} + x\right) }$ and $f_2^{\max} = {\left( Md_{\textrm{max}}x\right) }{\left( \left(N-1\right)d_\textrm{max} + x\right) }$.
\begin{subequations}
	\begin{equation}
			f_2^{\min} = \frac{Md_{\textrm{min}}x}{\left(N-1\right)d_\textrm{min} + x},
		\end{equation}
\begin{equation}
	f_2^{\max} = \frac{Md_{\textrm{max}}x}{\left(N-1\right)d_\textrm{max} + x}.
\end{equation}
\end{subequations}
When $M$ is fixed, $d_{\min}$ and $d_{\max}$ are also fixed, and thus we have $\lim\limits_{N\to \infty}f_2\left(x\right) = 0$.
Hence, we can obtain $\lim\limits_{N\to \infty}f\left(\sigma_z^2\right) = \sigma_z^2$.
When $\tilde{\sigma}_z^2 = \sigma_z^2$, from $\lim\limits_{N\to \infty}\left(f_2\left(\beta^{\star}\right) - f_1\right) = 0$ and \eqref{equ:beta^star first}, we can readily obtain 
$\lim\limits_{N\to \infty}\beta^{\star} = \sigma_z^2 + \lim\limits_{N\to \infty}f_2\left(\beta^{\star}\right) = \sigma_z^2$.
%\begin{equation}
%	\lim\limits_{N\to \infty}\beta^{\star} = \sigma_z^2 + \lim\limits_{N\to \infty}f_2\left(\beta^{\star}\right) = \sigma_z^2.
%\end{equation}
Similar to the previous process, we can obtain $\lim\limits_{N\to \infty}\bmu_{0}^{\star} = \tilde{\bmu}$.
This completes the proof.
\end{document}